\documentclass[fontsize=11pt,paper=a4,abstract=on]{scrartcl}

\usepackage[margin=1in]{geometry}
\usepackage{amsmath,amssymb,amsthm}
\usepackage{mathtools}
\usepackage{longtable}
\usepackage{booktabs}
\usepackage{array}
\usepackage{aliascnt}
\usepackage{listings}
\usepackage{enumitem}
\setlist[enumerate]{label=(\roman*),ref=(\roman*)}
\usepackage[colorlinks=true,linkcolor=blue,urlcolor=blue]{hyperref}
\usepackage[backend=biber,style=numeric,sorting=none,giveninits=true,maxbibnames=99]{biblatex}
\usepackage{cleveref}
\crefname{section}{Section}{Sections}
\Crefname{section}{Section}{Sections}
\crefname{appendix}{Appendix}{Appendices}
\Crefname{appendix}{Appendix}{Appendices}
\crefname{theorem}{Theorem}{Theorems}
\Crefname{theorem}{Theorem}{Theorems}
\crefname{proposition}{Proposition}{Propositions}
\Crefname{proposition}{Proposition}{Propositions}
\crefname{lemma}{Lemma}{Lemmas}
\Crefname{lemma}{Lemma}{Lemmas}
\crefname{corollary}{Corollary}{Corollaries}
\Crefname{corollary}{Corollary}{Corollaries}
\crefname{definition}{Definition}{Definitions}
\Crefname{definition}{Definition}{Definitions}
\crefname{remark}{Remark}{Remarks}
\Crefname{remark}{Remark}{Remarks}
\crefname{equation}{Eq.}{Eqs.}
\crefname{table}{Table}{Tables}
\Crefname{table}{Table}{Tables}
\Crefname{equation}{Eq.}{Eqs.}

\newtheorem{theorem}{Theorem}[section]
\newaliascnt{proposition}{theorem}
\newtheorem{proposition}[proposition]{Proposition}
\aliascntresetthe{proposition}
\newaliascnt{lemma}{theorem}
\newtheorem{lemma}[lemma]{Lemma}
\aliascntresetthe{lemma}
\newaliascnt{corollary}{theorem}

\aliascntresetthe{corollary}
\theoremstyle{definition}
\newaliascnt{definition}{theorem}
\newtheorem{definition}[definition]{Definition}
\aliascntresetthe{definition}
\newaliascnt{remark}{theorem}
\newtheorem{remark}[remark]{Remark}
\aliascntresetthe{remark}

\newcommand{\lean}[1]{\texttt{\footnotesize #1}}
\newcommand{\leantac}[1]{\texttt{\footnotesize #1}}

\newcommand{\NCells}{88}
\newcommand{\CellRationals}{528}
\newcommand{\SweepLo}{0.45}
\newcommand{\SweepHi}{3}

\newcommand{\CellWidthMinGrid}{2}
\newcommand{\CellWidthMaxGrid}{149}
\newcommand{\CellWidthGridDen}{1000}
\newcommand{\SweepHalvings}{4}
\newcommand{\SweepTerms}{10}
\newcommand{\SweepGrid}{10^{9}}
\newcommand{\KronSlots}{371293}
\newcommand{\KronBase}{2^{64}}
\newcommand{\CertBideg}{(12,12)}
\newcommand{\CertMonomialsTotal}{41405}
\newcommand{\QDeg}{52}
\newcommand{\QTerms}{53}
\newcommand{\QFirstCoeffs}{q_0=\tfrac{4}{15},\quad q_1=-\tfrac{7}{1440},\quad q_2=\tfrac{979}{1440},\quad q_3=-\tfrac{175}{1728},\quad q_4=\tfrac{6821}{8640},\quad q_5=-\tfrac{32473}{103680}}
\newcommand{\QLastIndex}{52}
\newcommand{\QLastCoeff}{-\frac{282\,475\,249}{754\,598\,979\,313\,436\,528\,297\,902\,080\,000\,000\,000}}
\newcommand{\QHatZero}{0.2666}
\newcommand{\QTailStart}{21}
\newcommand{\QTableRows}{0 & $0.2666$ & 7 & $-0.4172$ & 14 & $-0.0197$ \\ 1 & $-0.0049$ & 8 & $0.0086$ & 15 & $-0.0150$ \\ 2 & $0.6798$ & 9 & $-0.3149$ & 16 & $-0.0051$ \\ 3 & $-0.1013$ & 10 & $-0.0846$ & 17 & $-0.0032$ \\ 4 & $0.7894$ & 11 & $-0.1566$ & 18 & $-0.0010$ \\ 5 & $-0.3133$ & 12 & $-0.0539$ & 19 & $-0.0006$ \\ 6 & $0.3723$ & 13 & $-0.0558$ & 20 & $-0.0002$ \\}
\newcommand{\QNegSum}{0.0191}

 \newcommand{\R}{\mathbb{R}}
\newcommand{\I}{I}
\newcommand{\fe}{f_{\mathrm{e}}}
\newcommand{\fo}{f_{\mathrm{o}}}
\newcommand{\artanh}{\operatorname{artanh}}
\newcommand{\conv}{\operatorname{conv}}
\newcommand{\Aregion}{\mathcal{A}}
\newcommand{\Bregion}{\mathcal{B}}
\newcommand{\Gsum}{\bar G}
\newcommand{\Gmax}{G_{\max}}
\newcommand{\zsR}{R_{\mathrm{Z}}}

\newcommand{\zsV}{V_{\mathrm{ZS}}}
\newcommand{\szV}{V_{\mathrm{SZ}}}
\newcommand{\zzV}{V_{\mathrm{ZZ}}}
\newcommand{\ssV}{V_{\mathrm{SS}}}
\newcommand{\zzn}{\mathsf n}            %
\newcommand{\Val}{\mathcal{V}}
\newcommand{\Rte}{\mathcal{R}}

\newcommand{\dlt}{\delta}                  %
\newcommand{\bconv}{\mathbin{*}}           %
\newcommand{\hbin}{h_2}                    %
\newcommand{\ff}{f}                        %
\newcommand{\KA}{K_A}
\newcommand{\KB}{K_B}
\newcommand{\Cone}{\mathcal C}             %
\newcommand{\Lagr}{\mathcal L}
\newcommand{\Nrm}{N}                       %
\newcommand{\gcor}{g}                      %
\newcommand{\Jsym}{\mathcal{L}_{\mathrm{sym}}}
\newcommand{\Om}{\Omega}                   %
\newcommand{\lamS}{\lambda}                %
\newcommand{\kapT}{\kappa}                 %
\newcommand{\phio}{\omega}                 %
\newcommand{\wmin}{w_{\min}}
\newcommand{\Hstep}{H}                     %
\newcommand{\Kstep}{K}                     %
\newcommand{\Dmix}[1]{\Delta_{#1}}         %
\newcommand{\Zch}[1]{\mathsf{Z}_{#1}}      
\newcommand{\Sch}[1]{\mathsf{S}_{#1}}      %
\newcommand{\chU}{\mathsf{W}_U}            %
\newcommand{\chV}{\mathsf{W}_V}            %
\newcommand{\resp}{\varphi}                %
\newcommand{\dresp}{\dot\varphi}           %
\newcommand{\chord}{\operatorname{chord}}
\newcommand{\sech}{\operatorname{sech}}
\newcommand{\Lcosh}{L}                     %
\newcommand{\Gresp}{G}                     %
\newcommand{\Mfun}{M}                      %
\newcommand{\kfac}{k}                      %
\newcommand{\Dch}{\mathcal D}              %
\newcommand{\xiS}{\xi_1}
\newcommand{\xiT}{\xi_2}
\newcommand{\xisum}{\xi}
\newcommand{\RS}{R_S}
\newcommand{\RT}{R_T}
\newcommand{\Wwin}{W}                      %
\newcommand{\Fwin}{F}                      %
\newcommand{\Kwin}{K}                      %
\newcommand{\Grat}{\mathcal G}             %
\newcommand{\Fpp}{\Lagr_{11}}            %
\newcommand{\Fqq}{\Lagr_{22}}            %
\newcommand{\Fpq}{\Lagr_{12}}            %
\newcommand{\Qform}{\Lagr''}               %
\newcommand{\Echord}{E}                    %
\newcommand{\Achord}{A}                    %
\newcommand{\Cchord}{C}                    %
\newcommand{\Amix}{\mathsf A}
\newcommand{\Bmix}{\mathsf B}
\newcommand{\Nmix}{\mathsf N}
\newcommand{\Rrat}{\mathsf R}
\newcommand{\qrat}{\mathsf q}
\newcommand{\Ksum}{\mathsf K}
\newcommand{\Pint}{\mathsf P}
\newcommand{\fker}{\mathsf f}
\newcommand{\gker}{\mathsf g}
\newcommand{\rker}{\mathsf r}
\newcommand{\aker}{\mathsf a}            %
\newcommand{\bker}{\mathsf b}            %
\newcommand{\cker}{\mathsf c}            %
\newcommand{\dker}{\mathsf d}            %
\newcommand{\Qcert}{\mathsf Q}           %
\newcommand{\Fcore}{\Phi}                  %
\newcommand{\Dcore}{\Psi}                  %
\newcommand{\Dcert}{D}
\newcommand{\Zcosh}{\mathsf Z}         %
\newcommand{\Thom}{\mathsf T}              %
\newcommand{\Shom}{\mathsf S}              %
\newcommand{\Uhom}{\mathsf U}              %
\newcommand{\Vhom}{\mathsf V}              %

\newcommand{\lrL}{\ell}                    %
\newcommand{\lrR}{r}                       %
\newcommand{\Tdel}{\mathsf T_{\dlt}}       %
\newcommand{\Fcoef}[1]{\widehat{#1}}       %
\newcommand{\chr}{\chi}                    %
\newcommand{\iot}{\iota}                   %

\title{Three Conjectures on Binary Channels for the Doubly Symmetric Binary Source}
\author{Georg Pichler}
\date{}

\begin{document}
\maketitle

\begin{abstract}
  We settle three conjectures concerning a doubly symmetric binary source $(X,Y)$ with crossover $p$. Consider
  Markov chains $U - X - Y - V$ with $U,V$ binary, and let $\Aregion$ be the set of
  rate triples $(\I(U;V),\I(U;X),\I(Y;V))$ attainable with arbitrary binary
  channels $X\to U$, $Y\to V$, and $\Bregion$ the subset attainable with binary
  \emph{symmetric} channels. The \emph{averaged BSC
  conjecture}, Conjecture~5.2 of Pichler, Piantanida and Matz~(2022), asserts
  $\conv\Aregion=\conv\Bregion$. We prove this for every $p\in[0,1]$.
  Two conjectures of Dikshtein, Ordentlich and Shamai~(2022) concern the
  double-sided information bottleneck at $p=0$, where $Y=X$ and the two channels
  see the same source: their Conjecture~1 identifies the exact maximum of
  $\I(U;V)$ at prescribed rates $\I(U;X)$ and $\I(Y;V)$, and their Conjecture~2
  the exact minimum. We prove both for binary $U,V$: the two extrema are attained
  by the same pair of Z/S-channels, in opposite orientation for the maximum and in
  the same orientation for the minimum.

  The proofs were found with substantial AI assistance, and all three theorems are
  formalised in Lean~4 with Mathlib, depending only on the standard axioms. The
  development is available at
  \url{https://github.com/g-pichler/bsc-averaging}.
  The proof of Conjecture~1 of Dikshtein, Ordentlich and Shamai~(2022) contains
  three certified computations, a polynomial bound, an interval sweep and a
  polynomial positivity certificate, all of which are checked in Lean.
\end{abstract}

\tableofcontents

\section{Introduction}\label{sec:intro}

This paper settles three questions about pairs of binary test channels applied to
the two components of a doubly symmetric binary source. All three are resolved by
the same approach: an extremal statement about arbitrary binary channels is
reduced to a statement about the two atoms of a bias variable on each side, and
the resulting finite-dimensional inequalities are then proved.

\paragraph{The averaged BSC conjecture.} For $(X,Y)$, a source doubly symmetric binary source
(DSBS) with crossover $p$, and
Markov chains $U-X-Y-V$ with $U,V$ binary, the question is whether every
attainable rate triple $(\I(U;V),\I(U;X),\I(Y;V))$ is matched, after
convexification, by binary \emph{symmetric} channels alone. The question is stated as Conjecture~5.2
in~\cite{Pichler2021Distributed}, where the region of attainable triples provides
inner bounds for distributed biclustering and related problems. Without
convexification the statement is false: a counterexample was given
in~\cite[Prop.~5.3]{Pichler2021Distributed}, and it disproves the stronger conjecture
of~\textcite[Conj.~1]{Westover2008Achievable}. The necessity of convexification
is an instance of the general principle that a Lagrangian characterises only the
upper concave envelope of an auxiliary-variable region~\cite{Nair2013Upper}.
\Cref{sec:bsc} proves the conjecture for every $p\in[0,1]$.

\paragraph{The double-sided information bottleneck.} The double-sided
information bottleneck (DSIB) of~\textcite{Dikshtein2022Double} generalises the
information bottleneck~\cite{Tishby2000Information} and the conditional entropy
bound of~\textcite{Witsenhausen1975Conditional} to two test channels, one on
each component of the source. For $p=0$, where $Y=X$,
\textcite{Dikshtein2022Double} conjectured that the maximum of $\I(U;V)$ at
prescribed rates is attained by a Z-channel paired with an S-channel
(their Conjecture~1), and that the maximum of $\I(X;U,V)$, equivalently the
minimum of $\I(U;V)$, at prescribed rates is attained by two Z-channels (their
Conjecture~2). The latter is a statement on the extremes of information
combining~\cite{Land2005Bounds,Sutskover2005Extremes} for binary channels that
are not necessarily symmetric. \Cref{sec:dsibmain} proves both conjectures for
binary $U,V$. The argument is the same for both: an optimiser exists, an
interior optimiser is a pair of binary symmetric channels, such a pair is a
saddle point of the constrained problem, and hence not an optimiser.
An optimal channel pair thus has to contain at least one S-channel or Z-channel. 
Among such channel pairs, the conjectured pair is shown to be optimal by a duality argument.

\paragraph{Related methods.} The one-sided predecessor of \cref{sec:dsibmain} is
the conditional entropy bound of~\textcite{Witsenhausen1975Conditional}, to
which~\textcite[Remark~1]{Dikshtein2022Double} reduce the single-sided problem.
Its ideas are used throughout: supporting the boundary by a Lagrangian
in~\cref{sec:bsc,sec:D}, and determining the optimiser's atoms by bitangency of
the convex envelope in~\cref{lem:kkt}.
Excluding a symmetric stationary point by a second-order perturbation, rather
than by a global argument, is the perturbation method
of~\textcite[Lem.~2(2)]{Gohari2012Evaluation},
which~\textcite{Jog2010information} extend to additive perturbations and use in
the same way to rule out a stationary point (their Case~1.4).
The contributions specific to this paper are the even/odd decomposition of the two-point
Lagrangian in \cref{sub:split}, the residual analysis for interior optimisers in
\cref{sec:AB}, the saddle inequality of \cref{sec:iii}, and the corner
certificates of \cref{sec:D}.

\paragraph{Formal verification.} All three proofs reduce to explicit inequalities
in few real variables, and all three theorems are formalised in
Lean~4~\cite{Moura2021Lean} with Mathlib~\cite{mathlib2020}. Three steps in the
proof of Conjecture~1 are certified computations checked in Lean; two of them
are executed by the Lean kernel.
\Cref{sec:remarks} describes the formal statements, and \cref{apx:lean} maps the
numbered statements of this paper to their names in the Lean~4
development~\cite{Pichler2026BSCAveraging}.

\paragraph{Organisation.} \Cref{sec:setting} introduces the bias representation
and outlines the three proofs; it can be read on its own. \Cref{sec:bsc} proves
the averaged BSC conjecture and \cref{sec:dsibmain} the two DSIB conjectures.
\Cref{sec:remarks} describes the formalisation and \cref{sec:conclusion}
concludes. \Cref{apx:notation} collects the notation and \cref{apx:cert} the
data of the certified computations.

\paragraph{Use of AI tools.} The proofs were developed with substantial
assistance from large language models. The Lean~4 development was produced
almost entirely by Anthropic's Claude Opus~5; Claude Fable~5.1 contributed the
Kronecker-substitution check of the P\'olya certificate in \cref{sec:iii}. The
development is checked by the Lean kernel and is registered in the Palomar
registry~\cite{Pichler2026Palomar}, which rechecks the proofs and confirms that
only the standard axioms are used. The correctness of the three main theorems
therefore does not depend on the correctness of any model output. The
formalisation preceded this paper, whose purpose is to present the proofs in a
form accessible to a human reader; the text was also drafted with substantial
AI assistance.

\section{Setting, representation and outline}\label{sec:setting}

This section states the problems, introduces the representation in which all
three proofs are carried out, and outlines the proofs. It is self-contained, and
\cref{sub:strategy} summarises the main arguments. \Cref{apx:notation} lists all
quantities defined in the paper.

\subsection{The source and the three conjectures}\label{sub:source}

\begin{definition}
  Fix $p\in[0,1]$ and let $(X,Y)$ be the doubly symmetric binary source: $X$ is
  uniform on $\{0,1\}$ and $Y=X\oplus Z$ with $Z\sim\mathrm{Bern}(p)$ independent.
  A \emph{binary channel} is a $2\times2$ row-stochastic matrix. Set
  \[
    \Aregion(p) := \bigl\{(R_0,R_1,R_2) : \exists\, \chU,\chV \text{ binary channels},\
    \I(U;X)\le R_1,\ \I(Y;V)\le R_2,\ R_0\le \I(U;V)\bigr\},
  \]
  where $U$ is the output of $\chU$ on $X$ and $V$ the output of $\chV$ on $Y$, and
  let $\Bregion(p)$ be defined in the same way but with $\chU,\chV$ binary
  \emph{symmetric} channels, of crossovers $a$ and $b$; explicitly
  \[
    \Bregion(p) = \bigl\{R : \exists\, a,b\in[0,1],\
    \log 2 - \hbin(a)\le R_1,\ \log 2-\hbin(b)\le R_2,\
    R_0\le \log 2-\hbin\bigl((a\bconv p)\bconv b\bigr)\bigr\},
  \]
  with $x\bconv y := x(1-y)+(1-x)y$ and $\hbin$ the binary entropy in nats.
\end{definition}

The regions $\Aregion$ and $\Bregion$ are the regions $\mathcal S_\mathrm{i}$
and $\mathcal S_\mathrm{b}$ of~\cite{Pichler2021Distributed}. For a doubly
symmetric binary source, $\conv\Aregion$ yields inner bounds for biclustering,
pattern recognition and hypothesis
testing~\cite[Prop.~4.3]{Pichler2021Distributed}; pattern recognition for this
source is also analysed in~\cite{Westover2008Achievable}. \Cref{thm:main}
proves~\cite[Conj.~5.2]{Pichler2021Distributed}, namely that
$\conv\Aregion=\conv\Bregion$ for every $p$, i.e., that time-sharing between
BSCs attains the whole region.

Conjectures~1 and~2 of~\textcite{Dikshtein2022Double}, which are proved in
\cref{thm:c1,thm:c2} for binary $U,V$, concern the same three mutual
informations at $p=0$, where $Y=X$. They identify the pair of channels that
attains the maximum, respectively the minimum, of $\I(U;V)$ under prescribed
rates $\I(U;X)$ and $\I(Y;V)=\I(X;V)$; the precise formulation is discussed in
\cref{sec:dsib}. All three conjectures thus concern a pair of binary channels,
one on each side of the source, evaluated through the same three mutual
informations.

\subsection{The \texorpdfstring{$\pm1$}{+/-1} encoding and the biases}\label{sub:lr}

The representation used throughout rests on the encoding
\begin{equation}\label{eq:iota}
  \iot\colon[0,1]\to[-1,1],\qquad \iot(x):=1-2x,
\end{equation}
of $\{0,1\}$ by $\{\pm1\}$, which is standard in the analysis of Boolean
functions~\cite{ODonnell2014Analysis}. It is a bijection taking the binary convolution to
multiplication, $\iot(x\bconv y)=\iot(x)\iot(y)$. On $\{0,1\}$, where
$x\bconv y=x\oplus y$, this reads $\iot(x\oplus y)=\iot(x)\iot(y)$; in
particular the flip $x\mapsto x\oplus1$ becomes the negation $z\mapsto-z$.
The source is described by its \emph{bias} $\dlt:=\iot(p)=1-2p$, which is the
correlation
$\mathbb E[\iot(X)\iot(Y)]=\mathbb E[\iot(X\oplus Y)]=\mathbb E[\iot(Z)]=\dlt$.

Write $\pi_u:=P(U=u)$ and $\rho_v:=P(V=v)$ for the two output laws. Until
\cref{rem:degenerate}, both are assumed to be positive.

\begin{definition}\label{def:bias}
  The \emph{bias families}, indexed by the values $u$ of $U$ and $v$ of $V$,
  are
  \begin{align*}
    s_u &:= \mathbb E[\iot(X)\mid U=u] = \iot\bigl(P(X=1\mid U=u)\bigr)\in[-1,1], \\
    t_v &:= \mathbb E[\iot(Y)\mid V=v] = \iot\bigl(P(Y=1\mid V=v)\bigr)\in[-1,1].
  \end{align*}
  Since $X$ and $Y$ are uniform,
  \begin{equation}\label{eq:cond}
    P(X=x\mid U=u)=\tfrac12\bigl(1+s_u\iot(x)\bigr),\qquad
    P(U=u\mid X=x)=\pi_u\bigl(1+s_u\iot(x)\bigr),
  \end{equation}
  the two being equivalent by Bayes' rule, and likewise
  $P(Y=y\mid V=v)=\tfrac12(1+t_v\iot(y))$ and
  $P(V=v\mid Y=y)=\rho_v(1+t_v\iot(y))$.
  The associated \emph{bias random variables} are
  $S:=\mathbb E[\iot(X)\mid U]=s_U$ and $T:=\mathbb E[\iot(Y)\mid V]=t_V$, so $S$
  takes the value $s_u$ with probability $\pi_u$ and $T$ the value $t_v$ with
  probability $\rho_v$. Additionally, let $\bar S$ and $\bar T$ be
  \emph{independent} random variables distributed as $S$ and as $T$.
\end{definition}

Two integrands turn the three mutual informations into functions of
the atoms of $S$ and $T$.

\begin{definition}\label{def:integrands}
  For $z\ge-1$ put $\ff(z):=(1+z)\log(1+z)$, with the convention $0\log0:=0$
  at $z=-1$, and for $z\in[-1,1]$ put
  \begin{align*}
    \fe(z)&:=\tfrac12\bigl(\ff(z)+\ff(-z)\bigr)
      =\tfrac12\bigl[(1+z)\log(1+z)+(1-z)\log(1-z)\bigr],\\
    \fo(z)&:=\tfrac12\bigl(\ff(z)-\ff(-z)\bigr)
      =\tfrac12\bigl[(1+z)\log(1+z)-(1-z)\log(1-z)\bigr],
  \end{align*}
  so that $\fe$ is even, $\fo$ is odd and $\ff=\fe+\fo$.
\end{definition}

The following identities of $\ff$, $\fe$ and $\fo$ will be used throughout.
\begin{align}
  \fe(z)&=z\artanh z+\tfrac12\log(1-z^2) &
  \fo(z)&=\artanh z+\tfrac z2\log(1-z^2) \label{eq:fofe}\\
  \fe(0)&=0 &
  \fe(\pm1)&=\log2 \nonumber\\
  \fe'(z)&=\artanh(z) &
  \fe''(z)&=1/(1-z^2) \nonumber\\
  \fo'(z)&=1+\tfrac12\log(1-z^2) &
  \fo''(z)&=-z/(1-z^2) \nonumber
\end{align}

\begin{proposition}\label{prop:lr}
  The biases are centred, $\mathbb E[S]=\mathbb E[T]=0$, and
  \begin{align}\label{eq:bias}
    \I(U;X) &= \mathbb E[\fe(S)] = \sum_u \pi_u\,\fe(s_u), &
    \I(Y;V) &= \mathbb E[\fe(T)] = \sum_v \rho_v\,\fe(t_v),\\[2pt] \label{eq:bias2}
    P(u,v)&=\pi_u\rho_v\bigl(1+\dlt s_u t_v\bigr), &
    \I(U;V)&=\mathbb E[\ff(\dlt \bar S \bar T)] = \sum_{u,v}\pi_u\rho_v\,\ff(\dlt s_u t_v).
  \end{align}
\end{proposition}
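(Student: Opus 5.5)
The plan is a direct computation from the conditional laws \eqref{eq:cond}, organised as one observation per identity.

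\emph{Centring.} By the tower property, $\mathbb E[S]=\mathbb E[\mathbb E[\iot(X)\mid U]]=\mathbb E[\iot(X)]=0$, since $X$ is uniform. The same argument with $Y$ uniform gives $\mathbb E[T]=0$.

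\emph{The rates \eqref{eq:bias}.} I will write $\I(U;X)=\sum_u\pi_u\,D\bigl(P_{X\mid U=u}\,\|\,P_X\bigr)$ with $P_X$ uniform. By \eqref{eq:cond}, $P(X=x\mid U=u)/P(X=x)=1+s_u\iot(x)$, and $\iot(x)$ ranges over $\pm1$ with weight $\tfrac12$ each. The divergence is therefore
\[
\sum_x \tfrac12\bigl(1+s_u\iot(x)\bigr)\log\bigl(1+s_u\iot(x)\bigr)=\tfrac12\bigl(\ff(s_u)+\ff(-s_u)\bigr)=\fe(s_u).
\]
The convention $0\log0=0$ covers $s_u=\pm1$. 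The formula for $\I(Y;V)$ follows verbatim.

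\emph{The joint law \eqref{eq:bias2}.} This is the only step that uses the Markov chain $U-X-Y-V$. I will expand
\[
P(u,v)=\sum_{x,y}P(x)P(y\mid x)P(u\mid x)P(v\mid y),
\]
and insert $P(u\mid x)=\pi_u(1+s_u\iot(x))$ and $P(v\mid y)=\rho_v(1+t_v\iot(y))$. This gives
\[
P(u,v)=\pi_u\rho_v\,\mathbb E\bigl[(1+s_u\iot(X))(1+t_v\iot(Y))\bigr].
\]
Expanding the product uses three facts: $\mathbb E\iot(X)=\mathbb E\iot(Y)=0$, and $\mathbb E[\iot(X)\iot(Y)]=\dlt$ as computed after \eqref{eq:iota}. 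The result is $\pi_u\rho_v(1+\dlt s_ut_v)$.

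\emph{The cross information.} From the joint law, $P(u,v)/(\pi_u\rho_v)=1+\dlt s_ut_v$. Hence
\[
\I(U;V)=\sum_{u,v}\pi_u\rho_v(1+\dlt s_ut_v)\log(1+\dlt s_ut_v)=\sum_{u,v}\pi_u\rho_v\,\ff(\dlt s_ut_v).
\]
Since $\bar S,\bar T$ are independent with laws $(\pi_u)$ on $(s_u)$ and $(\rho_v)$ on $(t_v)$, this sum is exactly $\mathbb E[\ff(\dlt\bar S\bar T)]$. Zero-probability pairs occur when $1+\dlt s_ut_v=0$, and they contribute $\ff(-1)=0$ consistently with the convention.

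No step is a real obstacle. The only point needing care is the joint-law computation: the Markov structure must be used to factor the sum, and the cross term has to produce exactly $\dlt$. Boundary cases are harmless because $\pi_u,\rho_v>0$ is assumed and the $0\log 0$ convention is built into $\ff$.
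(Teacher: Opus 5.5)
Your proposal is correct and follows the paper's proof essentially step for step. It uses the tower property for centring, the per-letter divergence computation via \eqref{eq:cond} giving $\fe(s_u)$, and the Markov-chain expansion of $P(u,v)$ with $\mathbb E[\iot(X)]=\mathbb E[\iot(Y)]=0$ and $\mathbb E[\iot(X)\iot(Y)]=\dlt$, from which $\I(U;V)$ is read off.
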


\begin{proof}
  By the tower property $\mathbb E[S]=\mathbb E[\iot(X)]=0$, and
  likewise for $T$.

  By definition, $\I(U;X)=\sum_u\pi_u\sum_x P(x\mid u)\log\frac{P(x\mid u)}{P(x)}$, and
  by \eqref{eq:cond} the inner sum is
  \[
    \sum_x\tfrac12\bigl(1+s_u\iot(x)\bigr)\log\bigl(1+s_u\iot(x)\bigr)
    =\tfrac12\bigl[(1+s_u)\log(1+s_u)+(1-s_u)\log(1-s_u)\bigr]=\fe(s_u).
  \]
  The same applies for $\I(Y;V)$.

  Since $U-X-Y-V$, the second identity of \eqref{eq:cond} gives
  \[
    P(u,v)=\sum_{x,y}P(x,y)\,P(u\mid x)\,P(v\mid y)
    =\pi_u\rho_v\,\mathbb E\bigl[(1+s_u\iot(X))(1+t_v\iot(Y))\bigr],
  \]
  and expanding the product leaves, by $\mathbb E[\iot(X)]=\mathbb E[\iot(Y)]=0$,
  \[
    P(u,v)=\pi_u\rho_v\bigl(1+s_ut_v\,\mathbb E[\iot(X)\iot(Y)]\bigr)
    =\pi_u\rho_v(1+\dlt s_ut_v),
  \]
  the source correlation being $\mathbb E[\iot(X)\iot(Y)]=\dlt$ (\cref{sub:lr}).
  Then $\I(U;V)=\sum_{u,v}\pi_u\rho_v\,(1+\dlt s_ut_v)\log(1+\dlt s_ut_v)
  =\mathbb E[\ff(\dlt\bar S\bar T)]$, the pair $(\bar S,\bar T)$ carrying the
  product law $\pi\otimes\rho$.
\end{proof}

We call $\I(U;X)$ and $\I(Y;V)$ the two \emph{rates} and $\I(U;V)$ the
\emph{value}; by \eqref{eq:bias} and \eqref{eq:bias2} the rates are carried by
$\fe$ and the value by $\ff$, which we accordingly call the \emph{rate
integrand} and the \emph{value integrand}.

\begin{remark}[Degenerate channels]\label{rem:degenerate}
  If an output letter has probability zero, say $\pi_0 = 0$, the channel
  has constant output
  $U \equiv 1$. This is a special case of a \emph{degenerate}
  channel, where the output $U$ is independent of the input $X$.
  This is the case \emph{if and only if} the corresponding
  bias law is constant $S\equiv 0$. The same holds for the channel $Y \to V$.

  Therefore, up to relabelling, there is a bijection between mean-zero bias laws
  with two distinct atoms and non-degenerate binary channels.
  As relabelling does not change mutual information, this implies that
  optimising over non-degenerate binary
  channels is the same as optimising over mean-zero two-atom bias laws.

  Furthermore, by the same arguments,
  nothing is lost in passing to the biases, in the strong sense that
  they carry the three mutual informations themselves:
  \begin{equation}\label{eq:ST}
    \I(U;X)=\I(S;X),\qquad \I(Y;V)=\I(Y;T),\qquad \I(U;V)=\I(S;T).
  \end{equation}
\end{remark}

Finally, the binary symmetric channels are exactly those with a
Bernoulli-$\tfrac12$ output:

\begin{lemma}\label{lem:bsc-bias}
  Let $\chU$ be a binary channel with positive output law and bias variable $S$.
  Then $\chU$ is a binary
  symmetric channel if and only if its output is uniform,
  $\pi_0=\pi_1=\tfrac12$, and in that case its two atoms are $\pm\iot(\alpha)$,
  where $\alpha$ is the crossover. If the two atoms are distinct, this happens if
  and only if they are antipodal, $s_1=-s_0$. The same statements hold for $\chV$
  and $T$.
\end{lemma}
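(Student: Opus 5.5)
The plan is to work directly with the transition probabilities and use the second identity of \eqref{eq:cond}, $P(U=u\mid X=x)=\pi_u(1+s_u\iot(x))$, which converts statements about the channel matrix into statements about $(\pi_u,s_u)$. A binary symmetric channel with crossover $\alpha$ has $P(U=u\mid X=x)=1-\alpha$ if $u=x$ and $\alpha$ otherwise. Equivalently, $P(U=u\mid X=x)=\tfrac12(1+\iot(\alpha)\iot(u)\iot(x))$.

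For the forward direction, suppose $\chU$ is a BSC. Since $X$ is uniform, summing over $x$ gives $\pi_u=\tfrac12\sum_x P(u\mid x)=\tfrac12$. Comparing with \eqref{eq:cond} at $\pi_u=\tfrac12$ then yields $s_u=\iot(\alpha)\iot(u)$. Hence $s_0=\iot(\alpha)$ and $s_1=-\iot(\alpha)$, which are the atoms $\pm\iot(\alpha)$.

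For the converse, suppose $\pi_0=\pi_1=\tfrac12$. By \eqref{eq:cond}, $P(U=u\mid X=x)=\tfrac12(1+s_u\iot(x))$. Centring of the bias (\cref{prop:lr}) gives $\tfrac12 s_0+\tfrac12 s_1=0$, so $s_1=-s_0$. Put $\alpha:=\iot^{-1}(s_0)=(1-s_0)/2\in[0,1]$. Then the matrix entries are $P(0\mid0)=P(1\mid1)=\tfrac12(1+s_0)=1-\alpha$ and $P(1\mid0)=P(0\mid1)=\alpha$, so the channel is a BSC with crossover $\alpha$.

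For the antipodal statement, assume $s_0\neq s_1$. The centring relation $\pi_0s_0+\pi_1s_1=0$ together with $\pi_0+\pi_1=1$ is a linear system. If $s_1=-s_0$, distinctness forces $s_0\neq0$, and then $(\pi_0-\pi_1)s_0=0$ gives $\pi_0=\pi_1=\tfrac12$. Conversely, uniform output gives $s_1=-s_0$ as shown above. Hence, for distinct atoms, antipodality is equivalent to uniform output, which is equivalent to the channel being a BSC. The statement for $\chV$ and $T$ is identical, with $Y$ uniform.

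The main obstacle is only bookkeeping. The labelling convention matters: $s_0=\iot(\alpha)$ rather than $-\iot(\alpha)$ depends on the BSC being defined with $u=x$ as the "correct" output. The distinctness hypothesis is also needed in the last claim, since $s_0=s_1=0$ is antipodal but allows any output law.
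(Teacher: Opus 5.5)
Your proof is correct and is essentially the paper's argument: the first equivalence and the atoms come from a direct computation on the transition matrix using uniformity of $X$, and the antipodal claim comes from the centring system $\pi_0s_0+\pi_1s_1=0$, $\pi_0+\pi_1=1$. The only difference is in the first equivalence. The paper obtains both directions at once from $\pi_1=\tfrac12 P(U=1\mid X=0)+\tfrac12\bigl(1-P(U=0\mid X=1)\bigr)$ and then reads off the atoms by Bayes' rule, whereas you treat the two directions separately and, for the converse, rebuild the matrix from \eqref{eq:cond} and centring.
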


\begin{proof}
  Put $\beta:=P(U=1\mid X=0)$ and $\gamma:=P(U=0\mid X=1)$, so that $\chU$ is a BSC
  exactly when $\beta=\gamma$, with common value the crossover $\alpha$. As $X$ is
  uniform, $\pi_1=\tfrac12\beta+\tfrac12(1-\gamma)$, which equals $\tfrac12$ if and
  only if $\beta=\gamma$. In that case $P(X=1\mid U=0)=\beta=\alpha$ and
  $P(X=1\mid U=1)=1-\alpha$, so the atoms are $s_0=\iot(\alpha)$ and
  $s_1=-\iot(\alpha)$.

  For the last claim, mean zero and total mass give
  $\pi_0s_0+\pi_1s_1=0$ and $\pi_0+\pi_1=1$, hence
  $\pi_0(s_0-s_1)=-s_1$ and $\pi_1(s_0-s_1)=s_0$. When $s_0\neq s_1$ these
  determine $\pi$ from the atoms, and $\pi_0=\pi_1$ holds precisely when
  $s_0=-s_1$.
\end{proof}

The hypothesis $s_0\ne s_1$ in the last statement is necessary: by
\cref{rem:degenerate} the two atoms coincide only for a degenerate channel,
which is a BSC (of crossover $\tfrac12$) only if its output is uniform.

\Cref{prop:lr,lem:bsc-bias} express the rate of a BSC through the binary entropy
function $\hbin$, the parametrisation in which Mrs.\ Gerber's lemma is
stated~\cite{Wyner1973theorema,Cheng2014Generalization,Ordentlich2015Minimum}.
For a BSC of crossover $\alpha$ the atoms of $S$ are $\pm\iot(\alpha)$ with
equal weights, so by evenness of $\fe$,
\[
  \I(U;X)=\mathbb E[\fe(S)]=\fe(\iot(\alpha))=\log2-\hbin(\alpha).
\]

\subsection{The Fourier picture}\label{sub:fourier}

The biases are Fourier data~\cite{ODonnell2014Analysis}. By \eqref{eq:cond}
the \emph{likelihood ratios} are
\[
  \lrL_u(x):=\frac{P(X=x\mid U=u)}{P(X=x)}=1+s_u\chr(x),\qquad
  \lrR_v(y):=\frac{P(Y=y\mid V=v)}{P(Y=y)}=1+t_v\chr(y),
\]
where $\chr(x)=(-1)^x=\iot(x)$ is the non-trivial character of $\mathbb Z_2$.
Hence $s_u=\Fcoef{\lrL_u}(\chr)$ and $t_v=\Fcoef{\lrR_v}(\chr)$ are level-one
Fourier coefficients. The noise operator $\Tdel$~\cite[Def.~2.46]{ODonnell2014Analysis},
defined for every $h$ on $\mathbb Z_2$ by $(\Tdel h)(x):=\mathbb E[h(Y)\mid X=x]$,
has $\Tdel1=1$ and $\Tdel\chr=\dlt\,\chr$. Thus, in $L^2(\mathbb Z_2)$ with the
uniform measure~\cite[Def.~1.1]{ODonnell2014Analysis}, the kernel of
\cref{prop:lr} is
\[
  \frac{P(u,v)}{\pi_u\rho_v}
  =\bigl\langle \lrL_u,\Tdel\lrR_v\bigr\rangle
  =\bigl\langle 1+s_u\chr,\,1+\dlt\,t_v\chr\bigr\rangle
  =1+\dlt s_u t_v .
\]
\Cref{sec:bsc,sec:dsibmain} study this quantity through $\ff$. The Fourier
interpretation is not used in the proofs. It explains why the level-one
coefficients are the natural coordinates, and it relates the problem to the
analysis of Boolean functions discussed in \cref{sec:conclusion}.

\subsection{Outline of the proofs}\label{sub:strategy}

All three proofs are based on \cref{prop:lr} and reduce to explicit inequalities
in few real variables. We outline the proof of \cref{thm:main} first and then
those of \cref{thm:c1,thm:c2}.

\paragraph{The averaged BSC conjecture.}
\emph{Goal.} We show that the convex hulls of the regions $\Aregion$ and
$\Bregion$ of \cref{sub:source} coincide.

\emph{Supporting hyperplanes.} Equality of closed convex hulls can be tested by
linear functionals. It therefore suffices to show that the Lagrangian
\[
  \Lagr=\I(U;V)-\mu\,\I(U;X)-\nu\,\I(Y;V),\qquad \mu,\nu\ge0,
\]
is maximised, over all pairs of binary channels, by a pair of BSCs;
\cref{sec:red} carries out this reduction, which goes back
to~\textcite{Witsenhausen1975Conditional}. By~\eqref{eq:bias}, $\Lagr$ is a
function of the atoms of $S$ and $T$ alone: two on each side, since $U$ and $V$
are binary, with weights determined by the zero-mean condition.

\emph{Even/odd decomposition.} The sign change $s\mapsto-s$ leaves the rate
integrand $\fe$ in~\eqref{eq:bias} unchanged, but not the value integrand
$\ff(\dlt st)$. The decomposition $\ff=\fe+\fo$ of \cref{def:integrands} induces
a decomposition $\Lagr=\Gsum+\Om$ of the Lagrangian (see~\eqref{eq:split}). The
even part $\Gsum$ is an average of Lagrangians of symmetric pairs, and the
\emph{odd gain} $\Om=\mathbb E\bigl[\fo(\dlt\bar S\bar T)\bigr]$ accounts for
all of the advantage a non-symmetric pair can have. By \cref{lem:reduce}, the
claim reduces to an upper bound on $\Om$, which is proved in
\cref{sub:twopoint}.

\emph{One-variable inequalities.} Because the argument of $\fo$ is the
\emph{product} $\dlt s_ut_v$, the odd gain $\Om$ is a mixed second difference of
a function of the form $h(xy)$. Such differences are controlled by
supermodularity, which for functions of a product follows from monotonicity of
$u\mapsto u\,h'(u)$, see \eqref{eq:prod}. In each case this monotonicity reduces
to an elementary two-sided bound on $\artanh$.

\paragraph{The double-sided information bottleneck at $p=0$.}
\emph{Goal.} At $p=0$, where $Y=X$, we show that under prescribed rate
constraints, $\I(U;V)$ is maximised by a Z-channel paired with an S-channel
(\cref{thm:c1}) and minimised by a pair of Z-channels (\cref{thm:c2}).

\emph{First order.} By compactness an optimiser exists. If all atoms of $S$ and
$T$ are interior, i.e.\ lie in $(-1,1)$, the first-order optimality conditions
hold there (\cref{lem:kkt}). They are equivalent to the vanishing of two
integrals, whose signs are determined by the \emph{skews}, i.e.\ the deviations
of each side's atoms from an antipodal pair. Convexity arguments show that both
skews vanish, so an interior optimiser has antipodal atoms on both sides and is
a BSC pair (\cref{thm:B}).

\emph{Second order.} We compute the Hessian of the Lagrangian at a BSC pair in
the two skew directions (\cref{lem:hess}). Its diagonal entries are negative,
which excludes a minimum, and its determinant is negative by the \emph{saddle
inequality}~\eqref{eq:saddle}, which excludes a maximum (\cref{prop:saddle}).
Hence an optimiser has at least one atom of $S$ or $T$ in $\{-1,1\}$.

\emph{Corners.} A non-degenerate channel whose bias law has an atom at $\pm1$ is
a Z- or an S-channel. Suppose, without loss of generality, that $\chV$ is a fixed S-channel. By
\cref{prop:lr} with $\dlt=1$, the value is $\I(U;V)=\mathbb E[\resp(S)]$, where
$\resp(s):=\mathbb E[\ff(sT)]$ is the \emph{response} of $\chV$, and the rate is
$\I(U;X)=\mathbb E[\fe(S)]$. Both are linear in the law of $S$, so by weak
duality a pointwise bound on $\resp$ bounds the value (\cref{lem:cert}). The
pointwise bounds we construct hold with equality at the atoms of the conjectured
channel on the $U$-side, so the resulting bounds on the value are attained
(\cref{lem:tight}).

\emph{Certified computation.} Three steps in the proof of the saddle
inequality~\eqref{eq:saddle} rely on computation. Along the hyperbola $st=x$ the
inequality is reduced to the diagonal $s=t$ by \cref{lem:kernel}, a polynomial
inequality in five variables that is established by a P\'olya-type certificate.
The diagonal case, \cref{lem:core}, is a one-variable inequality; near the
origin it is established by a polynomial bound with computer-generated
coefficients, and in its middle range by an interval sweep. All three
computations are checked in Lean. Only the maximisation requires the saddle inequality, so the proof
of \cref{thm:c2} involves no certified computation.

\section{The averaged BSC conjecture}\label{sec:bsc}

\begin{theorem}\label{thm:main}
  For every $p\in[0,1]$, $\ \conv\Aregion(p)=\conv\Bregion(p)$.
\end{theorem}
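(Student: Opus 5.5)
The plan follows the outline of \cref{sub:strategy}. Since $\Bregion\subseteq\Aregion$, only $\conv\Aregion\subseteq\conv\Bregion$ needs proof. Both regions are upward closed in $(R_1,R_2)$ and downward closed in $R_0$. The closure of $\conv\Aregion$ is therefore determined by its support function in directions $(1,-\mu,-\nu)$ with $\mu,\nu\ge0$, together with the trivial directions. Compactness of the channel set makes $\Aregion$ closed, and both hulls are built from compact parameter sets, so closures cause no trouble. It therefore suffices to show that for all $\mu,\nu\ge0$ we have $\sup\Lagr$ over all binary channel pairs equal to $\sup\Lagr$ over BSC pairs, where $\Lagr=\I(U;V)-\mu\I(U;X)-\nu\I(Y;V)$. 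The case $\mu=0$ or $\nu=0$ can be treated as a limit, or directly, since then a noiseless BSC is optimal on that side. Degenerate channels give $\Lagr=0$, which the BSC with crossover $\tfrac12$ also attains, so by \cref{rem:degenerate} I may work with two-atom mean-zero bias laws $S,T$. By \cref{prop:lr},
\[
  \Lagr=\mathbb E[\ff(\dlt\bar S\bar T)]-\mu\,\mathbb E[\fe(S)]-\nu\,\mathbb E[\fe(T)].
\]

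Next I split $\ff=\fe+\fo$, which gives $\Lagr=\Gsum+\Om$ with $\Om=\mathbb E[\fo(\dlt\bar S\bar T)]$. The even part is
\[
  \Gsum=\sum_{u,v}\pi_u\rho_v\bigl[\fe(\dlt s_ut_v)-\mu\fe(s_u)-\nu\fe(t_v)\right.\left.\bigr],
\]
and each bracket is exactly the Lagrangian $G(|s_u|,|t_v|)$ of the BSC pair with atoms $\pm s_u,\pm t_v$, because for symmetric laws $\fo$ averages out. Hence $\Gsum\le\Gmax$, the BSC optimum. The whole claim then reduces to showing $\Om\le$ the slack $\Gmax-\Gsum$. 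The cheapest way to get this is to show $\Om\le0$ outright, or failing that to show that $\Om$ plus a suitable part of $\Gsum$ is at most $\Gmax$.

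For a two-atom law, mean zero forces $s_0,s_1$ of opposite signs with weights $\pi_0=-s_1/(s_0-s_1)$ and $\pi_1=s_0/(s_0-s_1)$, and likewise for $T$. Writing $h(z)=\fo(\dlt z)$, the quantity $\Om=\sum\pi_u\rho_v h(s_ut_v)$ is a weighted mixed second difference of $(x,y)\mapsto h(xy)$ over the rectangle with corners $s_0>0>s_1$ and $t_0>0>t_1$. A function $h(xy)$ is supermodular on the quadrant $x,y>0$ exactly when $u\mapsto uh'(u)$ is nondecreasing, by the product rule $\partial_x\partial_y h(xy)=h'(xy)+xyh''(xy)$. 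Here $z\fo'(z)=z+\tfrac z2\log(1-z^2)$, and its monotonicity on each sign range reduces to bounds such as $\artanh z\ge z$ and $\tfrac z{1-z^2}\ge\artanh z$. I would use these facts on the four sign-quadrants to sign the mixed difference. Oddness of $h$ matters here: the $(+,-)$ and $(-,+)$ corners enter with sign opposite to the $(+,+)$ and $(-,-)$ corners.

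The main obstacle is that $\Om$ need not be $\le0$ in general; a non-symmetric pair can genuinely beat the even part. In that case I would compare directly with the best BSC pair rather than with $\Gsum$. The idea is to choose the BSC pair with atoms given by weighted quantities such as $|s_u|$, and to bound $\Lagr$ by a convex combination of the values $G(|s_u|,|t_v|)$ in which the $\fo$ contributions are absorbed using the supermodularity inequality above. This is the step that presumably needs the case-by-case one-variable $\artanh$ bounds, and it is where I expect the real work to lie.
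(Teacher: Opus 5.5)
Your reduction matches the paper's: support functions in the directions $(1,-\mu,-\nu)$, the split $\Lagr=\Gsum+\Om$, and the target $\Om\le\Gmax-\Gsum$. Two minor points: the sign of $\dlt$ has to be normalised via $Y\mapsto Y\oplus1$, since it decides which skew case is which; and a degenerate channel gives $\Lagr\le0$, not $=0$.

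The real gap is in the bound on $\Om$, and it starts before the step you flag. With atoms $\{a,-b\}$ and $\{c,-d\}$ the weights are $\pi_u=\lamS/|s_u|$ and $\rho_v=\kapT/|t_v|$, where $\lamS=ab/(a+b)$ and $\kapT=cd/(c+d)$. So with $h(u)=\fo(\dlt u)$ and oddness,
\[
  \Nrm\,\Om=bd\,h(ac)-bc\,h(ad)-ad\,h(bc)+ac\,h(bd).
\]
This combination has four different coefficients, and supermodularity of $h(xy)$ does not sign it. Nor is $h(xy)$ super- or submodular in general. With $z=\dlt u$,
\[
  (u\,h'(u))'=\dlt\bigl[1+\tfrac12\log(1-z^2)-z^2/(1-z^2)\bigr],
\]
which is positive near $z=0$ and negative once $z$ exceeds roughly $0.65$. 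For $\dlt$ near $1$ it therefore changes sign, so your claimed reduction to $\artanh$ bounds is false for $h$.

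The missing observation is to write $h(u)=u\,\phio(u)$. All four coefficients then become $abcd$, and $\Om=\lamS\kapT\,\Dmix{\phio}$ is an honest mixed difference of $\phio(u)=\fo(\dlt u)/u$. Since $-u\,\phio'(u)=\dlt(\artanh z/z-1)$, the bound $\artanh z<z/(1-z^2)$ gives $\Om<0$ when the skews $a-b$ and $c-d$ have the same sign.

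For opposite skews $\Om>0$ does occur, and your plan there is not yet an argument; it is the whole difficulty. You propose a BSC pair with ``weighted'' atoms into which the $\fo$ terms are absorbed. No new BSC pair is needed: it suffices to compare with the best of the four corner pairs.

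The key fact is that the mixed difference of the corner values,
\[
  \Dmix{\gcor}=\fe(\dlt ac)-\fe(\dlt ad)-\fe(\dlt bc)+\fe(\dlt bd),
\]
does not depend on $\mu,\nu$ and is strictly negative for opposite skews, by supermodularity of $\fe(\dlt u)$. This makes the slack quantitative. Keep only the $ac$ and $bd$ terms of $\Gmax-\Gsum=\sum_{i,j}w_{ij}(\Gmax-\gcor_{ij})$ and use $2\Gmax\ge\gcor_{ad}+\gcor_{bc}$. This gives $\Gmax-\Gsum\ge\wmin\lvert\Dmix{\gcor}\rvert$ with $\wmin=\min(w_{ac},w_{bd})$.

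On the other side, two facts hold. Supermodularity of $\phio+\dlt\,\fe(\dlt\,\cdot)$, which reduces to $\artanh z>z$, gives $\Dmix{\phio}<\dlt\lvert\Dmix{\gcor}\rvert$. And $\lamS\kapT\dlt\le\wmin$, because $ac\dlt\le1$ and $bd\dlt\le1$. Chaining,
\[
  \Om<\lamS\kapT\dlt\lvert\Dmix{\gcor}\rvert\le\wmin\lvert\Dmix{\gcor}\rvert\le\Gmax-\Gsum.
\]
Without this comparison of the odd gain against the corner spread, the proposal does not prove the theorem.
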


Since $\Bregion\subseteq\Aregion$, only the inclusion
$\Aregion\subseteq\conv\Bregion$ has to be shown. Replacing $Y$ by $Y\oplus1$
maps the source with parameter $p$ to the source with parameter $1-p$ and
leaves both regions unchanged, so it suffices to treat $0\le p\le\tfrac12$. The
two endpoints are elementary. For $p=\tfrac12$, $X$ and $Y$ are independent, so
$\I(U;V)=0$ for all channels and $\Aregion=\Bregion$. For $p=0$, the claim
follows from~\cite[Cor.~4.1 and Prop.~4.3]{Pichler2021Distributed}. We therefore
assume $0<p<\tfrac12$, i.e.\ $0<\dlt<1$, for the rest of this section, and we
fix two multipliers $\mu,\nu\ge0$.

\subsection{The two-point Lagrangian}\label{sub:lagr}

Degenerate channels are treated separately at the end of this subsection. For a
pair of non-degenerate channels we parametrise the bias families $s_u$, $t_v$
of~\cref{def:bias} by $a,b,c,d\in(0,1]$ with
\[
  S\in\{a,-b\},\quad P(S=a)=\tfrac{b}{a+b},\qquad
  T\in\{c,-d\},\quad P(T=c)=\tfrac{d}{c+d}.
\]
The weights are determined by $\mathbb E S=\mathbb E T=0$. Listing the positive
atom first is no loss of generality, since exchanging the two atoms amounts to
relabelling the values of $U$ (resp.\ $V$), which by \eqref{eq:ST} leaves all
three mutual informations unchanged. By \cref{rem:degenerate}, the two atoms of a
non-degenerate channel are distinct and hence both non-zero, one positive and
one negative, so $a,b,c,d\in(0,1]$.

For $a,b,c,d\in(0,1]$, we write $\Nrm:=(a+b)(c+d)>0$ and define a function of
the four parameters, with no reference to a probability space:
\begin{align}\label{eq:lagr}
  \Lagr(a,b,c,d) &:=
  \tfrac1{\Nrm}\Bigl[bd\,\ff(\dlt ac)+bc\,\ff(-\dlt ad)+ad\,\ff(-\dlt bc)+ac\,\ff(\dlt bd)\Bigr]\nonumber\\
  &\qquad\qquad-\mu\tfrac{b\fe(a)+a\fe(b)}{a+b}-\nu\tfrac{d\fe(c)+c\fe(d)}{c+d}.
\end{align}
Substituting the two-point law into \eqref{eq:bias} makes this the Lagrangian:

\begin{proposition}\label{prop:bridge}
  For any pair of non-degenerate binary channels, after possibly relabelling $U$ and $V$, we obtain $a=s_0$, $b=-s_1$, $c=t_0$, $d=-t_1\in(0,1]$, and
we have
  \[
    \I(U;V)-\mu \I(U;X)-\nu \I(Y;V)
    =\mathbb E\bigl[\ff(\dlt\bar S\bar T)-\mu\fe(\bar S)-\nu\fe(\bar T)\bigr]
    =\Lagr(a,b,c,d).
  \]
\end{proposition}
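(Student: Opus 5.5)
The proof combines \cref{rem:degenerate}, \cref{prop:lr} and the two-atom mean-zero parametrisation.

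\emph{Step 1: signs of the atoms.} Take a pair of non-degenerate channels. By \cref{rem:degenerate} the atoms $s_0\neq s_1$ of $S$ are distinct. Since $\pi_0s_0+\pi_1s_1=0$ with $\pi_0,\pi_1>0$, neither atom is zero, and the two have opposite signs. If $s_0<0$ we swap the labels of $U$. This changes none of the three mutual informations, by \eqref{eq:ST} or simply because relabelling is a bijection. We may therefore take $a:=s_0>0$ and $b:=-s_1>0$, and since $|s_u|\le1$ both lie in $(0,1]$. In the same way, after possibly relabelling $V$, we get $c:=t_0$ and $d:=-t_1$ in $(0,1]$.

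\emph{Step 2: the weights.} Solving $\pi_0a-\pi_1b=0$ together with $\pi_0+\pi_1=1$ gives $\pi_0=\tfrac{b}{a+b}$ and $\pi_1=\tfrac{a}{a+b}$. Likewise $\rho_0=\tfrac{d}{c+d}$ and $\rho_1=\tfrac{c}{c+d}$. This is the law used in \eqref{eq:lagr}.

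\emph{Step 3: the first equality.} By \cref{prop:lr}, $\I(U;V)=\mathbb E[\ff(\dlt\bar S\bar T)]$. Because $\bar S$ is distributed as $S$, we have $\I(U;X)=\mathbb E[\fe(S)]=\mathbb E[\fe(\bar S)]$, and similarly $\I(Y;V)=\mathbb E[\fe(\bar T)]$. Linearity of expectation then gives the first equality.

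\emph{Step 4: the second equality.} We expand the expectation over the product law $\pi\otimes\rho$. The four pairs $(\bar S,\bar T)$ are
\[
(a,c),\quad (a,-d),\quad (-b,c),\quad (-b,-d),
\]
with weights $bd/\Nrm$, $bc/\Nrm$, $ad/\Nrm$ and $ac/\Nrm$ respectively. The corresponding arguments of $\ff$ are $\dlt ac$, $-\dlt ad$, $-\dlt bc$ and $\dlt bd$, which reproduces the bracket in \eqref{eq:lagr}. For the rate term, $\mathbb E[\fe(\bar S)]=\tfrac{b\fe(a)+a\fe(-b)}{a+b}$. Evenness of $\fe$ turns this into $\tfrac{b\fe(a)+a\fe(b)}{a+b}$, and the same argument handles $\bar T$.

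There is no real obstacle here. The only point needing care is the justification of the relabelling, which rests on the sign argument in Step 1 and the invariance in \eqref{eq:ST}.
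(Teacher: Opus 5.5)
Your proposal is correct and matches the paper's proof. The first equality comes from \eqref{eq:bias} applied to $\bar S,\bar T$, and the second from expanding over the four atom pairs of $(\bar S,\bar T)$ with weights $bd/\Nrm$, $bc/\Nrm$, $ad/\Nrm$, $ac/\Nrm$, using evenness of $\fe$ for the rate terms; your Step~1 (opposite-sign atoms, relabelling justified by \eqref{eq:ST}) is what the paper states in the paragraph just before the proposition.
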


\begin{proof}
  The first equality is \eqref{eq:bias}, since $\bar S$ and $\bar T$ carry the
  laws of $S$ and $T$. For the second, $(\bar S,\bar T)$ takes the four values
  $(a,c),(a,-d),(-b,c),(-b,-d)$ with probabilities
  $\tfrac{bd}\Nrm,\tfrac{bc}\Nrm,\tfrac{ad}\Nrm,\tfrac{ac}\Nrm$, so the first
  term is the bracket of \eqref{eq:lagr}; and
  $\mathbb E[\fe(\bar S)]=\bigl(b\fe(a)+a\fe(b)\bigr)/(a+b)$ because $\fe$ is
  even, likewise on the $V$-side.
\end{proof}

\paragraph{Symmetric pairs.} By \cref{lem:bsc-bias} a symmetric (BSC) pair is one whose two atoms on
each side have equal magnitude, $a=b$ and $c=d$. There $\Nrm=4ac$, and
\eqref{eq:lagr} reduces to
\begin{align}
  \Lagr(a,a,c,c)
  &=\tfrac12\bigl[\ff(\dlt ac)+\ff(-\dlt ac)\bigr]-\mu\fe(a)-\nu\fe(c)
    \nonumber\\
  &=\fe(\dlt ac)-\mu\fe(a)-\nu\fe(c)=:\gcor(a,c),\label{eq:g}
\end{align}
the last step by the definition of $\fe$ in \cref{def:integrands}.
This is the Lagrangian value for two BSCs $\chU$ and $\chV$ with crossovers
$\tfrac{1-a}2$ and $\tfrac{1-c}2$, respectively, and we take \eqref{eq:g} as
the definition of $\gcor(a,c)$ for all $a,c\in[-1,1]$. The function $\gcor$ is even in each
argument and satisfies $\gcor(s,t)=\gcor(|s|,|t|) = \Lagr(|s|,|s|,|t|,|t|)$ for $s,t\ne0$. Write $\Jsym:=\sup_{a,c\in[0,1]} \gcor(a,c)$, the best
Lagrangian a BSC pair achieves in the direction $(\mu,\nu)$; it is finite, $\gcor$ being
continuous on a compact square.

\begin{proposition}\label{prop:deglagr}
  If $\chU$ or $\chV$ is degenerate, then
  $\I(U;V)-\mu\I(U;X)-\nu\I(Y;V)\le0=\gcor(0,0)\le\Jsym$.
\end{proposition}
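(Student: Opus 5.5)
If $\chU$ is degenerate, then $U$ is independent of $X$ (\cref{rem:degenerate}). Since $U-X-Y-V$ is a Markov chain, $U$ is then also independent of $(X,Y,V)$. Hence $\I(U;X)=0$ and $\I(U;V)=0$, and the Lagrangian equals $-\nu\,\I(Y;V)\le0$ because $\nu\ge0$ and mutual information is non-negative. The case where $\chV$ is degenerate is symmetric: then $V$ is independent of $(U,X,Y)$, so $\I(Y;V)=\I(U;V)=0$ and the Lagrangian is $-\mu\,\I(U;X)\le0$.

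Alternatively, we can stay inside the bias picture. A degenerate channel has $S\equiv0$ (or $T\equiv0$). Then $\ff(\dlt\bar S\bar T)=\ff(0)=0$ and $\fe(\bar S)=\fe(0)=0$. The first equality of \cref{prop:bridge}, which relies only on \eqref{eq:bias} and not on non-degeneracy, then gives the value $-\nu\,\mathbb E[\fe(\bar T)]\le0$, using $\fe\ge0$. That $\fe\ge0$ follows from convexity of $\fe$ ($\fe''>0$) together with $\fe(0)=0$ and $\fe'(0)=0$. One caveat: \cref{prop:lr} was stated under the assumption of positive output laws. I would therefore use the information-theoretic independence argument from the first paragraph as the primary proof, since it needs no such assumption.

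For the remaining equalities, $\gcor(0,0)=\fe(0)-\mu\fe(0)-\nu\fe(0)=0$ by \eqref{eq:g} and $\fe(0)=0$. Finally, $0=\gcor(0,0)\le\Jsym$ holds because $(0,0)\in[0,1]^2$ and $\Jsym$ is the supremum of $\gcor$ over that square.

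I do not expect any real obstacle. The only care needed is to handle zero-probability output letters, which the independence argument does without reference to biases.
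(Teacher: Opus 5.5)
Your proposal is correct and follows the paper's proof: a degenerate channel forces $\I(U;V)=0$ and makes its own rate vanish, which leaves $-\nu\I(Y;V)\le0$ or $-\mu\I(U;X)\le0$. You also write out the Markov-chain independence step and the facts $\gcor(0,0)=0\le\Jsym$, which the paper leaves implicit.
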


\begin{proof}
  If $\chU$ is degenerate, then $\I(U;X)=\I(U;V)=0$ by \cref{rem:degenerate}, so
  the Lagrangian equals $-\nu\I(Y;V)\le0$; likewise if $\chV$ is degenerate.
\end{proof}

Together with \cref{prop:bridge}, the bound $\Lagr\le\Jsym$ of
\cref{thm:twopoint} below therefore holds for every pair of binary channels.

\subsection{The two-point theorem}\label{sub:twopoint}

\begin{theorem}\label{thm:twopoint}
  Let $0<\dlt<1$ and $a,b,c,d\in(0,1]$. If $M$ bounds the
  four corner values $\gcor(a,c),\gcor(a,d),\gcor(b,c),\gcor(b,d)$, then
  $\Lagr(a,b,c,d)\le M$. In particular $\Lagr\le \Jsym$.
\end{theorem}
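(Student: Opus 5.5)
The plan is to split $\Lagr$ into an \emph{even part}, which is an honest average of the four corner values, and an \emph{odd gain} $\Om$, and then to control $\Om$ by the spread of the corner values.

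\emph{Step 1: the split.} Write $\ff=\fe+\fo$ inside the bracket of \eqref{eq:lagr}. Since $\fe$ is even, the $\fe$-part of the value combines with the two rate terms. This gives
\[
  \Lagr(a,b,c,d)=\Gsum+\Om,\qquad
  \Gsum:=\tfrac1\Nrm\bigl[bd\,\gcor(a,c)+bc\,\gcor(a,d)+ad\,\gcor(b,c)+ac\,\gcor(b,d)\bigr],
\]
where $\Gsum$ is a convex combination of the four corner values of \eqref{eq:g}, hence $\Gsum\le M$. Since $\fo$ is odd,
\[
  \Om=\tfrac{abcd}{\Nrm}\bigl[h(ac)-h(ad)-h(bc)+h(bd)\bigr],\qquad h(z):=\fo(\dlt z)/z .
\]
Thus $\Om$ is a mixed second difference of the function $(x,y)\mapsto h(xy)$ over $x\in\{a,b\}$, $y\in\{c,d\}$. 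It is antisymmetric under $c\leftrightarrow d$, so $\Om$ has no fixed sign. The theorem is therefore \emph{not} simply $\Om\le0$; what is needed is
\[
  \Om\le M-\Gsum .
\]
I would first record this as a reduction lemma, replacing $M$ by $\max$ of the four corner values.

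\emph{Step 2: sign of $\Om$ via supermodularity.} A function $h(xy)$ on $(0,1]^2$ is supermodular when $u\mapsto u\,h'(u)$ is nondecreasing, by the standard computation $\partial_x\partial_y h(xy)=h'(xy)+xy\,h''(xy)$. With $\fo'$ and $\fo''$ from \eqref{eq:fofe}, the condition becomes a one-variable inequality in $w=\dlt u\in(0,\dlt)$. I expect it to reduce to a two-sided bound on $\artanh w$ by rational functions of $w$, such as $w+w^3/3\le\artanh w\le w/(1-w^2)$. Granting this, $\Om\le0$ whenever the orders of $(a,b)$ and $(c,d)$ are opposite, and Step 1 then finishes. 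Using the relabelling symmetries $a\leftrightarrow b$ and $c\leftrightarrow d$, which act on both sides together, the remaining case is $a\ge b$, $c\ge d$, where $\Om\ge0$.

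\emph{Step 3: the hard case $a\ge b$, $c\ge d$.} Here I would bound $\Om$ by the spread of the corner values. Moving mass from the off-diagonal corners $(a,d),(b,c)$ to the diagonal corners $(a,c),(b,d)$ raises $\Gsum$ towards $M$. The gain is a positive multiple of $\gcor(a,c)+\gcor(b,d)-\gcor(a,d)-\gcor(b,c)$. The rate terms cancel in this difference, leaving the mixed difference of $\fe(\dlt xy)$. I would therefore try to prove a comparison of mixed second differences,
\[
  \tfrac{abcd}{\Nrm}\,\Dmix{}[h]\;\le\;\kappa\,\Dmix{}[\fe(\dlt\,\cdot)],
\]
with the explicit weight $\kappa$ allowed by the slack in the convex combination, namely $\min\bigl(bc,ad\bigr)/\Nrm$. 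Here $\Dmix{}[\cdot]$ denotes the mixed second difference over $\{a,b\}\times\{c,d\}$ of the indicated function of $xy$. This in turn should follow from monotonicity of $u\mapsto u\,k'(u)$ for the single function $k=\kappa'\fe(\dlt\,\cdot)-\fo(\dlt\,\cdot)/(\cdot)$, again an elementary $\artanh$ inequality.

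This third step, and in particular finding the right reweighting so that a single product-supermodularity argument suffices, is where I expect the main difficulty. If it fails uniformly, I would fall back to splitting according to which corner attains $M$.
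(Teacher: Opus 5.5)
There is a genuine gap: the sign in your Step~2 is reversed, and this swaps the easy case and the hard case. By \eqref{eq:fofe}, $\fo(z)/z=\artanh(z)/z+\tfrac12\log(1-z^2)$ and $\fo'(z)=1+\tfrac12\log(1-z^2)$. So for $h(u)=\fo(\dlt u)/u$ you get $u\,h'(u)=\dlt\bigl(1-\artanh(z)/z\bigr)$ with $z=\dlt u$. Since $\artanh(z)/z$ is strictly increasing on $(0,1)$ (this is equivalent to $\artanh z<z/(1-z^2)$), $u\,h'(u)$ is strictly \emph{decreasing}, so $h$ is strictly submodular, not supermodular. The expansion $h(u)=\dlt-\tfrac{\dlt^3}{6}u^2+O(u^4)$ confirms this: it gives $\Dmix{h}\approx-\tfrac{\dlt^3}{6}(a^2-b^2)(c^2-d^2)$. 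Thus $\Om$ has the sign of $-(a-b)(c-d)$. In your ``hard'' case $a>b$, $c>d$ one has $\Om<0\le\Gmax-\Gsum$ and there is nothing to do; your Step~3 comparison holds there for every $\kappa\ge0$. For opposite skews, however, $\Om>0$, and this is exactly the case you dismiss by asserting $\Om\le0$. As written, the proposal never bounds $\Om$ in the only case where it is positive.

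The missing argument is for opposite skews, say $a>b$, $c<d$. There $\Dmix{\gcor}=\Dmix{\fe(\dlt\,\cdot)}<0$, so the low corners are the \emph{diagonal} ones. The available slack is therefore $\Gmax-\Gsum\ge\min(w_{ac},w_{bd})\,|\Dmix{\gcor}|$ with $w_{ac}=bd/\Nrm$ and $w_{bd}=ac/\Nrm$, not your off-diagonal $\min(bc,ad)/\Nrm$. The comparison then needs supermodularity of the \emph{sum} $h+\dlt\,\fe(\dlt\,\cdot)$, rather than of $\kappa'\fe(\dlt\,\cdot)-h$. One has $u\bigl(h+\dlt\,\fe(\dlt\,\cdot)\bigr)'(u)=\dlt\bigl(1-\artanh(z)/z+z\artanh z\bigr)$. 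Its $z$-derivative is $\bigl((1+z^2)\artanh z-z\bigr)/z^2>0$ by $\artanh z>z$; the coefficient $\dlt$ is what lets the $(1-z^2)^{-1}$ terms combine into $-1/z$. Hence $\Dmix{h}<\dlt\,|\Dmix{\gcor}|$. Since $a,b,c,d,\dlt\le1$, you have $abcd\,\dlt\le\min(ac,bd)$, so multiplying by $abcd/\Nrm$ gives $\Om<\min(w_{ac},w_{bd})\,|\Dmix{\gcor}|\le\Gmax-\Gsum$. With the cases swapped and these weights, your outline becomes the paper's proof.
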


\emph{The rest of this section proves \cref{thm:twopoint}}, with
$\dlt$, $a,b,c,d$ and $M$ fixed as in its statement.
\cref{sub:split} reduces the theorem
to a statement about a single quantity, the odd gain $\Om$; \cref{sub:same}
and \cref{sub:opp} treat the two cases that reduction leaves, according to the
relative sign of the two skews $a-b$ and $c-d$; \cref{sub:assemble} assembles
them.

\subsubsection{The even/odd split}\label{sub:split}

Since $\ff=\fe+\fo$, the integrand of \cref{prop:bridge} splits as
$\gcor(\bar S,\bar T)+\fo(\dlt\bar S\bar T)$, and $\gcor$, built from the even
$\fe$, is itself even in each argument. Write
\[
  \gcor_{ij}:=\gcor(i,j),\qquad i\in\{a,b\},\ j\in\{c,d\},
\]
for the four \emph{corner values}, which $(\bar S,\bar T)$ takes with
probabilities
\[
  w_{ac}=\tfrac{bd}{\Nrm},\quad w_{ad}=\tfrac{bc}{\Nrm},\quad
  w_{bc}=\tfrac{ad}{\Nrm},\quad w_{bd}=\tfrac{ac}{\Nrm},
\]
and put
\[
  \Gsum:=\mathbb E\bigl[\gcor(\bar S,\bar T)\bigr]=\sum_{i,j}w_{ij}\,\gcor_{ij},
  \qquad \Gmax:=\max_{i,j}\gcor(s_i,t_j)=\max_{i,j}\gcor_{ij}
\]
for their average and their maximum. Linearity of the expectation then splits
the Lagrangian as
\begin{equation}\label{eq:split}
  \Lagr=\Gsum+\Om,\qquad \Om:=\mathbb E\bigl[\fo(\dlt\bar S\bar T)\bigr] .
\end{equation}

Since the $w_{ij}$ are probabilities, $\Gsum\le\Gmax$, and
$\Gmax$ is one of the four corner values, so $\Gmax\le M$ by the hypothesis of~\cref{thm:twopoint}. Thus, it suffices to show that $\Om\le\Gmax-\Gsum$ to upper bound $\Lagr$.

\begin{lemma}\label{lem:reduce}
  If $\Om\le\Gmax-\Gsum$ then $\Lagr\le M$.
\end{lemma}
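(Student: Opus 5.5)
The lemma is a direct chaining of the facts recorded just before it, so the plan is a three-step argument with no real obstacle.

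First I invoke the split \eqref{eq:split}. Since $\ff=\fe+\fo$ and $\gcor$ is even in each argument, it gives $\Lagr=\Gsum+\Om$. For this I would confirm once that the integrand of \cref{prop:bridge} equals $\gcor(\bar S,\bar T)+\fo(\dlt\bar S\bar T)$. By definition of $\gcor$ in \eqref{eq:g}, $\fe(\dlt\bar S\bar T)-\mu\fe(\bar S)-\nu\fe(\bar T)=\gcor(\bar S,\bar T)$. Evenness of $\gcor$ then lets the four outcomes $(a,c),(a,-d),(-b,c),(-b,-d)$ be read off as the corner values $\gcor_{ij}$ with weights $w_{ij}$.

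Next, the hypothesis $\Om\le\Gmax-\Gsum$ gives $\Lagr=\Gsum+\Om\le\Gmax$.

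Finally, $\Gmax$ is by definition the maximum of the four corner values $\gcor(a,c),\gcor(a,d),\gcor(b,c),\gcor(b,d)$. The hypothesis of \cref{thm:twopoint} is that $M$ bounds each of them, so $\Gmax\le M$, and hence $\Lagr\le M$.

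The only point needing any care is the bookkeeping in the first step, namely that the weights $w_{ij}$ attach to the correct corners. The inequality $\Gsum\le\Gmax$, which follows because the $w_{ij}$ are nonnegative and sum to one, is not even needed here. It only matters later, since it makes the hypothesis $\Om\le\Gmax-\Gsum$ nonvacuous, with a nonnegative right-hand side.
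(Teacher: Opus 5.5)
Your proof is correct and is the same as the paper's: the chain $\Lagr=\Gsum+\Om\le\Gmax\le M$, using the split \eqref{eq:split}, the hypothesis, and the fact that $M$ bounds the four corner values. You are also right that $\Gsum\le\Gmax$ is not used in this step, and that re-deriving the split is optional, since \eqref{eq:split} is already established before the lemma.
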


\begin{proof}
  $\Lagr=\Gsum+\Om\le\Gmax\le M$.
\end{proof}

In particular $\Om\le0$ suffices, the slack $\Gmax-\Gsum$ being non-negative.
We write $\Om$ as a mixed second difference and then bound it.

For a function $h$ of one variable we write
\[
  \Dmix{h}:=h(ac)-h(ad)-h(bc)+h(bd)
\]
for the mixed second difference of $(x,y)\mapsto h(xy)$ over the points
$x\in\{a,b\}$, $y\in\{c,d\}$. For a function of two variables, such as $\gcor$,
we define $\Dmix{\gcor}:=\gcor(a,c)-\gcor(a,d)-\gcor(b,c)+\gcor(b,d)$ analogously.

\begin{lemma}\label{lem:oddgain}
  With $\lamS:=\tfrac{ab}{a+b}$, $\kapT:=\tfrac{cd}{c+d}$ and
  $\phio(u):=\fo(\dlt u)/u$ for $u>0$,
  \[
    \Om=\lamS\kapT\,\Dmix{\phio}.
  \]
\end{lemma}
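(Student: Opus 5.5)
The proof is a direct substitution into the definition of $\Om$ in \eqref{eq:split}, using only that $\fo$ is odd. The key observation is that, at each of the four corners, the probability weight multiplied by the corresponding product $\lvert s_it_j\rvert$ equals the same constant $abcd/\Nrm$.

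First, I list the four values of $(\bar S,\bar T)$ with their probabilities, exactly as in the proof of \cref{prop:bridge}. The value $(a,c)$ has weight $bd/\Nrm$, $(a,-d)$ has weight $bc/\Nrm$, $(-b,c)$ has weight $ad/\Nrm$, and $(-b,-d)$ has weight $ac/\Nrm$. Then I write
\[
  \Om=\tfrac1\Nrm\bigl[bd\,\fo(\dlt ac)+bc\,\fo(-\dlt ad)+ad\,\fo(-\dlt bc)+ac\,\fo(\dlt bd)\bigr].
\]
Since $\fo$ is odd (\cref{def:integrands}), the two middle terms become $-bc\,\fo(\dlt ad)$ and $-ad\,\fo(\dlt bc)$.

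Next, every argument $u\in\{ac,ad,bc,bd\}$ is positive because $a,b,c,d\in(0,1]$. So I may write $\fo(\dlt u)=u\,\phio(u)$ with $\phio$ as in the statement. This gives
\[
  bd\cdot ac\,\phio(ac),\qquad bc\cdot ad\,\phio(ad),\qquad ad\cdot bc\,\phio(bc),\qquad ac\cdot bd\,\phio(bd),
\]
and each coefficient equals $abcd$. Hence
\[
  \Om=\tfrac{abcd}{\Nrm}\bigl[\phio(ac)-\phio(ad)-\phio(bc)+\phio(bd)\bigr]=\tfrac{abcd}{\Nrm}\,\Dmix{\phio}.
\]

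Finally, since $\Nrm=(a+b)(c+d)$, the prefactor is
\[
  \frac{abcd}{\Nrm}=\frac{ab}{a+b}\cdot\frac{cd}{c+d}=\lamS\kapT,
\]
which is the claim. No step is a real obstacle. The only points to watch are the sign bookkeeping from oddness and the positivity of the arguments, which guarantees that $\phio$ is defined at all four points.
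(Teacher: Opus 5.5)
Your proposal is correct and follows the paper's proof essentially verbatim: you use the four corner weights from \cref{prop:bridge} and the oddness of $\fo$, substitute $\fo(\dlt xy)=xy\,\phio(xy)$ so that every coefficient becomes $abcd$, and then identify $abcd/\Nrm=\lamS\kapT$. Your remark that $a,b,c,d>0$ ensures $\phio$ is defined at all four points is a small detail the paper leaves implicit.
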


\begin{proof}
  By the weights of \cref{prop:bridge} and oddness of $\fo$,
  \[
    \Nrm\,\Om=bd\,\fo(\dlt ac)-bc\,\fo(\dlt ad)-ad\,\fo(\dlt bc)+ac\,\fo(\dlt bd).
  \]
  Writing $\fo(\dlt xy)=xy\,\phio(xy)$, each of the four coefficients becomes
  $abcd$, so $\Nrm\,\Om=abcd\,\Dmix{\phio}$. Since
  $abcd/\Nrm=\lamS\kapT$, the claim follows.
\end{proof}

Both of the next two subsections bound mixed second differences by means of
\emph{strict supermodularity}~\cite{Topkis1998Supermodularity}. A function $F$
is strictly supermodular on a rectangle if
$F(x_2,y_2)-F(x_1,y_2)-F(x_2,y_1)+F(x_1,y_1)>0$ for all $x_1<x_2$ and
$y_1<y_2$ in the rectangle. For $F$ of class $C^2$ this difference equals the
integral of $\partial_x\partial_yF$ over $[x_1,x_2]\times[y_1,y_2]$, so a cross
derivative that is positive except on a null set implies strict
supermodularity. All functions considered below are of the form $F(x,y)=h(xy)$
with $x,y>0$, and we call $h$ strictly supermodular if $F$ is. In this case the
criterion is one-dimensional:
\begin{equation}\label{eq:prod}
  \partial_x\partial_y F(x,y) = \partial_x\partial_y h(xy) =h'(xy)+xy\,h''(xy)=\bigl(u\,h'(u)\bigr)'\big|_{u=xy},
\end{equation}
so $h$ is strictly supermodular on an interval of $(0,\infty)$ on which
$u\mapsto u\,h'(u)$ has a positive derivative.

The remaining subsections reduce $\Om\le\Gmax-\Gsum$, via supermodularity, to
the following two-sided bound on $\artanh$.

\begin{lemma}\label{lem:artanh}
  For $0<z<1$,
  \[
    z\ <\ \artanh z\ <\ \frac{z}{1-z^2} .
  \]
\end{lemma}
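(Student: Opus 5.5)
Both inequalities compare functions that vanish at $z=0$, so I will prove each by showing that the difference has the right sign of derivative on $(0,1)$ and then integrating from $0$. This uses only $\artanh'(z)=1/(1-z^2)$, which is $\fe''$ in the list of identities after \cref{def:integrands}.

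\emph{Lower bound.} Put $\phi(z):=\artanh z-z$. Then $\phi(0)=0$ and
\[
  \phi'(z)=\frac{1}{1-z^2}-1=\frac{z^2}{1-z^2}>0
\]
for $0<z<1$. Hence $\phi$ is strictly increasing on $[0,1)$, and so $\phi(z)>0$ for $0<z<1$.

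\emph{Upper bound.} Put $\psi(z):=\frac{z}{1-z^2}-\artanh z$, so $\psi(0)=0$. By the quotient rule,
\[
  \frac{d}{dz}\frac{z}{1-z^2}=\frac{(1-z^2)+2z^2}{(1-z^2)^2}=\frac{1+z^2}{(1-z^2)^2}.
\]
Therefore
\[
  \psi'(z)=\frac{1+z^2}{(1-z^2)^2}-\frac{1-z^2}{(1-z^2)^2}=\frac{2z^2}{(1-z^2)^2}>0
\]
on $(0,1)$. So $\psi$ is strictly increasing there, and $\psi(z)>0$ for $0<z<1$.

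As a cross-check, the series $\artanh z=\sum_{k\ge0}z^{2k+1}/(2k+1)$ and $z/(1-z^2)=\sum_{k\ge0}z^{2k+1}$ give both bounds at once. They agree term by term at $k=0$, and for $k\ge1$ the coefficients satisfy $0<1/(2k+1)<1$. I would still use the derivative argument, since it is the easier one to formalise.

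The only step needing care is strictness. Each argument uses continuity of $\phi,\psi$ at $0$ together with a derivative that is positive on the open interval, and the mean value theorem then turns this into a strict inequality. No real obstacle is expected.
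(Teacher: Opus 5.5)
Your proof is correct and follows the paper's argument: both bounds come from differentiating a difference that vanishes at $z=0$, and the lower-bound computation is identical. The only variation is in the upper bound. The paper first multiplies through by $1-z^2$ and differentiates $z-(1-z^2)\artanh z$, getting $2z\artanh z>0$; you differentiate $\frac{z}{1-z^2}-\artanh z$ directly and get the manifestly positive $\frac{2z^2}{(1-z^2)^2}$, and either works.
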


\begin{proof}
  Both differences vanish at $z=0$, so it is enough to differentiate. On the
  left, $\artanh'z=1/(1-z^2)>1$. On the right, multiply out: the claim is
  $(1-z^2)\artanh z<z$, and $z-(1-z^2)\artanh z$ has derivative
  $2z\,\artanh z>0$ on $(0,1)$.
\end{proof}

\subsubsection{Case 1: same-direction skews}\label{sub:same}

\begin{proposition}\label{prop:same}
  If $b<a$ and $d<c$ then $\Om<0$, hence $\Lagr<M$.
\end{proposition}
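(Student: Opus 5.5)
The plan is to apply \cref{lem:oddgain}, which gives $\Om=\lamS\kapT\,\Dmix{\phio}$ with $\lamS,\kapT>0$. The claim $\Om<0$ is then equivalent to $\Dmix{\phio}<0$. Under $b<a$ and $d<c$, set $F(x,y):=\phio(xy)$ with $x_1=b<x_2=a$ and $y_1=d<y_2=c$. Then
\[
  \Dmix{\phio}=F(x_2,y_2)-F(x_2,y_1)-F(x_1,y_2)+F(x_1,y_1).
\]
This is exactly the supermodularity difference, so it suffices to show that $\phio$ is strictly \emph{sub}modular on $(0,1]$. By \eqref{eq:prod}, applied to $-\phio$, this holds if $u\mapsto u\,\phio'(u)$ has negative derivative on $(0,1]$.

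The central step is computing this derivative. From $\phio(u)=\fo(\dlt u)/u$ we get
\[
  u\,\phio'(u)=\dlt\,\fo'(\dlt u)-\fo(\dlt u)/u,
\]
and differentiating, then multiplying by $u^2$, with $z:=\dlt u\in(0,\dlt)\subset(0,1)$, gives
\[
  u^2\bigl(u\,\phio'(u)\bigr)'=z^2\fo''(z)-z\,\fo'(z)+\fo(z)=:k(z).
\]
Substituting the identities \eqref{eq:fofe} for $\fo,\fo',\fo''$, the logarithmic terms $\pm\tfrac z2\log(1-z^2)$ cancel. Since $z+z^3/(1-z^2)=z/(1-z^2)$, what remains is
\[
  k(z)=\artanh z-\frac{z}{1-z^2}.
\]
This is negative on $(0,1)$ by the right-hand inequality of \cref{lem:artanh}.

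As a cross-check, $k(0)=0$ and $k'(z)=z\,(z\fo''(z))'=-2z^2/(1-z^2)^2<0$, which also gives $k<0$. Hence $(u\phio')'<0$ on $(0,1/\dlt)\supset(0,1]$, the cross derivative of $F$ is negative, and integrating it over $[b,a]\times[d,c]$ yields $\Dmix{\phio}<0$, so $\Om<0$.

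Finally, since $\Gsum\le\Gmax\le M$, we get $\Lagr=\Gsum+\Om<\Gsum\le M$ by \eqref{eq:split}, as in \cref{lem:reduce} but with strict inequality. The only real obstacle is the algebra reducing $k$ to the $\artanh$ expression. I expect it to go through cleanly because the logarithms cancel.
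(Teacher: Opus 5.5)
Your proof is correct and follows the paper's argument. You reduce via \cref{lem:oddgain} to $\Dmix{\phio}<0$, use \eqref{eq:prod}, and land on the upper bound $\artanh z<z/(1-z^2)$ of \cref{lem:artanh}. The only difference is the order of the algebra: the paper first simplifies $-u\,\phio'(u)=\dlt\bigl[\artanh z/z-1\bigr]$ and then differentiates $\artanh z/z$, whereas you differentiate first and simplify $k(z)$ afterwards, arriving at the same criterion.
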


\begin{proof}
  We show that $-\phio$ is strictly supermodular on $(0,1]$. By
  \eqref{eq:prod} it suffices that $u\mapsto-u\,\phio'(u)$ has a positive
  derivative. By \eqref{eq:fofe}, with $z=\dlt u\in(0,1)$,
  \[
    -u\,\phio'(u)=\frac{\fo(\dlt u)}{u}-\dlt\,\fo'(\dlt u)
    =\dlt\Bigl[\frac{\artanh z}{z}-1\Bigr],
  \]
  so it suffices that $z\mapsto\artanh z/z$ has a positive derivative. For
  $0<z<1$,
  \[
    \frac{\mathrm d}{\mathrm dz}\frac{\artanh z}{z}
    =\frac{z\,\artanh'(z)-\artanh z}{z^2}
    =\frac{z/(1-z^2)-\artanh z}{z^2},
  \]
  which is positive precisely when $\artanh z<z/(1-z^2)$, the upper bound of
  \cref{lem:artanh}. Since $b<a$ and $d<c$, strict supermodularity of $-\phio$
  gives $\Dmix{\phio}<0$, and by \cref{lem:oddgain}
  $\Om=\lamS\kapT\,\Dmix{\phio}<0$ because $\lamS,\kapT>0$. By
  \eqref{eq:split}, $\Lagr=\Gsum+\Om<\Gsum\le\Gmax\le M$.
\end{proof}

\subsubsection{Case 2: opposite skews}\label{sub:opp}

\begin{theorem}\label{thm:S4}
  If $b<a$ and $c<d$, then $\Om \le \Gmax-\Gsum$, hence $\Lagr\le M$.
\end{theorem}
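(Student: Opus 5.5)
The plan is to bound the odd gain $\Om$ by an explicit part of the slack $\Gmax-\Gsum$, and to compare the two through their mixed second differences. In this case $\Om$ is positive, since the supermodularity of $-\phio$ from \cref{prop:same} now meets one reversed inequality ($b<a$ but $c<d$), so $\Dmix{\phio}>0$. The key structural observation is that in the mixed second difference $\Dmix{\gcor}$ the rate terms $-\mu\fe(x)-\nu\fe(y)$ cancel. Hence $\Dmix{\gcor}=\Dmix{h}$ with $h(u):=\fe(\dlt u)$, and this does not involve $\mu,\nu$.

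\emph{Step 1: extract slack proportional to $\Dmix{h}$.} Since $\Gmax$ dominates every convex combination of the four corner values, I perturb the weights $w_{ij}$ along the pattern $(-,+,+,-)$ on $(ac,ad,bc,bd)$. That is, I take $w'_{ac}=w_{ac}-\varepsilon$, $w'_{bd}=w_{bd}-\varepsilon$, $w'_{ad}=w_{ad}+\varepsilon$, $w'_{bc}=w_{bc}+\varepsilon$, with $\varepsilon:=\min(w_{ac},w_{bd})=\min(ac,bd)/\Nrm$, so that the new weights remain nonnegative and still sum to one. This gives
\[
  \Gmax-\Gsum\ \ge\ \sum_{i,j}(w'_{ij}-w_{ij})\gcor_{ij}\ =\ -\varepsilon\,\Dmix{h}.
\]
By \eqref{eq:prod}, $h$ is strictly supermodular, because $(u\,h'(u))'=\dlt\bigl(\artanh z+z/(1-z^2)\bigr)>0$ with $z=\dlt u$. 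With $a>b$ and $c<d$ this gives $\Dmix{h}<0$, so the bound above is positive. This sign is what dictates the choice of perturbation direction.

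\emph{Step 2: reduce to a pointwise comparison.} By \cref{lem:oddgain}, $\Om=\tfrac{abcd}{\Nrm}\Dmix{\phio}$. It therefore suffices to show
\[
  abcd\,\Dmix{\phio}\ \le\ \min(ac,bd)\,\bigl(-\Dmix{h}\bigr).
\]
Dividing by $\min(ac,bd)$, this becomes $\max(ac,bd)\,\Dmix{\phio}\le-\Dmix{h}$. Both sides are integrals over the rectangle $[b,a]\times[c,d]$ of the one-variable cross derivatives of \eqref{eq:prod}, taken with the same orientation sign. So it is enough to compare the integrands pointwise in $u=xy$:
\[
  \max(ac,bd)\,\bigl|(u\phio'(u))'\bigr|\ \le\ (u\,h'(u))' .
\]

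\emph{Step 3: the one-variable inequality.} From the computation in \cref{prop:same}, $|(u\phio')'|=\dlt^2\bigl(z/(1-z^2)-\artanh z\bigr)/z^2$. The lower bound $\artanh z>z$ of \cref{lem:artanh} gives $z/(1-z^2)-\artanh z< z^3/(1-z^2)$. Hence the left side is at most $\max(ac,bd)\,\dlt\cdot\dlt z/(1-z^2)$. Since $\max(ac,bd)\le1$ and $\dlt<1$, this is at most $\dlt z/(1-z^2)$. That is strictly less than $\dlt\bigl(\artanh z+z/(1-z^2)\bigr)=(u h')'$. Integrating yields $\Om\le\Gmax-\Gsum$, and \cref{lem:reduce} then gives $\Lagr\le M$.

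The main obstacle is Step 1. The slack is only available as a corner-weight perturbation, and its size is limited by $\min(w_{ac},w_{bd})$. The prefactor $\min(ac,bd)$ must then be absorbed, and this works only because $\abcd$-type factors cancel down to $\max(ac,bd)\le1$, independently of where $u=xy$ lies in the rectangle. If the crude estimate in Step 3 were too lossy, I would fall back on the exact pointwise ratio of the two cross derivatives. I expect, however, that the two-sided bound of \cref{lem:artanh} suffices.
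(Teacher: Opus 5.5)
Your proposal is correct and follows the paper's proof in all essentials. Your weight perturbation reproduces \cref{lem:step2} (slack at least $\wmin\lvert\Dmix{\gcor}\rvert$). Your pointwise comparison of cross derivatives is the strict supermodularity of $\Hstep=\phio+\dlt\fe(\dlt\,\cdot)$ from \cref{lem:step1}, which likewise rests on $\artanh z>z$. The only difference is that you absorb the factor bound $\max(ac,bd)\,\dlt\le1$ of \cref{lem:step3} into the integrand rather than applying it at the end.
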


We first show that the corner values satisfy $\Dmix{\gcor}<0$. We have
\[
  \Dmix{\gcor} = \gcor_{ac}-\gcor_{ad}-\gcor_{bc}+\gcor_{bd}
  = \fe(\dlt ac)-\fe(\dlt ad)-\fe(\dlt bc)+\fe(\dlt bd),
\]
because each of the terms $\mu\fe(\cdot)$ and $\nu\fe(\cdot)$ in \eqref{eq:g}
depends on only one of the two arguments and therefore cancels in the mixed
difference; in particular, $\Dmix{\gcor}$ does not depend on $\mu$ and $\nu$.
Since $b<a$ and $c<d$, $\Dmix{\gcor}<0$ follows from strict supermodularity of
$h(u)=\fe(\dlt u)$. By~\eqref{eq:prod} it suffices that $u\,h'(u)=z\artanh z$,
with $z=\dlt u$, has a positive derivative; this derivative is
$\dlt[\artanh z+z/(1-z^2)]>0$ for $u>0$. \Cref{thm:S4} now follows from three
steps.

\begin{lemma}[Step 1]\label{lem:step1}
  $\Om < \lamS\kapT\dlt\,|\Dmix{\gcor}|$.
\end{lemma}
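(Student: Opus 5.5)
The plan is to recast the inequality as a single strict supermodularity statement, using the same one-variable criterion \eqref{eq:prod} as in \cref{prop:same}. By \cref{lem:oddgain}, $\Om=\lamS\kapT\,\Dmix{\phio}$ with $\phio(u)=\fo(\dlt u)/u$. By the computation preceding the lemma, $\Dmix{\gcor}=\Dmix{h}$ with $h(u):=\fe(\dlt u)$, and $\Dmix{\gcor}<0$, so $|\Dmix{\gcor}|=-\Dmix{h}$. Since $\lamS\kapT>0$, the claim $\Om<\lamS\kapT\dlt\,|\Dmix{\gcor}|$ is equivalent to
\[
  \Dmix{k}<0,\qquad k(u):=\phio(u)+\dlt\,h(u),
\]
by linearity of $h\mapsto\Dmix{h}$.

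Now I would fix the orientation. With $b<a$ and $c<d$, the mixed difference is $\Dmix{k}=-\bigl[k(ad)-k(ac)-k(bd)+k(bc)\bigr]$. The bracket is the supermodularity difference of $(x,y)\mapsto k(xy)$ over $x_1=b<x_2=a$ and $y_1=c<y_2=d$. So it suffices to show that $k$ is strictly supermodular on $(0,1]$. By \eqref{eq:prod}, this holds if $u\mapsto u\,k'(u)$ has a positive derivative for $u\in(0,1]$.

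Next I would compute $u\,k'(u)$ in the variable $z=\dlt u\in(0,1)$. From the proof of \cref{prop:same}, $u\,\phio'(u)=\dlt\bigl[1-\artanh z/z\bigr]$. From the proof of $\Dmix{\gcor}<0$, $u\,h'(u)=z\artanh z$. Hence
\[
  u\,k'(u)=\dlt\Bigl[1-\frac{\artanh z}{z}+z\artanh z\Bigr],
\]
and since $\mathrm dz/\mathrm du=\dlt>0$ it suffices that the bracket is increasing in $z$. Its derivative is
\[
  -\frac{z/(1-z^2)-\artanh z}{z^2}+\artanh z+\frac{z}{1-z^2}.
\]
After multiplying by $z^2>0$, positivity of this derivative reads
\[
  z^2\artanh z+\frac{z^3}{1-z^2}+\artanh z>\frac{z}{1-z^2}.
\]
Collecting terms, $\frac{z}{1-z^2}-\frac{z^3}{1-z^2}=z$, so the condition becomes $(1+z^2)\artanh z>z$. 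This follows from the lower bound $\artanh z>z$ of \cref{lem:artanh}, since $(1+z^2)\artanh z\ge\artanh z>z$.

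The main obstacle is only bookkeeping. One must get the sign and orientation of the mixed differences right: here the $x$- and $y$-orders are opposite, unlike in \cref{prop:same}. One must also check that it is the sum $\phio+\dlt h$ that is supermodular. Neither term alone has the right sign, because $-\phio$ is supermodular by \cref{prop:same}. Once this is arranged, the analytic content reduces to the elementary bound $\artanh z>z$.
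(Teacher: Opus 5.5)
Your proof is correct and is essentially the paper's argument. The paper likewise sets $\Hstep(u)=\phio(u)+\dlt\,\fe(\dlt u)$ (your $k$) and shows $u\,\Hstep'(u)=\dlt\,\Kstep(\dlt u)$ with $\Kstep(z)=1-\artanh z/z+z\artanh z$. It then reduces $\Kstep'>0$ to $(1+z^2)\artanh z>z$ via the lower bound of \cref{lem:artanh}, and concludes by multiplying $\Dmix{\Hstep}<0$ by $\lamS\kapT$ and invoking \cref{lem:oddgain}; your explicit check of the orientation of the mixed difference is a detail the paper leaves implicit.
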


\begin{proof}
  Set $\Hstep(u):=\phio(u)+\dlt\,\fe(\dlt u)$, so that
  $\Dmix{\Hstep}=\Dmix{\phio}+\dlt\Dmix{\gcor}$. We show $\Dmix{\Hstep}<0$,
  which follows from strict supermodularity of $\Hstep$ because $b<a$ and
  $c<d$. By \eqref{eq:prod} it suffices that $u\mapsto u\,\Hstep'(u)$ has a
  positive derivative. Using~\eqref{eq:fofe},
  \[
    u\,\Hstep'(u)=\dlt\,\Kstep(\dlt u),\qquad
    \Kstep(z)=1-\frac{\artanh z}{z}+z\artanh z ,
  \]
  and
  \[
    \Kstep'(z)=\artanh z\Bigl(1+\frac1{z^2}\Bigr)-\frac1z
    =\frac{(1+z^2)\artanh z-z}{z^2},
  \]
  which is positive because $(1+z^2)\artanh z>\artanh z>z$, the lower bound of
  \cref{lem:artanh}. Hence $\Dmix{\Hstep}<0$, i.e.\
  $\Dmix{\phio}<-\dlt\,\Dmix{\gcor}=\dlt\,|\Dmix{\gcor}|$. Multiplying by
  $\lamS\kapT>0$ and applying \cref{lem:oddgain} gives
  $\Om<\lamS\kapT\dlt\,|\Dmix{\gcor}|$.
\end{proof}

\begin{lemma}[Step 2]\label{lem:step2}
  $\Gmax-\Gsum\ \ge\ \min(w_{ac},w_{bd})\,|\Dmix{\gcor}|$.
\end{lemma}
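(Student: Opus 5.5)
Since $\Dmix{\gcor}<0$ in the present case, write $D:=|\Dmix{\gcor}|=\gcor_{ad}+\gcor_{bc}-\gcor_{ac}-\gcor_{bd}>0$. The plan is to bound $\Gmax$ from below by a suitable average of corners. We then subtract $\Gsum$ and collect the coefficient of $D$.

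Both anti-diagonal corners $\gcor_{ad},\gcor_{bc}$ are at most $\Gmax$. Hence their weighted sum satisfies
\[
  w_{ad}\gcor_{ad}+w_{bc}\gcor_{bc}\le (w_{ad}+w_{bc})\Gmax .
\]
The useful inequality is a lower bound for $\Gmax$ using the anti-diagonal pair. Without loss of generality suppose $w_{ac}\le w_{bd}$, and set $m:=w_{ac}=\min(w_{ac},w_{bd})$. By the definition of $D$,
\[
  m(\gcor_{ad}+\gcor_{bc}) = m(\gcor_{ac}+\gcor_{bd}) + mD,
\]
and the left-hand side is at most $2m\,\Gmax$. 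Now decompose
\[
  \Gsum = m\gcor_{ac}+w_{bd}\gcor_{bd}+w_{ad}\gcor_{ad}+w_{bc}\gcor_{bc}.
\]
Bound $\gcor_{bd}\le\Gmax$ on its excess weight $w_{bd}-m$, and bound $\gcor_{ad},\gcor_{bc}\le\Gmax$ on their excess weights $w_{ad}-m$ and $w_{bc}-m$. This requires $w_{ad},w_{bc}\ge m$.

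That requirement is the one thing to check. We have $w_{ad}=bc/\Nrm$ and $w_{ac}=bd/\Nrm$, so $w_{ad}\ge w_{ac}$ iff $c\ge d$. That fails here, since $c<d$. So the bookkeeping must be arranged more carefully. Instead of matching anti-diagonal weights to $m$, I would use only the convexity bound
\[
  m(\gcor_{ac}+\gcor_{bd}) = m(\gcor_{ad}+\gcor_{bc}) - mD \le 2m\Gmax - mD .
\]
Then write
\[
  \Gsum = m(\gcor_{ac}+\gcor_{bd}) + (w_{bd}-m)\gcor_{bd} + w_{ad}\gcor_{ad}+w_{bc}\gcor_{bc}.
\]
Every remaining coefficient is nonnegative and the coefficients sum to $1-2m$. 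Bounding each remaining corner by $\Gmax$ gives
\[
  \Gsum\le 2m\Gmax-mD+(1-2m)\Gmax=\Gmax-mD,
\]
which is the claim. The case $w_{bd}\le w_{ac}$ is symmetric: take $m=w_{bd}$ and put the excess weight on $\gcor_{ac}$.

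The only real step is choosing this decomposition, in which both diagonal corners receive weight exactly $m$. The identity defining $D$ then trades them for the two anti-diagonal corners at cost $mD$. No sign condition on $\mu,\nu$ or on the skews is needed beyond $\Dmix{\gcor}<0$, which was established just before \cref{lem:step1}. The bound holds with $|\Dmix{\gcor}|$ in that case, and trivially in general if $\Dmix{\gcor}$ has the other sign and one argues with the diagonal pair instead.
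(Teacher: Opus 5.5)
Your final argument is correct and is the paper's proof written with different bookkeeping. You give both diagonal corners weight exactly $\wmin$, trade them for the anti-diagonal pair at cost $\wmin|\Dmix{\gcor}|$ via $\gcor_{ad}+\gcor_{bc}\le 2\Gmax$, and bound all remaining weight by $\Gmax$; the paper does the same by dropping the anti-diagonal terms from $\sum_{i,j}w_{ij}(\Gmax-\gcor_{ij})$ and lowering the diagonal weights to $\wmin$. Only your closing aside is inaccurate: for $\Dmix{\gcor}>0$ the mirrored argument gives the factor $\min(w_{ad},w_{bc})$, not $\min(w_{ac},w_{bd})$. That case is not needed, because the lemma sits in the opposite-skew setting where $\Dmix{\gcor}<0$.
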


\begin{proof}
  The weights are non-negative and sum to $1$, so
  \[
    \Gmax-\Gsum=\sum_{i,j}w_{ij}\,(\Gmax-\gcor_{ij}),
  \]
  and every term is non-negative. Omitting the terms with indices $ad$ and
  $bc$, bounding the remaining weights below by $\wmin:=\min(w_{ac},w_{bd})$,
  and using $2\Gmax\ge \gcor_{ad}+\gcor_{bc}$, we obtain
  \[
    \Gmax-\Gsum\ \ge\ \wmin\bigl[(\Gmax-\gcor_{ac})+(\Gmax-\gcor_{bd})\bigr]
    \ \ge\ \wmin\bigl[\gcor_{ad}+\gcor_{bc}-\gcor_{ac}-\gcor_{bd}\bigr]
    \ =\ \wmin\,|\Dmix{\gcor}| ,
  \]
  the last equality because $\Dmix{\gcor}=\gcor_{ac}-\gcor_{ad}-\gcor_{bc}+\gcor_{bd}$ is negative.
\end{proof}

\begin{lemma}[Step 3]\label{lem:step3}
  $\lamS\kapT\dlt\le\min(w_{ac},w_{bd})$.
\end{lemma}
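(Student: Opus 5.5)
We need $\lamS\kapT\dlt\le w_{ac}=bd/\Nrm$ and $\le w_{bd}=ac/\Nrm$. Since $\lamS\kapT=abcd/\Nrm$, multiplying through by $\Nrm>0$ turns the two claims into
\[
  \dlt\,abcd\le bd \qquad\text{and}\qquad \dlt\,abcd\le ac .
\]
These are the inequalities $\dlt ac\le1$ and $\dlt bd\le1$, after dividing by $bd>0$ and $ac>0$ respectively.

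Both follow at once from $a,b,c,d\in(0,1]$ and $0<\dlt<1$, because $\dlt ac<1$ and $\dlt bd<1$. Thus $\lamS\kapT\dlt$ is bounded by each of the two weights, hence by their minimum. The hypotheses $b<a$, $c<d$ of \cref{thm:S4} are not even needed here.

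The only point requiring care is the bookkeeping identity $abcd/\Nrm=\lamS\kapT$, which was already used in \cref{lem:oddgain}, together with the correct assignment of the weights $w_{ac}=bd/\Nrm$ and $w_{bd}=ac/\Nrm$ from \cref{prop:bridge}. I expect no real obstacle.

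Combining this with \cref{lem:step1,lem:step2} gives $\Om<\lamS\kapT\dlt|\Dmix{\gcor}|\le\wmin|\Dmix{\gcor}|\le\Gmax-\Gsum$. This proves \cref{thm:S4} via \cref{lem:reduce}.
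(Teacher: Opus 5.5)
Your proposal is correct and is essentially the paper's proof. You write $\lamS\kapT=abcd/\Nrm$, $w_{ac}=bd/\Nrm$ and $w_{bd}=ac/\Nrm$, which reduces the claim to $\dlt ac\le1$ and $\dlt bd\le1$; these hold because $a,b,c,d,\dlt\le1$.
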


\begin{proof}
  We have $\lamS\kapT\dlt=abcd\,\dlt/\Nrm$, $w_{ac}=bd/\Nrm$ and
  $w_{bd}=ac/\Nrm$. The two inequalities are therefore equivalent to
  $ac\dlt\le1$ and $bd\dlt\le1$, which hold because $a,b,c,d,\dlt\le1$.
\end{proof}

\begin{proof}[Proof of \cref{thm:S4}]
  By \crefrange{lem:step1}{lem:step3},
  $\Om<\lamS\kapT\dlt\,|\Dmix{\gcor}|\le\wmin|\Dmix{\gcor}|\le\Gmax-\Gsum$, so
  \cref{lem:reduce} applies.
\end{proof}

\subsubsection{Proof of \texorpdfstring{\protect\cref{thm:twopoint}}{Theorem \ref{thm:twopoint}}}\label{sub:assemble}

\begin{proof}[Proof of \cref{thm:twopoint}]
  The substitution $(a,b,c,d)\mapsto(b,a,d,c)$ leaves $\Lagr$ invariant. It permutes the four corner
  values, so the hypothesis is invariant as well. We distinguish cases:
  \begin{itemize}
  \item $a=b$ or $c=d$: then $\Dmix{\phio}=0$, so $\Om=0$ by
    \cref{lem:oddgain}, and \cref{lem:reduce} applies.
  \item $a>b$ and $c>d$: \cref{prop:same}. The case $a<b$, $c<d$ is
    this one after the substitution.
  \item $a>b$ and $c<d$: \cref{thm:S4}. The case $a<b$, $c>d$ is this
    one after the substitution.
  \end{itemize}
  These are all the cases, so $\Lagr\le M$ throughout. Finally
  $\Jsym=\sup\gcor$ bounds the four corner values, so taking
  $M=\Jsym$ gives $\Lagr\le \Jsym$.
\end{proof}

\subsection{From the two-point theorem to the conjecture}\label{sec:red}

Let $\Cone:=\{R:R_0\le0,\ R_1\ge0,\ R_2\ge0\}$. By definition,
$\Aregion=\KA+\Cone$ and $\Bregion=\KB+\Cone$, where $\KA$ and $\KB$ are the
images of the compact parameter sets $[0,1]^4$ (two binary channels) and
$[0,1]^2$ (two crossovers) under the continuous map to
$(\I(U;V),\I(U;X),\I(Y;V))$. Neither region is bounded, but
$\conv\Bregion=\conv\KB+\Cone$ is closed and convex: $\conv\KB$ is compact by
Carath\'eodory's theorem, and the sum of a compact set and a closed set is
closed.

A closed convex set is the intersection of the closed half-spaces containing it.
Hence $\Aregion\subseteq\conv\Bregion$ holds if and only if
$\sup_\Aregion\ell\le\sup_{\Bregion}\ell$ for every linear functional
$\ell(R)=\ell_0R_0+\ell_1R_1+\ell_2R_2$ with $\sup_{\Bregion}\ell<\infty$.
Since $\Bregion\supseteq\Cone$, such a functional satisfies $\ell_0\ge0$,
$\ell_1\le0$ and $\ell_2\le0$. If $\ell_0=0$, then $\ell\le0$ on $\Aregion$,
where $R_1,R_2\ge0$, while $\sup_{\Bregion}\ell\ge\ell(0)=0$; so these
functionals impose no condition. After normalisation by $\ell_0>0$, the
remaining functionals are
\[
  \ell(R)=R_0-\mu R_1-\nu R_2,\qquad \mu,\nu\ge0 .
\]

\begin{proof}[Proof of~\cref{thm:main}]
  Let $\ell$ be as above. Then $\sup_\Aregion\ell$ is the supremum of the
  Lagrangian $\I(U;V)-\mu\I(U;X)-\nu\I(Y;V)$ over all pairs of binary
  channels, and $\sup_\Bregion\ell=\Jsym$. Together,
  \cref{prop:bridge,prop:deglagr,thm:twopoint} show that
  $\sup_\Aregion\ell\le\Jsym$, finishing the proof for $0<p<\tfrac12$; the
  remaining cases were settled at the beginning of this section.
\end{proof}

\section{The double-sided information bottleneck at \texorpdfstring{$p=0$}{p = 0}}
\label{sec:dsibmain}

\subsection{Setting and statements}\label{sec:dsib}

This section concerns the same source and the same channels as
\cref{sec:bsc}, but a different question, and only the case $p=0$. Then $Y=X$
and $\dlt=1$, and the rates and the value in \eqref{eq:bias} and
\eqref{eq:bias2} are read with $\dlt=1$ throughout this section. For a pair of
channels $(\chU,\chV)$ we abbreviate
\[
  \Val:=\I(U;V),\qquad \Rte_U:=\I(U;X),\qquad \Rte_V:=\I(Y;V).
\]

\begin{definition}\label{def:zs}
  For $a\in[0,1]$ let the \emph{Z-channel} $\Zch{a}$ be the binary channel whose
  bias law is
  \[
    P(S=-1)=\tfrac{a}{1+a},\qquad P(S=a) = \tfrac1{1+a}.
  \]
  One of its crossover probabilities vanishes: with $\chU = \Zch{a}$, the output
  letter with bias $-1$ occurs only when $X=1$. The \emph{S-channel} $\Sch{a}$ is
  the binary channel whose bias law is
  \[
    P(T=1)=\tfrac{a}{1+a},\qquad P(T=-a) = \tfrac1{1+a};
  \]
  with $\chV = \Sch{a}$, the output letter with bias $+1$ occurs only when
  $Y=0$. Both channels may be used on either side. For $a=1$ both are the
  noiseless channel, and for $a=0$ both are degenerate.
\end{definition}

By \cref{rem:degenerate} these laws define channels, unique up to relabelling of
the output, and by \eqref{eq:ST} the rates and the value of such a pair are
those of its bias laws. By \eqref{eq:bias} and evenness of $\fe$, the Z- and the S-channel
with parameter $a$ have the same rate
\[
  \zsR(a):=\frac{\fe(a)+a\log2}{1+a}.
\]
By~\eqref{eq:bias2}, the \emph{response function} of $\Sch{a}$ on the $V$-side
is
\[
  \resp_a(s):=\mathbb E[\ff(s\,T)]
  =\tfrac{a}{1+a}\ff(s)+\tfrac1{1+a}\ff(-as),
\]
so that $\I(U;V)=\mathbb E[\resp_a(S)]$ for every $U$-side bias law $S$ when
$\chV=\Sch{a}$.

\begin{definition}
  For $a,d\in[0,1]$ put
  \[
    \zsV(a,d):=\I(U;V)\big|_{(\chU = \Zch{a}, \chV = \Sch{d})},\qquad
    \zzV(a,d):=\I(U;V)\big|_{(\chU = \Zch{a}, \chV = \Zch{d})},
  \]
  the values of the anti-aligned and of the aligned pair, and define $\szV$ and
  $\ssV$ analogously.
\end{definition}

\paragraph{Symmetries.} Two symmetries are used in this section. First,
relabelling the source alphabet, i.e.\ replacing $X$ by $X\oplus1$ and hence
$Y=X$ by $Y\oplus1$, preserves the source and leaves $\I(U;V)$, $\I(U;X)$ and
$\I(Y;V)$ unchanged, while it negates all biases. It therefore exchanges
$\Zch{c}$ and $\Sch{c}$ on both sides simultaneously, so that $\ssV=\zzV$ and
$\szV=\zsV$. Second, exchanging the roles of the two sides, i.e.\ of $(X,U)$ and
$(Y,V)$, preserves the source (as $X=Y$) and $\I(U;V)$, and exchanges the two
rates.

\paragraph{Closed forms.} Evaluating \eqref{eq:bias2} gives
\begin{align*}
  \zsV(a,d)&=\tfrac{d}{1+d}\log(1+a)
  +\tfrac{1-ad}{(1+a)(1+d)}\log(1-ad)
  +\tfrac{a}{1+a}\log(1+d),\\
  \zzV(a,d)&=\frac{2ad\log2+a(1-d)\log(1-d)+d(1-a)\log(1-a)+(1+ad)\log(1+ad)}{(1+a)(1+d)},
\end{align*}
for $a,d\in[0,1]$, and in terms of the response function,
$\zzV(a,d)=\ssV(a,d)=\tfrac{a}{1+a}\resp_d(1)+\tfrac1{1+a}\resp_d(-a)$. Both
expressions are symmetric in $a$ and $d$.
In $(\Zch{a},\Sch{d})$ the atoms $s=-1$
and $t=+1$ have product $-1$, so $P(S=-1, T=1) = 0$.

\begin{lemma}\label{lem:mono}
  Let $a\in(0,1)$. The three functions $\zsR$, $d\mapsto\zsV(a,d)$ and
  $d\mapsto\zzV(a,d)$ are continuous and strictly increasing on $[0,1]$ and
  vanish at $0$. Moreover $\zsR'(t)=\log\frac2{1-t}\big/(1+t)^2$ on $[0,1)$,
  $\zsR(1)=\log2$, and $\zzV(a,1)=\zsR(a)$.
\end{lemma}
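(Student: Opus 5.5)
The plan is to treat the three functions separately, each time working from the closed forms (or from \cref{prop:lr} with $\dlt=1$) and reading off the claims by direct evaluation and differentiation.

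\emph{The rate $\zsR$.} Continuity on $[0,1]$ is immediate, because $\fe$ is continuous on $[-1,1]$ and $1+t\ge1$. Vanishing at $0$ and the value $\zsR(1)=\tfrac12(\log2+\log2)=\log2$ both follow from $\fe(0)=0$ and $\fe(1)=\log2$. For the derivative I would use $\fe'=\artanh$ and the identity \eqref{eq:fofe}, $\fe(t)=t\artanh t+\tfrac12\log(1-t^2)$. This gives
\[
  \zsR'(t)=\frac{(1+t)(\artanh t+\log2)-\fe(t)-t\log2}{(1+t)^2}
  =\frac{\artanh t+\log2-\tfrac12\log(1-t^2)}{(1+t)^2}.
\]
Writing $\artanh t=\tfrac12\log\frac{1+t}{1-t}$, the numerator collapses to $\log2-\log(1-t)=\log\frac{2}{1-t}$. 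This is positive on $[0,1)$, so $\zsR$ is strictly increasing on $[0,1]$, using continuity at $1$.

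\emph{The value functions.} Both closed forms are continuous in $d\in[0,1]$, with the convention $0\log0=0$. Setting $d=0$ kills every term, since $\log1=0$, so both values vanish at $0$. For the special value, put $d=1$ in $\zzV$: the numerator becomes $2a\log2+0+(1-a)\log(1-a)+(1+a)\log(1+a)=2a\log2+2\fe(a)$, and the denominator is $2(1+a)$, so $\zzV(a,1)=\zsR(a)$.

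For monotonicity I would avoid expanding the derivatives by brute force and use a structural argument. Fix the $U$-side law $S$ of $\Zch{a}$. Then $d\mapsto\I(U;V)$ with $\chV=\Sch{d}$ or $\Zch{d}$ is
\[
  \mathbb E_S\bigl[\tfrac{d}{1+d}\ff(\pm S)+\tfrac{1}{1+d}\ff(\mp dS)\bigr].
\]
Its derivative in $d$ equals $(1+d)^{-2}\,\mathbb E_S\bigl[\ff(\pm S)-\ff(\mp dS)\bigr]-\tfrac{1}{1+d}\mathbb E_S\bigl[S\,\ff'(\mp dS)\bigr]$, with the sign fixed by the orientation. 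Alternatively, and more cleanly, I would use the data-processing inequality: for $d<d'$ the channel $\Sch{d}$ is a degraded version of $\Sch{d'}$. Composing the S-channel with a further S-type channel from the output letter with bias $-d'$ to the letter with bias $+1$ yields $\Sch{d}$, and I would check this on the bias laws by merging mass. That gives weak monotonicity $\I(U;V_d)\le\I(U;V_{d'})$. Strictness then needs an extra argument. The cleanest route I see is to compute $\partial_d$ directly and show it is positive. For $\zsV$, differentiating the closed form gives
\[
  \frac{\log(1+a)}{(1+d)^2}+\frac{a}{1+a}\cdot\frac{1}{1+d}-\frac{a}{(1+a)(1+d)}-\frac{(1+a)\log(1-ad)+\dots}{(1+a)(1+d)^2},
\]
which simplifies to $\frac{1}{(1+a)(1+d)^2}\bigl[(1+a)\log(1+a)-(1+a)\log(1-ad)\bigr]$ up to terms I expect to cancel. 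This is positive since $\log\frac{1+a}{1-ad}>0$ for $a\in(0,1)$. An analogous simplification for $\zzV$ should leave a numerator of the form $(1-a)\log\frac{2}{1-d}\cdot(\ldots)+\ldots$. Here I would group terms as $a$ times a positive $\zsR$-like bracket plus $(1-a)$ times a positive bracket.

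\emph{Main obstacle.} I expect the hardest step to be the strict positivity of $\partial_d\zzV(a,d)$. Its numerator mixes $\log2$, $\log(1-d)$, $\log(1-a)$ and $\log(1+ad)$, and the cancellations are less transparent than for $\zsV$. As a first attempt I would write the derivative via the response form $\zzV=\tfrac{a}{1+a}\resp_d(1)+\tfrac1{1+a}\resp_d(-a)$ (with the Z orientation). I would then show that $\partial_d\resp_d(s)\ge0$ for each fixed $s$, with strict inequality for $s\ne0$, reducing the problem to a one-variable inequality in $ds$. If that fails, I would fall back on the degradation argument for weak monotonicity, and obtain strictness from the strict convexity of $\ff$, which makes data processing strict when the merged atoms differ.
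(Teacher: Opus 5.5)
\paragraph{Verdict.} There is a gap at exactly the step you flag: strict monotonicity of $d\mapsto\zzV(a,d)$ is not proved. Your response-function idea closes it and is a cleaner route than the paper's.

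\paragraph{What matches the paper.} The rest of your plan agrees with the paper:
\begin{itemize}
\item the computation of $\zsR'$ and the endpoint values;
\item $\zzV(a,1)=\zsR(a)$ by direct substitution, which is as good as the paper's ``$\Zch{1}$ is noiseless'';
\item $\partial_d\zsV=\bigl[\log(1+a)-\log(1-ad)\bigr]/(1+d)^2$, which is exactly the paper's formula. Your ``$\dots$'' there is in fact empty.
\end{itemize}

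\paragraph{The gap.} For $\zzV$ you give no proof, and the grouping you anticipate fails as stated. The actual numerator of $\partial_d\zzV$, over $(1+a)(1+d)^2$, is the paper's
\[
  \zzn_a(d)=2a\log\frac{2}{1-d}+(1-a)\log\frac{1-a}{1+ad}.
\]
Here $\log\frac{2}{1-d}$ carries the weight $2a$, not $1-a$, and the $(1-a)$-part is negative. So ``$a$ times a positive bracket plus $(1-a)$ times a positive bracket'' only works after moving a term $a(1+d)$ from one bracket to the other. The paper instead uses $(1-a)\log(1-a)\ge-a$, $\log(1+ad)\le ad$ and $-\log(1-d)\ge d$. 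These give $\zzn_a(d)\ge a(2\log2-1)+ad(1+a)>0$.

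\paragraph{How your response-function idea closes it.} Your own derivative formula, in the S-orientation, gives
\[
  (1+d)^2\,\partial_d\resp_d(s)=\ff(s)-\ff(-ds)-\ff'(-ds)\,(1+d)s=(1+s)\log\frac{1+s}{1-ds}-(1+d)s .
\]
This is the Bregman divergence of the strictly convex $\ff$ between $s$ and $-ds$; equivalently, it is $p\log\frac pq\ge p-q$ with $p=1+s$ and $q=1-ds$. Hence it is strictly positive for $s\ne0$.
\begin{itemize}
\item At $s\in\{1,-a\}$, via $\zzV=\tfrac{a}{1+a}\resp_d(1)+\tfrac1{1+a}\resp_d(-a)$, this yields $\partial_d\zzV>0$.
\item At $s\in\{-1,a\}$ it yields $\partial_d\zsV>0$.
\end{itemize}
In fact it shows that $d\mapsto\I(U;V)$ with $\chV=\Sch{d}$ is strictly increasing for every fixed non-degenerate $U$-side law. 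The paper, by contrast, treats the two closed forms separately. You should carry this out rather than leave it as an expectation.

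\paragraph{Two smaller corrections.}
\begin{itemize}
\item In the $\pm/\mp$ form of your general derivative, the last term must also carry the sign $\mp$. For the Z-orientation it enters with a plus sign.
\item In the degradation argument, the post-processing must move mass from the letter with bias $+1$ into the letter with bias $-d'$. Moving mass the other way contaminates the $+1$ atom with inputs $Y=1$ and does not produce $\Sch{d}$.
\end{itemize}
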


\begin{proof}
  All three functions are continuous on $[0,1]$ by their closed forms, the term
  $a(1-d)\log(1-d)$ of $\zzV$ vanishing at $d=1$ by the convention $0\log0=0$,
  and all three vanish at $0$. It therefore suffices to show that each
  derivative is positive on $(0,1)$; strict monotonicity on $[0,1]$ then follows
  by continuity.

  For $\zsR$, differentiating and using $(1+a)\artanh a-\fe(a)=-\log(1-a)$,
  which follows from \eqref{eq:fofe}, gives the stated derivative, which is
  positive. Differentiating the closed form of $\zsV$ gives
  $\partial_d\zsV(a,d)=\bigl[\log(1+a)-\log(1-ad)\bigr]/(1+d)^2>0$, and that
  of $\zzV$ gives
  $\partial_d\zzV(a,d)=\zzn_a(d)/\bigl((1+a)(1+d)^2\bigr)$ with
  \[
    \zzn_a(d)=(1-a)\log(1-a)-(1-a)\log(1+ad)+2a\log2-2a\log(1-d).
  \]
  The bounds $(1-a)\log(1-a)\ge-a$, $\log(1+ad)\le ad$ and $-\log(1-d)\ge d$,
  all instances of $\log x\le x-1$, give $\zzn_a(d)\ge a(2\log2-1)+ad(1+a)>0$.

  Finally $\zsR(1)=\bigl(\fe(1)+\log2\bigr)/2=\log2$, and at $d=1$ the channel
  $\Zch{1}$ is noiseless, so $V$ is a relabelling of $X$ and
  $\zzV(a,1)=\I(U;X)=\zsR(a)$.
\end{proof}

\paragraph{The conjectures.} \textcite{Dikshtein2022Double} define the DSIB
function $R(C_u,C_v)$ as the maximum of $\I(U;V)$ over test channels
$P_{U|X}$, $P_{V|Y}$ with $\I(X;U)\le C_u$ and $\I(Y;V)\le C_v$, where $U$ and
$V$ may take values in arbitrary finite alphabets. Their Conjecture~1 asserts
that for $p=0$ optimal test channels are a Z-channel and an S-channel.
Their Conjecture~2 asserts that, for $p=0$, test channels maximising
$\I(X;U,V)$ under $\I(X;U) \ge C_u$ and $\I(X;V) \ge C_v$ are both Z-channels. Since
$U-X-V$ is a Markov chain,
$\I(U;V)=\I(X;U)+\I(X;V)-\I(X;U,V)$~\cite[Remark~5]{Dikshtein2022Double}, so
under these equality constraints the conjecture is equivalent to the statement
that $\I(U;V)$ is minimised by a pair of Z-channels. The following two theorems
are the corresponding statements for binary $U$ and $V$.

\begin{theorem}\label{thm:c1}
  Let $a,d\in(0,1)$. For all binary channels $\chU,\chV$,
  \[
    \I(U;X)\le\zsR(a)\ \text{ and }\ \I(Y;V)\le\zsR(d)
    \ \Longrightarrow\ \I(U;V)\le\zsV(a,d).
  \]
\end{theorem}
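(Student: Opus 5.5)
We follow the outline of \cref{sub:strategy}. The plan is to establish existence and the structure of an optimiser, and then close with a duality certificate at the corners.

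\emph{Reduction and existence.} By \cref{rem:degenerate} and \eqref{eq:ST}, it suffices to maximise $\Val=\mathbb E[\ff(\bar S\bar T)]$ over pairs of mean-zero two-atom bias laws subject to $\mathbb E[\fe(S)]\le\zsR(a)$ and $\mathbb E[\fe(T)]\le\zsR(d)$. Degenerate channels give $\Val=0$. The parameter set (two channels, i.e.\ $[0,1]^4$) is compact and all three quantities are continuous, so a maximiser exists. By \cref{lem:mono} the target $\zsV(a,d)$ is positive. Strict monotonicity of $\zsV$ should also force both rate constraints to be active at an optimiser: if a constraint were slack, one would enlarge the corresponding channel's informativeness (for instance, move an atom outward) and increase $\Val$. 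I would make this precise only after the corner step, where it is needed.

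\emph{Interior optimisers are excluded.} Suppose all four atoms lie in $(-1,1)$. Lagrange multipliers $\mu,\nu\ge0$ then exist (\cref{lem:kkt}), and the stationarity conditions in the two skew directions force both skews to vanish, so the pair is a BSC pair (\cref{thm:B}). At a BSC pair, \cref{lem:hess} gives the Hessian of $\Val-\mu\Rte_U-\nu\Rte_V$ restricted to the constraint surface in the skew directions. Its determinant is negative by the saddle inequality (\cref{prop:saddle}), so one can move along the constraint surface, with both rates held fixed, and strictly increase $\Val$. Hence a maximiser has an atom at $\pm1$ on at least one side, i.e.\ one channel is a Z- or S-channel. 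By the two symmetries of \cref{sec:dsib} (source relabelling and side exchange), I may assume $\chV=\Sch{e}$ for some $e\in(0,1]$ with $\zsR(e)\le\zsR(d)$, so $e\le d$.

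\emph{Corner step via duality.} With $\chV=\Sch{e}$ fixed, $\Val=\mathbb E[\resp_e(S)]$ and $\Rte_U=\mathbb E[\fe(S)]$ are both linear in the law of $S$. Following \cref{lem:cert}, I would find $\lambda\ge0$ and $\beta$ such that
\[
  \resp_e(s)\le\beta+\lambda\,\fe(s)+\gamma s \quad\text{on }[-1,1],
\]
with equality at $s=-1$ and $s=a'$, where $a'$ is chosen so that $\zsR(a')$ equals the $U$-rate bound. The free linear term $\gamma s$ is allowed because $\mathbb E[S]=0$. Taking expectations gives $\Val\le\beta+\lambda\zsR(a)=\zsV(a,e)$, with equality for $\Zch{a}$ (\cref{lem:tight}). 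The multipliers are determined by bitangency at the two atoms. Positivity of the gap function should follow from a sign analysis of its second derivative, which is essentially rational in $s$. Monotonicity of $\zsV$ in $d$ (\cref{lem:mono}) then gives $\zsV(a,e)\le\zsV(a,d)$.

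It remains to handle the case where the corner channel is a Z-channel on the same orientation as the other side's choice. Up to symmetry, this means bounding by $\zzV$, and I would need $\zzV\le\zsV$ on the relevant range. This should come from the same certificate, because the response of $\Sch{e}$ already dominates the aligned configuration.

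\emph{Main obstacle.} The hardest step is the saddle inequality behind \cref{prop:saddle}: showing the Hessian determinant is negative uniformly in the BSC crossover parameters and the multipliers. This is where the paper's certified computations (\cref{lem:kernel} and \cref{lem:core}) enter. I would first try to reduce the inequality to the diagonal $s=t$ along hyperbolas $st=x$, and then prove the resulting one-variable inequality by series comparison near $0$ and interval checks in the middle range. A secondary difficulty is verifying the global corner certificate on all of $[-1,1]$ rather than only near its tangency points.
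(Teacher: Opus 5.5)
Your plan is correct and is essentially the paper's proof. It runs existence by compactness, then \cref{thm:B} and \cref{prop:saddle} to rule out interior optimisers, then the two symmetries to reduce to $\chV=\Sch{e}$ with $e\le d$, and finally the certificate of \cref{prop:certs} (value contact at the endpoint $s=-1$ and tangency at $s=a$, not a true bitangency) combined with \cref{lem:cert,lem:tight,lem:mono}. Two of your asides are unnecessary: the activity of the rate constraints is never used, and there is no leftover aligned case requiring $\zzV\le\zsV$, because the certificate already bounds $\mathbb E[\resp_e(S)]$ for every admissible $U$-side law at once.
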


\begin{theorem}\label{thm:c2}
  Let $a,d\in(0,1)$. For all binary channels $\chU,\chV$,
  \[
    \I(U;X)\ge\zsR(a)\ \text{ and }\ \I(Y;V)\ge\zsR(d)
    \ \Longrightarrow\ \I(U;V)\ge\zzV(a,d).
  \]
\end{theorem}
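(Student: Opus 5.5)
The plan is to follow the route outlined in \cref{sub:strategy}, specialised to minimisation. This avoids the saddle inequality, so no certified computation is needed.

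\textbf{Existence and active constraints.} By \cref{rem:degenerate} and \eqref{eq:ST} we may work with mean-zero bias laws having at most two atoms in $[-1,1]$. The admissible set, defined by $\mathbb E[\fe(S)]\ge\zsR(a)$ and $\mathbb E[\fe(T)]\ge\zsR(d)$, is compact, and $\Val=\mathbb E[\ff(\bar S\bar T)]$ is continuous on it, so a minimiser exists. Degenerate channels are excluded because they give rate $0<\zsR(a)$.

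Next I would show that at a minimiser both rate constraints may be taken active. Let $\chU$ be a non-degenerate channel with rate exceeding $\zsR(a)$, and deform it continuously towards a degenerate channel, for example by shrinking both atoms of $S$ by a common factor. The rate decreases continuously to $0$. I expect $\Val$ to decrease along this path as well, since $\I(U;V)\le\I(U;X)$ and data processing applies to degradation by a BSC, and shrinking the atoms is exactly such a degradation. Hence we may assume $\Rte_U=\zsR(a)$ and $\Rte_V=\zsR(d)$.

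\textbf{Interior minimisers.} Suppose all four atoms of a minimiser lie in $(-1,1)$. The first-order conditions of \cref{lem:kkt} then hold, and \cref{thm:B} shows that the pair is a BSC pair. By \cref{lem:hess}, the Hessian of the Lagrangian at a BSC pair, restricted to the two skew directions along the constraint surface, has negative diagonal entries. Moving in one skew direction therefore strictly decreases $\Val$ at fixed rates, so a BSC pair is not a local minimum. Consequently some atom equals $\pm1$, and one of the channels is a Z- or an S-channel. Using the two symmetries of \cref{sec:dsib}, source relabelling and exchange of sides, we may assume $\chV=\Sch{d'}$. Since the constraint is active, $\zsR(d')=\zsR(d)$, and strict monotonicity in \cref{lem:mono} gives $d'=d$.

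\textbf{Corner certificate.} With $\chV=\Sch{d}$ fixed, $\Val=\mathbb E[\resp_d(S)]$ and $\Rte_U=\mathbb E[\fe(S)]$ are linear in the law of $S$, and $\mathbb E[S]=0$. The aligned partner is $\Sch{a}$, with atoms $1$ and $-a$, and its value is $\ssV(a,d)=\zzV(a,d)$. I would look for constants $\alpha,\beta$ and a multiplier $\lambda\ge0$ such that
\[
  \resp_d(s)\ \ge\ \alpha+\beta s+\lambda\fe(s)\qquad\text{for all } s\in[-1,1],
\]
with equality at $s=-a$ and $s=1$. Taking expectations gives
\[
  \Val\ \ge\ \alpha+\lambda\Rte_U\ \ge\ \alpha+\lambda\zsR(a),
\]
and tightness at the two atoms of $\Sch{a}$ identifies the right-hand side with $\zzV(a,d)$; this is weak duality as in \cref{lem:cert,lem:tight}. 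The constants are determined by three conditions: tangency at the interior point $-a$, where value and derivative match, and equality at the endpoint $1$. The remaining requirements are the one-sided condition at $1$ and $\lambda\ge0$.

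The main obstacle is this pointwise inequality. Write $h:=\resp_d-\lambda\fe-\beta s-\alpha$. We need $h\ge0$ on $[-1,1]$, with a double zero at $-a$ and a zero at $1$. I would compute $h''=\resp_d''-\lambda/(1-s^2)$, where $\resp_d''(s)=\tfrac{d}{1+d}\tfrac1{1+s}+\tfrac{d^2}{1+d}\tfrac1{1-ds}$. Multiplying by $(1-s^2)(1-ds)$ turns the sign of $h''$ into the sign of a low-degree polynomial. I would then show that this polynomial changes sign at most twice, so that $h$ has at most the prescribed zeros. This reduces the certificate to elementary inequalities in $a,d$, most likely of the same $\log$/$\artanh$ type as \cref{lem:artanh}. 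The positivity of $\lambda$ should follow from the monotonicity in \cref{lem:mono}.
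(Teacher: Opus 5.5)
Your proposal is correct and follows the paper's scheme: existence, \cref{thm:B} for interior minimisers, the $\Fpp<0$ half of \cref{prop:saddle}, and the mirror certificate of \cref{prop:certs} together with \cref{lem:cert,lem:tight}. The one real variation is your preliminary reduction to active constraints.

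The paper never makes that reduction. It tolerates slack in two places: it takes $\gamma>0$ in part~(ii) of \cref{prop:saddle} so that both rates strictly increase, and in \cref{thm:D} it applies the certificate at some $d'\ge d$, then uses that $\zzV(a,\cdot)$ is increasing (\cref{lem:mono}), with $d'=1$ treated separately. Your reduction lets you move exactly along the constraint surface and forces $d'=d$. The price is one monotonicity fact, which is true but not for the reason you give. Scaling the atoms of $S$ by $\kappa$ amounts to pre-composing $\chU$ with a BSC on its \emph{input}, and data processing does not directly compare $\I(U;V)$ before and after such a pre-composition. Either of two justifications works. First, the same joint law of $(U,X)$ arises by garbling the output: keep $U$ with probability $\kappa$, otherwise redraw it from $\pi$. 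This gives a Markov chain $U_\kappa-U-X-V$. Second, directly, $\tfrac{\mathrm d}{\mathrm d\kappa}\mathbb E[\ff(\kappa\bar S\bar T)]=\mathbb E[\bar S\bar T\log(1+\kappa\bar S\bar T)]\ge0$, since $\mathbb E[\bar S\bar T]=0$.

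For the certificate, your plan is easier to carry out than you expect. Your form of $\resp_d''$ gives the cleared numerator
$\tfrac{d}{1+d}\bigl[(1-s)(1-ds)+d(1-s^2)\bigr]-\lambda(1-ds)=d(1-s)-\lambda(1-ds)$,
which is linear. So $h''$ has at most one zero on $(-1,1)$, and this is exactly what the zero count needs: with two sign changes, as you wrote, Rolle's theorem would still allow a fourth zero. With one, Rolle leaves only the zero at $-a$, which must be exactly double, and the zero at $1$. The sign is then fixed near $1$, because $h'(s)\to-\infty$ as $s\to1^-$ when $\lambda>0$.

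The paper argues instead through the convex/concave shape of $\widetilde\Dcert$. It locates the inflection point, $N(-a)\le0$, by an explicit logarithmic inequality; your count makes that inequality unnecessary. Your guess about $\lambda$ is also right: $\tilde\lambda_2=\partial_a\zzV(a,d)/\zsR'(a)=\zzn_d(a)\big/\bigl((1+d)\log\tfrac2{1-a}\bigr)$, with $\zzn$ from the proof of \cref{lem:mono} and its arguments exchanged. So $\tilde\lambda_2>0$ is precisely the monotonicity proved there.
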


By \cref{lem:mono}, every budget $C\in(0,\log2)$ (in nats) is of the form
$C=\zsR(a)$ for a unique $a\in(0,1)$, so the two theorems cover every budget,
and $\Zch{a}$ and $\Sch{d}$ meet the budgets $\zsR(a)$ and $\zsR(d)$ with
equality. Both bounds are attained, by $(\Zch{a},\Sch{d})$ and by
$(\Zch{a},\Zch{d})$, respectively.

\paragraph{Alphabet sizes.} \Cref{thm:c1,thm:c2} concern binary $U$ and $V$. For
the maximisation this is no restriction:
by~\cite[Prop.~3]{Dikshtein2022Double}, binary $U$ and $V$ suffice to attain
$R(C_u,C_v)$ for a doubly symmetric binary source, so \cref{thm:c1} together
with that proposition proves Conjecture~1. That proposition is cited, not
formalised. It concerns the maximisation and does not apply to the minimisation
of $\I(U;V)$; \cref{thm:c2} therefore proves Conjecture~2 for binary test
channels, and the case of larger output alphabets remains open.

\subsubsection{Structure of the proof}\label{sub:scheme}

Both theorems are proved in four steps.

\begin{enumerate}
\item[(A)] \emph{An optimiser exists.}
\item[(B)] \emph{An optimiser with interior, non-degenerate atoms is a BSC pair.}
\item[(C)] \emph{A BSC pair is neither a constrained maximiser nor a constrained
    minimiser.}
\item[(D)] \emph{If one side of an optimiser has an atom at $\pm1$, the
    conjectured bound holds.}
\end{enumerate}

Steps (A)--(C) imply that an optimiser is degenerate or has an atom at $\pm1$,
and step (D) treats the latter case. Steps (B) and (C) concern arbitrary
optimisers and do not use the form of the conjectured optimum.

\subsection{Steps (A) and (B): interior optimisers are symmetric}\label{sec:AB}

\paragraph{Step (A).} A binary channel is parametrised by its two crossover
probabilities, i.e.\ by a point of $[0,1]^2$. The rates and the value are
continuous functions of these parameters, so each rate constraint defines a
closed subset of $[0,1]^4$. The feasible set is therefore compact, and the value
attains its maximum and its minimum on it whenever the set is non-empty, which
is the case for the budgets of \cref{thm:c1,thm:c2}.

\paragraph{Step (B).} Let both channels be non-degenerate with interior atoms,
and write the atoms in the coordinates $\sigma=\artanh s$, each side described by
a centre and a half-width: the atoms of $S$ are
$s_1 = \tanh(m+\xiS),\, s_2 = -\tanh(m-\xiS) = \tanh(\xiS-m)$ and those of $T$ are
$t_1 = \tanh(n+\xiT),\, t_2 = -\tanh(n-\xiT) = \tanh(\xiT-n)$, with $m>\lvert\xiS\rvert$ and
$n>\lvert\xiT\rvert$. Call $\xiS,\xiT$ the two \emph{skews} and
$\xisum:=\xiS+\xiT$ their sum.

The first-order optimality conditions have the same form for the maximisation
and the minimisation. They are the bitangency condition
of~\cite[Sec.~IV.D]{Witsenhausen1975Conditional}; we state them as
\cref{lem:kkt} and include a proof.

\begin{lemma}[KKT]\label{lem:kkt}
  Fix one side and vary its two atoms; write $\Rte$ for that side's rate and $C$
  for its budget. Let the two atoms be distinct and interior, and let
  $\nabla\Rte\neq0$ there. If the pair maximises $\Val$ subject to $\Rte\le C$,
  or minimises $\Val$ subject to $\Rte\ge C$, then $\nabla\Val=\lambda\nabla\Rte$
  for some $\lambda\ge0$.
\end{lemma}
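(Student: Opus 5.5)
The plan is the standard Lagrange multiplier argument with a single inequality constraint, where the Mangasarian--Fromovitz or linear-independence qualification reduces to the assumption $\nabla\Rte\neq0$. I fix the other side's channel and view $\Val$ and $\Rte$ as functions of the two atoms $(s_1,s_2)$ of the varying side. The domain is the open set $\{(s_1,s_2)\in(-1,1)^2 : s_1 s_2<0\}$, on which the weights $\pi_1=-s_2/(s_1-s_2)$ and $\pi_2=s_1/(s_1-s_2)$ are smooth and positive. By \cref{prop:lr} with $\dlt=1$, both $\Val=\sum\pi_u\resp(s_u)$ (with $\resp(s)=\mathbb E[\ff(sT)]$ the response of the fixed side) and $\Rte=\sum\pi_u\fe(s_u)$ are $C^1$ there: the atoms are interior, the arguments of $\ff$ lie in $(-1,1)$, and $\fe$ is smooth on $(-1,1)$. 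Distinct interior atoms of a non-degenerate law have opposite signs, so the pair lies in this open set.

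\emph{Inactive constraint.} If $\Rte<C$ in the maximisation case (or $\Rte>C$ in the minimisation case), the point is an unconstrained local extremum of $\Val$ on an open set. Hence $\nabla\Val=0$, and the conclusion holds with $\lambda=0$.

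\emph{Active constraint.} Suppose $\Rte=C$. If $\nabla\Val$ is not a non-negative multiple of $\nabla\Rte$, I exhibit an improving direction. First, if $\nabla\Val$ is not parallel to $\nabla\Rte$ at all, a Farkas/separation argument in $\mathbb R^2$ gives a vector $h$ with $\langle\nabla\Rte,h\rangle<0$ and $\langle\nabla\Val,h\rangle>0$. Second, if $\nabla\Val=\lambda\nabla\Rte$ with $\lambda<0$, I take $h=-\nabla\Rte$. In either case, for small $\varepsilon>0$, Taylor's theorem applied to the $C^1$ functions gives $\Rte(x+\varepsilon h)<C$ and $\Val(x+\varepsilon h)>\Val(x)$. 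The perturbed point is still in the open domain and still a valid non-degenerate bias law, hence corresponds to a channel by \cref{rem:degenerate}. This contradicts maximality. For the minimisation with $\Rte\ge C$, I apply the same argument to $-\Val$ and $-\Rte$, which leaves the sign of $\lambda$ unchanged.

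I expect the main obstacle to be the bookkeeping of feasibility rather than any analysis. One must check that the perturbed atoms still define a channel satisfying the constraint; this holds because the parametrisation by atoms is a bijection onto an open set. One must also check that $\Val$ is differentiable even though $\ff$ has a singularity at $-1$; this holds because the arguments $s_u t_v$ stay in $(-1,1)$ whenever $s_u$ is interior. The case split into $\nabla\Val\parallel\nabla\Rte$ and the non-parallel case is elementary in two dimensions.
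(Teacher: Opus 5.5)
Your proposal is correct and follows essentially the paper's argument: both perturb the atoms along a direction $h$ with $\nabla\Rte\cdot h<0$, which keeps the atoms distinct and interior (hence a valid non-degenerate channel) and keeps the constraint satisfied, and both use optimality to exclude $\nabla\Val\cdot h>0$, concluding by Farkas' lemma, which you prove by hand in $\mathbb R^2$. The paper's version is slightly leaner because rate-decreasing directions are feasible whether or not the constraint is active, so your split into active and inactive cases is unnecessary, though harmless.
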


\begin{proof}
  Let $d\in\mathbb R^2$ with $\nabla\Rte\cdot d<0$. For small $\varepsilon>0$
  the atoms moved by $\varepsilon d$ remain distinct and interior, hence define
  a non-degenerate channel by \cref{rem:degenerate}, and $\Rte$ strictly
  decreases along this segment, so the constraint $\Rte\le C$ remains
  satisfied. If $\nabla\Val\cdot d>0$, the value would strictly increase along
  the segment, contradicting maximality; hence $\nabla\Val\cdot d\le0$. For a
  minimiser under $\Rte\ge C$, the same argument applied to directions with
  $\nabla\Rte\cdot d>0$ gives $\nabla\Val\cdot d\ge0$, and replacing $d$ by
  $-d$ yields the same implication.

  In both cases, $\nabla\Rte\cdot d<0$ implies $\nabla\Val\cdot d\le0$ for
  every $d\in\mathbb R^2$. By continuity this extends to all $d$ with
  $\nabla\Rte\cdot d\le0$, and since $\nabla\Rte\neq0$, Farkas' lemma gives
  $\nabla\Val=\lambda\nabla\Rte$ for some $\lambda\ge0$.
\end{proof}

To see that~\cref{lem:kkt} is a bitangency condition, fix the $V$-side and write
$\resp$ for its \emph{response}, $\resp(s):=\mathbb E[\ff(s\,T)]$, a fixed
function of one variable; then $\Val=\mathbb E[\resp(S)]$ and
$\Rte_U=\mathbb E[\fe(S)]$, so the $U$-side enters the value only through
$\resp$ and its own budget only through the rate integrand $\fe$.

The weights here are determined by the atoms: for a mean-zero law on two
atoms $s_2<0<s_1$, zero mean forces
\[
  \pi_1=\frac{-s_2}{s_1-s_2},\qquad \pi_2=\frac{s_1}{s_1-s_2},
\]
so $\Val$ and $\Rte_U$ are functions of the pair $(s_1,s_2)$ alone, and the
gradients in \cref{lem:kkt} are taken with respect to $(s_1,s_2)$.

For \emph{any} $\psi$ we calculate
\begin{equation}\label{eq:chord}
  \mathbb E[\psi(S)] = \pi_1\psi(s_1)+\pi_2\psi(s_2)
  =\frac{s_1\psi(s_2)-s_2\psi(s_1)}{s_1-s_2}
  =\bigl(\chord_{[s_2,s_1]}\psi\bigr)(0),
\end{equation}
where $\chord_{[y_1,y_2]}\psi$ denotes the affine function that agrees with
$\psi$ at $y_1$ and $y_2$. Thus a mean-zero two-atom average is the value at the
origin of the secant through the two corresponding points of the graph. Taking
$\psi=\resp$ gives $\Val$ and $\psi=\fe$ gives $\Rte_U$, so with
$\psi:=\resp-\lambda\fe$ the conclusion $\nabla\Val=\lambda\nabla\Rte_U$ of
\cref{lem:kkt} reads
$\nabla_{(s_1,s_2)}\bigl(\chord_{[s_2,s_1]}\psi\bigr)(0)=0$.
Differentiating \eqref{eq:chord},
\begin{equation}\label{eq:diff_chord}
  \frac{\partial}{\partial s_1}\bigl(\chord_{[s_2,s_1]}\psi\bigr)(0)
  =\frac{-s_2}{(s_1-s_2)^2}\Bigl[\psi(s_2)-\psi(s_1)-\psi'(s_1)(s_2-s_1)\Bigr],
\end{equation}
and since $s_2\neq0$, the bracket vanishes, i.e.\ $\psi'(s_1)$ equals the slope
of the secant. The derivative with respect to $s_2$ gives the same for
$\psi'(s_2)$. Hence the secant $\chord_{[s_2,s_1]}\psi$ is tangent to $\psi$ at
both endpoints. Identity~\eqref{eq:diff_chord} also shows that
$\nabla\Rte\neq0$, as required in~\cref{lem:kkt}: for $\psi=\fe$ the bracket
is positive by strict convexity of $\fe$, so $\partial_{s_1}\Rte_U\neq0$ for
every pair of distinct interior atoms, and the same holds for $\Rte_V$.

In the coordinates $\sigma=\artanh s$ we define $\Lcosh:=\log\cosh$ and
$\Gresp(\sigma):=\resp(\tanh\sigma)$. The difference between
$\chord_{[s_2,s_1]}\psi$ and $\psi$ is then
\[
  \Dch(\sigma):=\bigl(\chord_{[s_2,s_1]}\psi-\psi\bigr)(\tanh\sigma) = 
  \bigl(\chord_{[s_2,s_1]}\psi\bigr)(\tanh\sigma)
  +\lambda\bigl(\sigma\tanh\sigma-\Lcosh(\sigma)\bigr)-\Gresp(\sigma) ,
\]
noting that $\fe(\tanh\sigma)=\sigma\tanh\sigma-\Lcosh(\sigma)$.
The derivatives of the even and strictly convex function $\Lcosh$ are $\Lcosh'=\tanh$ and
$\Lcosh''=\sech^2$. Furthermore we have
\[
  \Gresp'(\sigma)=\kfac\,\bigl[\Mfun_n(\sigma+\xiT)-\Mfun_n(\xiT)\bigr]\sech^2\sigma,
\]
where $\Mfun_c(w):=\Lcosh(w+c)-\Lcosh(w-c)$ and $\kfac:=\rho_1t_1=-\rho_2t_2>0$.

By construction and by the tangency just established, $\Dch$ and $\Dch'$ vanish at
$\sigma=\xiS\pm m$, the coordinates of $s_1$ and $s_2$; in particular,
\begin{align}
  \label{eq:residual0}
  \int_{\xiS-m}^{\xiS+m}\Dch'(\sigma)\,\mathrm d\sigma = \Dch(\xiS+m)-\Dch(\xiS-m) = 0.
\end{align}
Writing $\beta_\psi$ for the slope of $\chord_{[s_2,s_1]}\psi$, differentiating $\Dch$ yields
\begin{align*}
  \Dch'(\sigma) &= \beta_\psi \sech^2 \sigma 
                  +\lambda (\tanh\sigma + \sigma \sech^2\sigma - \tanh\sigma)
                  - \Gresp'(\sigma) \\
                &= \beta_\psi  \sech^2 \sigma 
                  +\lambda \sigma \sech^2 \sigma
                  - \Gresp'(\sigma) \\
                &= \bigg(\beta_\psi
                  +\lambda \sigma 
                  - \kfac\,\bigl[\Mfun_n(\sigma+\xiT)-\Mfun_n(\xiT)\bigr] \bigg) \sech^2 \sigma\\
                &= \kfac \bigg( (\chord_{[\xisum-m,\xisum+m]}\Mfun_n)(\sigma + \xiT)
                  - \Mfun_n(\sigma+\xiT) \bigg) \sech^2 \sigma,
\end{align*}
where the last equality uses $\Dch'(\xiS\pm m)=0$. Indeed, the bracket is an
affine function of $\sigma$ minus $\kfac\,\Mfun_n(\sigma+\xiT)$. It vanishes at
both endpoints of the \emph{window} $[\xiS-m,\xiS+m]$, which the shift
$w=\sigma+\xiT$ maps to $[\xisum-m,\xisum+m]$. Hence the affine function equals
$\kfac$ times the secant of $\Mfun_n$ over $[\xisum-m,\xisum+m]$.

Dividing \eqref{eq:residual0} by $-\kfac$ and substituting $w=\sigma+\xiT$ shows that the \emph{residual}
\begin{equation}\label{eq:RS}
  \RS:=\int_{\xisum-m}^{\xisum+m}\Bigl[\Mfun_n(w)-\chord_{[\xisum-m,\xisum+m]}\Mfun_n(w)\Bigr]
  \sech^2(w-\xiT)\,\mathrm dw
\end{equation}
vanishes at an optimiser. The same argument applied to the $V$-side shows that
the residual obtained by the interchange $(m,\xiS)\leftrightarrow(n,\xiT)$,
\begin{equation}\label{eq:RT}
  \RT:=\int_{\xisum-n}^{\xisum+n}\Bigl[\Mfun_m(w)-\chord_{[\xisum-n,\xisum+n]}\Mfun_m(w)\Bigr]
  \sech^2(w-\xiS)\,\mathrm dw ,
\end{equation}
vanishes as well.

\begin{proposition}\label{prop:star}
  If $m,n>0$, $\xiT>0$ and $\xisum\ge0$, then $\RS>0$.
\end{proposition}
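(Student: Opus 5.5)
The plan is to exploit the shape of $\Mfun_n$. It is odd, since $\Lcosh$ is even. Its derivative $\Mfun_n'(w)=\tanh(w+n)-\tanh(w-n)$ is positive and even, and it is strictly decreasing in $|w|$ for $n>0$. Equivalently, $\Mfun_n''(w)=\sech^2(w+n)-\sech^2(w-n)$ is positive for $w<0$ and negative for $w>0$, so $\Mfun_n$ is strictly convex on $(-\infty,0]$ and strictly concave on $[0,\infty)$. Write $g(w):=\Mfun_n(w)-\chord_{[\xisum-m,\xisum+m]}\Mfun_n(w)$ and $W(w):=\sech^2(w-\xiT)$, so that $\RS=\int g\,W$.

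\emph{Step 1: one sign change.} A function that is convex and then concave differs from an affine function in at most three zeros. Since $g$ vanishes at both ends of the window, it has at most one interior zero $z_0$. On the concave part $g\ge0$. Hence either $g\ge0$ on the whole window, or $g<0$ on $(\xisum-m,z_0)$ and $g>0$ on $(z_0,\xisum+m)$. In particular, if $\xisum\ge m$ the window lies in $[0,\infty)$. Then $g\ge0$ and $g\not\equiv0$, and $\RS>0$ is immediate.

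\emph{Step 2: reflection about the window centre.} Substitute $w=\xisum\pm u$ with $u\in[0,m]$. The chord satisfies $\chord(\xisum+u)+\chord(\xisum-u)=\Mfun_n(\xisum+m)+\Mfun_n(\xisum-m)$. Therefore
\[
  g(\xisum+u)+g(\xisum-u)=\bigl[\Mfun_n(\xisum+u)+\Mfun_n(\xisum-u)\bigr]-\bigl[\Mfun_n(\xisum+m)+\Mfun_n(\xisum-m)\bigr].
\]
The bracket $u\mapsto\Mfun_n(\xisum+u)+\Mfun_n(\xisum-u)$ has derivative $\Mfun_n'(\xisum+u)-\Mfun_n'(\xisum-u)\le0$, because $|\xisum+u|\ge|\xisum-u|$ when $\xisum,u\ge0$ and $\Mfun_n'$ decreases in $|w|$. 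Hence $g(\xisum+u)\ge-g(\xisum-u)$: on the reflected side, the positive lobe dominates the negative lobe pointwise.

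\emph{Step 3: the weight.} Pair the contributions at $\xisum-u$ and $\xisum+u$:
\[
  \RS=\int_0^m\bigl[g(\xisum+u)W(\xisum+u)+g(\xisum-u)W(\xisum-u)\bigr]\mathrm du.
\]
When $\xiS\le0$ we have $|\xisum+u-\xiT|=|\xiS+u|\le|\xiS-u|$, so $W(\xisum+u)\ge W(\xisum-u)$. Combined with Step 2, each paired term is then $\ge0$. Strictness comes from strict monotonicity of $\Mfun_n'$ when $\xisum>0$. When $\xisum=0$, $g$ is odd and strictly positive on $(0,m)$, and $W(u)>W(-u)$ because $\xiT>0$.

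The remaining case $\xiS>0$ is the main obstacle. Here $\xiT<\xisum$, so the peak of $W$ lies to the left of the window centre and the naive pairing fails. First I would use Step 1 together with a Chebyshev-type argument. Since $g$ changes sign once, from $-$ to $+$, at $z_0$, it suffices that $W(w)/W(z_0)$ is $\le1$ to the left of $z_0$ and $\ge1$ to the right, after a suitable re-pairing. Concretely, I would reflect about $\xiT$ instead of $\xisum$, on the part of the window where that is legal. For points right of $\xiT$ I would use the convexity/concavity split: the negative lobe of $g$ lies inside $(\xisum-m,\min(z_0,0)]$, on the convex side, which is to the left of $\xiT>0$. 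Reflected about $\xiT$, these points land in the concave region, where $g\ge0$. I would then show that the reflected positive mass, weighted by the same value of $W$, still dominates; this is an inequality of the same type as Step 2, using that $\Mfun_n'$ is decreasing in $|w|$. If this bookkeeping does not close, the fallback is to write $W=W(\xisum-m)+(W-W(\xisum-m))$, split the window at $\xiT$, and treat each piece by Step 2 and Step 1 respectively.
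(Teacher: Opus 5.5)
You handle $\xisum\ge m$ and $\xiS\le0$ correctly in Steps 1--3, modulo the point in the next paragraph. The case $\xiS>0$, i.e.\ $0<\xiT<\xisum<m$, is not proved: you give a plan with a fallback, and this is the case where the weight works against you.

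The Chebyshev-type comparison with $\sech^2(z_0-\xiT)$ cannot work as stated. The window extends $m+\xiS$ to the right of $\xiT$, while $z_0$ lies at most $m-\xiS$ to its left. So near the right endpoint, where $g>0$, the weight is smaller than at $z_0$.

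Your second idea, reflecting about $\xiT$, does close, and the missing computation is short. With $q:=m-\xiS\in(0,m+\xiS)$,
\[
  \RS=\int_0^{q}\bigl[g(\xiT+v)+g(\xiT-v)\bigr]\sech^2v\,\mathrm dv
  +\int_{\xiT+q}^{\xisum+m}g(w)\sech^2(w-\xiT)\,\mathrm dw .
\]
Since the chord is affine, $\chord(\xiT+v)+\chord(\xiT-v)=2\chord(\xiT)$ does not depend on $v$. Hence $\Sigma(v):=g(\xiT+v)+g(\xiT-v)$ satisfies $\Sigma'(v)=\Mfun_n'(\xiT+v)-\Mfun_n'(\xiT-v)<0$ for $v>0$; this is where $\xiT>0$ is used. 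At the end, $\Sigma(q)=g(\xiT+q)+g(\xisum-m)=g(\xiT+q)\ge0$. Thus $\Sigma>0$ on $[0,q)$ and the first integral is positive. The second integral is non-negative because $\xiT+q=2\xiT-\xisum+m>0$. Unlike in Step 2, the chord terms do not collapse to the endpoint values. What replaces this is that the terminal value of $\Sigma$ is $g$ at the reflection of the left endpoint, which lies where $g\ge0$.

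This argument, and also Step 3 (which needs $g(\xisum+u)\ge0$ to exploit the larger weight on the right), rely on your claim that $g\ge0$ on the concave part. You do not prove it, and it does not follow from counting zeros. It needs $g(0)\ge0$. For $0\le\xisum<m$,
\[
  g(0)=-\chord(0)=\frac{(m+\xisum)\Mfun_n(m-\xisum)-(m-\xisum)\Mfun_n(m+\xisum)}{2m}\ge0,
\]
because $y\mapsto\Mfun_n(y)/y$ is decreasing on $(0,\infty)$, as $\Mfun_n$ is concave there with $\Mfun_n(0)=0$. Concavity on $[0,\xisum+m]$ then gives $g\ge0$ there. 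This is where $\xisum\ge0$ enters beyond Step 2.

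With these two additions you have a complete proof, different from the paper's. The paper uses Green's identity to move the curvature onto the weight, $\RS=\int\Mfun_n''\Kwin$, with $\Kwin<0$ the chord defect of $\Lcosh(\cdot-\xiT)$. It then treats all $0\le\xisum<m$ at once by folding about the origin and proving $\Mfun_{\xiT}(t)/t>\Mfun_m(\xiT-\xisum)/m$. Your route works on the residual integrand directly and avoids the integration by parts. The cost is a case split on the sign of $\xiS$: reflect about the centre $\xisum$ when $\xiT\ge\xisum$, and about $\xiT$ otherwise.
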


\begin{proof}
  Write $\Wwin:=[\xisum-m,\xisum+m]$ for the window and abbreviate
  \[
    \Fwin:=\Mfun_n-\chord_{\Wwin}\Mfun_n,\qquad
    \Kwin(\sigma):=\Lcosh(\sigma-\xiT)
      -\bigl(\chord_{\Wwin-\xiT}\Lcosh\bigr)(\sigma-\xiT),
  \]
  so that $\RS=\int_{\Wwin}\Fwin(w)\sech^2(w-\xiT)\,\mathrm dw$ and both $\Fwin$
  and $\Kwin$ vanish at the two endpoints of $\Wwin$.

  \emph{Step 0: properties of $\Mfun$.} Since $\Lcosh$ is even,
  \begin{equation}\label{eq:Msym}
    \Mfun_c(y)=\Lcosh(y+c)-\Lcosh(y-c)=\Mfun_y(c),\qquad \Mfun_c(-y)=-\Mfun_c(y),
  \end{equation}
  so $\Mfun$ is symmetric in its two arguments and odd in each. Fix $c>0$. As
  $\Lcosh'=\tanh$ is strictly increasing,
  \begin{equation}\label{eq:Minc}
    \Mfun_c'(y)=\tanh(y+c)-\tanh(y-c)>0\qquad(y\in\R),
  \end{equation}
  so $\Mfun_c$ is strictly increasing on all of $\R$. Now let $y>0$ as well, so that
  $\lvert y+c\rvert>\lvert y-c\rvert$. Since $\Lcosh''=\sech^2$ is even and
  strictly decreasing on $[0,\infty)$, this comparison gives
  \begin{equation}\label{eq:Mconc}
    \Mfun_c''(y)=\sech^2(y+c)-\sech^2(y-c)<0 ,
  \end{equation}
  so $\Mfun_c$ is strictly concave on $[0,\infty)$. As $\Mfun_c(0)=0$, for $0<y_1<y_2$
  strict concavity applied to
  $y_1=\tfrac{y_1}{y_2}\,y_2+\bigl(1-\tfrac{y_1}{y_2}\bigr)\cdot0$ yields
  $\Mfun_c(y_1)>\tfrac{y_1}{y_2}\Mfun_c(y_2)$, i.e.
  \begin{equation}\label{eq:Mslope}
    y\mapsto \Mfun_c(y)/y \quad\text{is strictly decreasing on }(0,\infty).
  \end{equation}

  \emph{Step 1: Green's identity.} Since secants are affine,
  $\Kwin''(\sigma)=\sech^2(\sigma-\xiT)$ and $\Fwin''=\Mfun_n''$, whence
  $\RS=\int_{\Wwin}\Fwin\Kwin''$. The Wronskian $\Fwin\Kwin'-\Fwin'\Kwin$ is an
  antiderivative of $\Fwin\Kwin''-\Fwin''\Kwin$ and vanishes at both endpoints
  of $\Wwin$, because $\Fwin$ and $\Kwin$ do. Therefore
  \begin{equation}\label{eq:green}
    \RS=\int_{\Wwin}\Fwin''\Kwin=\int_{\xisum-m}^{\xisum+m}\Mfun_n''(w)\,\Kwin(w)\,\mathrm dw .
  \end{equation}
  Since $\Lcosh$ is strictly convex, it lies strictly below its secant in the
  interior of the interval:
  \begin{equation}\label{eq:Kneg}
    \Kwin<0\ \text{on}\ (\xisum-m,\xisum+m),\qquad \Kwin(\xisum\pm m)=0 .
  \end{equation}

  \emph{Step 2: the case $\xisum\ge m$.} Here $\Wwin\subset[0,\infty)$. By \eqref{eq:Mconc}
  with $c=n>0$ we have $\Mfun_n''<0$ on $(0,\infty)$, so the integrand of
  \eqref{eq:green} is the product of two factors that are strictly negative on the
  interior of $\Wwin$. Since $\Wwin$ has positive length, $\RS>0$.

  \emph{Step 3: the case $0\le\xisum<m$.} Now $\xisum-m=-(m-\xisum)$ with $m-\xisum>0$, so $\Wwin$
  splits at $\pm(m-\xisum)$ into a part symmetric about the origin and the remainder
  $[m-\xisum,m+\xisum]$. On the symmetric part, oddness of $\Mfun_n''$ gives
  $\int_{-(m-\xisum)}^{m-\xisum}\Mfun_n''\Kwin=\int_0^{m-\xisum}\Mfun_n''(t)[\Kwin(t)-\Kwin(-t)]\,\mathrm dt$,
  so \eqref{eq:green} becomes
  \begin{equation}\label{eq:fold}
    \RS=\int_0^{m-\xisum}\Mfun_n''(t)\bigl[\Kwin(t)-\Kwin(-t)\bigr]\mathrm dt
    +\int_{m-\xisum}^{m+\xisum}\Mfun_n''(t)\Kwin(t)\,\mathrm dt .
  \end{equation}
  The second integrand is non-negative, being a product of the two non-positive
  factors $\Mfun_n''$ and $\Kwin$ on $[m-\xisum,m+\xisum]\subset[0,\infty)$. Since $\Mfun_n''<0$ on
  $(0,m-\xisum)$, it therefore suffices to prove
  \begin{equation}\label{eq:foldgoal}
    \Kwin(t)-\Kwin(-t)<0\qquad(0<t<m-\xisum),
  \end{equation}
  which makes the first integral strictly positive.

  \emph{Step 4: the reflected difference.} Because $\xisum-\xiT=\xiS$, the chord of
  $\Lcosh(\cdot-\xiT)$ over $\Wwin$ has slope
  \[
    \frac{\Lcosh(\xisum+m-\xiT)-\Lcosh(\xisum-m-\xiT)}{2m}
    =\frac{\Lcosh(\xiS+m)-\Lcosh(\xiS-m)}{2m}=\frac{\Mfun_m(\xiS)}{2m},
  \]
  so that, $\Lcosh$ being even,
  \[
    \Kwin(t)-\Kwin(-t)=\Lcosh(t-\xiT)-\Lcosh(t+\xiT)-t\,\frac{\Mfun_m(\xiS)}{m}
    =-\Mfun_{\xiT}(t)-t\,\frac{\Mfun_m(\xiS)}{m},
  \]
  both $\pm t$ lying in $\Wwin$ for $0\le t\le m-\xisum$. As $\Mfun_m$ is odd and
  $-\xiS=\xiT-\xisum$, claim \eqref{eq:foldgoal} is thus
  \begin{equation}\label{eq:foldgoal2}
    \frac{\Mfun_{\xiT}(t)}{t}\ >\ \frac{\Mfun_m(-\xiS)}{m}=\frac{\Mfun_m(\xiT-\xisum)}{m}
    \qquad(0<t<m-\xisum).
  \end{equation}
  Since $\xisum\ge0$ we have $0<t<m-\xisum\le m$, so
  \eqref{eq:Mslope}, applied with $c=\xiT>0$, gives
  \[
    \frac{\Mfun_{\xiT}(t)}{t}\ >\ \frac{\Mfun_{\xiT}(m)}{m}=\frac{\Mfun_m(\xiT)}{m},
  \]
  the equality being the symmetry \eqref{eq:Msym}. Finally $\xiT-\xisum\le\xiT$,
  again because $\xisum\ge0$, so \eqref{eq:Minc} gives
  $\Mfun_m(\xiT-\xisum)\le \Mfun_m(\xiT)$ and \eqref{eq:foldgoal2} follows. Hence
  \eqref{eq:foldgoal} holds, the first integral of \eqref{eq:fold} is strictly
  positive, and $\RS>0$.
\end{proof}

\begin{theorem}\label{thm:B}
  Let $(\chU,\chV)$ be a maximiser for \cref{thm:c1} or a minimiser for
  \cref{thm:c2}. If both channels are non-degenerate and all four atoms lie in
  $(-1,1)$, then $(\chU,\chV)$ is a pair of binary symmetric channels.
\end{theorem}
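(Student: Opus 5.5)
The plan is to show that both skews vanish, $\xiS=\xiT=0$. By \cref{lem:bsc-bias} this is exactly the statement that the two atoms on each side are antipodal, i.e.\ that both channels are BSCs. Since all atoms are interior and distinct, \cref{lem:kkt} applies on each side. The remark after \eqref{eq:diff_chord} guarantees $\nabla\Rte\neq0$, so the hypothesis of \cref{lem:kkt} is met. Hence both bitangency conditions hold, and by the derivation preceding \cref{prop:star} both residuals vanish: $\RS=0$ and $\RT=0$. The half-widths satisfy $m>|\xiS|\ge0$ and $n>|\xiT|\ge0$, so in particular $m,n>0$.

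\textbf{Reduction by symmetry.} Suppose, for contradiction, that $(\xiS,\xiT)\neq(0,0)$. The relabelling $X\mapsto X\oplus1$ (with $Y=X$) negates all biases. In the $\sigma$-coordinates this maps $(m,\xiS,n,\xiT)$ to $(m,-\xiS,n,-\xiT)$ and preserves optimality. We may therefore assume $\xisum=\xiS+\xiT\ge0$, and moreover $\xisum>0$ unless $\xiS=-\xiT$.

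\textbf{Case $\xisum\ge0$ with some positive skew.} If $\xiT>0$, \cref{prop:star} gives $\RS>0$, contradicting $\RS=0$. If instead $\xiS>0$, apply \cref{prop:star} with the roles of the sides interchanged via $(m,\xiS)\leftrightarrow(n,\xiT)$. This swap is exactly the exchange of sides, which the residual $\RT$ inherits by its definition \eqref{eq:RT}. The result is $\RT>0$, again a contradiction. When $\xisum\ge0$ and the skews are not both zero, at least one of them is positive, so these two subcases exhaust the situation.

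\textbf{The delicate case.} If $\xisum\ge0$ forces both skews to be non-positive, then both vanish, and nothing remains to show. The point needing care is that the sign normalisation $\xisum\ge0$ is compatible with the hypotheses of \cref{prop:star}. If $\xisum<0$, negating both skews makes $\xisum>0$, and then some skew is positive. If $\xisum=0$ with $\xiS=-\xiT\neq0$, one of the two skews is positive, and \cref{prop:star} applies to that side with $\xisum=0\ge0$.

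I expect the main obstacle to be verifying rigorously that the sign-flip symmetry and the side-exchange transform the residuals $\RS$ and $\RT$ as claimed. This relies on $\Mfun$ being odd and $\Lcosh$ being even, so that the vanishing conditions are preserved. It also requires that the KKT multiplier structure is the same whether one is maximising or minimising, which \cref{lem:kkt} already provides.
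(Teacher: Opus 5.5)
Your proof is correct. It uses the same ingredients as the paper, namely the vanishing of $\RS$ and $\RT$ at an optimiser and \cref{prop:star} together with its side-exchanged version, but it handles the sign symmetry differently.

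The paper proves the analytic identity $\RS(m,n,-\xiS,-\xiT)=-\RS(m,n,\xiS,\xiT)$ by reflecting the window, using that $\Mfun_n$ is odd and $\sech^2$ is even. From \cref{prop:star} it then derives the four implications $\xiT>0\Rightarrow\xisum<0$, $\xiT<0\Rightarrow\xisum>0$ and their analogues for $\xiS$, and finishes with a short case chase.

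You instead use the source relabelling $X\mapsto X\oplus1$ from the symmetries paragraph of \cref{sec:dsib}. The flipped pair has the same rates and value, so it is again an optimiser with non-degenerate interior atoms, and its skews are $(-\xiS,-\xiT)$. Hence you may assume $\xisum\ge0$. Then a positive $\xiT$ or $\xiS$ contradicts \cref{prop:star} or its side-swapped form directly, and otherwise both skews vanish. This is shorter than the paper's case chase.

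It also needs no oddness computation. The obstacle you anticipate is therefore absent from your own route. The residuals of the flipped pair vanish because the KKT derivation applies to that pair, which is itself an optimiser; you never need to transform $\RS$ under the sign flip.

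What the paper's version buys is a sign statement about the function $\RS$ for all parameter values, independent of optimality. Yours relies instead on the invariance of the optimisation problem, which the paper has already recorded.
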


\begin{proof}
  Write $\RS=\RS(m,n,\xiS,\xiT)$ for the residual \eqref{eq:RS}. It is odd under
  a simultaneous sign flip of the two skews:
  \begin{equation}\label{eq:Rodd}
    \RS(m,n,-\xiS,-\xiT)=-\RS(m,n,\xiS,\xiT).
  \end{equation}
  Indeed, flipping both skews replaces $\xisum$ by $-\xisum$, hence the window
  $[\xisum-m,\xisum+m]$ by its reflection $[-\xisum-m,-\xisum+m]$. Substituting $w=-v$ carries
  that reflected window back to the original one, the reversal of orientation
  compensating the sign of $\mathrm dv$, and the integrand changes sign: since
  $\Mfun_n$ is odd, its secant over the reflected window is the reflection of its
  secant over the original one, so that
  $\Mfun_n(-v)-\chord_{[-\xisum-m,-\xisum+m]}\Mfun_n(-v)
  =-\bigl[\Mfun_n(v)-\chord_{[\xisum-m,\xisum+m]}\Mfun_n(v)\bigr]$, while the
  weight is unchanged because $\sech^2$ is even.

  By \eqref{eq:Rodd}, \cref{prop:star} also gives $\xiT<0,\ \xisum\le0\Rightarrow
  \RS<0$. Interchanging $(m,\xiS)\leftrightarrow(n,\xiT)$ exchanges \eqref{eq:RS}
  and \eqref{eq:RT} and leaves $\xisum$ fixed, so both statements hold for $\RT$ with
  $\xiS$ in place of $\xiT$. At an optimiser both residuals vanish, whence
  \[
    \xiT>0\Rightarrow\xisum<0,\qquad \xiT<0\Rightarrow\xisum>0,
  \]
  and the same two implications with $\xiS$ in place of $\xiT$.

  Suppose $\xiT>0$. Then $\xisum<0$, so $\xiS=\xisum-\xiT<0$, which forces
  $\xisum>0$, a contradiction. Suppose $\xiT<0$. Then $\xisum>0$, so
  $\xiS=\xisum-\xiT>0$, which forces $\xisum<0$, again a contradiction. Hence $\xiT=0$ and $\xisum=\xiS$; but
  $\xiS>0$ would give $\xiS = \xisum<0$ and $\xiS<0$ would give $\xiS = \xisum >0$, also forcing $\xiS=0$.
  Thus $\xiS=\xiT=0$, so the atoms on each side are antipodal, $\pm\tanh m$ and
  $\pm\tanh n$; by \cref{lem:bsc-bias} the pair is a pair of BSCs.
\end{proof}

\subsection{Step (C): a symmetric pair is not optimal}\label{sec:C}

Let the BSC pair have biases $\pm s$ and $\pm t$, $s,t\in(0,1)$, and put
$x:=st$.

\paragraph{Coordinates.} By \cref{lem:bsc-bias}, a BSC corresponds to a
mean-zero two-atom law with antipodal atoms. We perturb the symmetric pair by
shifting the centre of each side's pair of atoms:
\begin{equation}\label{eq:pq}
  s_1=\eta_1+s,\quad s_2=\eta_1-s,\qquad t_1=\eta_2+t,\quad t_2=\eta_2-t ,
\end{equation}
with the weights determined by the zero-mean condition. For fixed half-widths
$s$ and $t$ this is a two-parameter family of channel pairs; the symmetric pair
corresponds to $\eta_1=\eta_2=0$ and is, by \cref{lem:bsc-bias}, the only BSC
pair in the family. To first order, the directions $\eta_1$ and $\eta_2$
coincide with the skew directions of \cref{sec:AB}. They are the directions in
which the rates are stationary; the remaining two directions, which change the
half-widths $s$ and $t$, change the rates to first order and determine the
multipliers.

\begin{lemma}\label{lem:hess}
  Let $s,t\in(0,1)$ and $x=st$. Put
  \begin{equation}\label{eq:gdef}
    \Grat(y):=\frac{(1-y^2)\artanh y}{y},
  \end{equation}
  which is strictly decreasing on $(0,1)$ with $\Grat(0^+)=1$ and
  $\Grat(1^-)=0$, and set
  \[
    \mu:=\frac{t\artanh(x)}{\artanh(s)},\qquad
    \nu:=\frac{s\artanh(x)}{\artanh(t)}.
  \]
  Then the Lagrangian $\Lagr := \I(U;V)-\mu\I(U;X)-\nu\I(Y;V)$, as a function of
  $(\eta_1,\eta_2,s,t)$, is stationary at $\eta_1=\eta_2=0$; the value and both
  rates are stationary in $\eta_1$ and $\eta_2$ there; and the Hessian of
  $\Lagr$ with respect to $(\eta_1,\eta_2)$ at $\eta_1=\eta_2=0$ has the entries
  \[
    \Fpp=-\frac{t^2\bigl(\Grat(x)/\Grat(s)-1\bigr)}{1-x^2},\qquad
    \Fqq=-\frac{s^2\bigl(\Grat(x)/\Grat(t)-1\bigr)}{1-x^2},\qquad
    \Fpq=-\frac{1-\Grat(x)}{1-x^2}.
  \]
\end{lemma}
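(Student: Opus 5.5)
The plan is to compute everything directly from the chord representation \eqref{eq:chord}. With $\dlt=1$ and the parametrisation \eqref{eq:pq}, the zero-mean weights are $\pi_1=(s-\eta_1)/(2s)$ and $\pi_2=(s+\eta_1)/(2s)$. So for any $\psi$,
\[
  E_s[\psi](\eta_1):=\frac{(s-\eta_1)\psi(\eta_1+s)+(s+\eta_1)\psi(\eta_1-s)}{2s},
\]
and the same on the $V$-side. The rates are $E_s[\fe]$ and $E_t[\fe]$. The value is the iterated average $E_s^{(\sigma)}E_t^{(\tau)}[\ff(\sigma\tau)]$, because the law of $(\bar S,\bar T)$ is a product.

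\emph{Step 1: the two derivative operators at $\eta=0$.} Differentiating $E_s[\psi]$ gives
\[
  \partial_\eta E_s[\psi]\big|_0=\frac{\psi(-s)-\psi(s)+s(\psi'(s)+\psi'(-s))}{2s},
\]
\[
  \partial_\eta^2 E_s[\psi]\big|_0=\frac{2\psi'(-s)-2\psi'(s)+s(\psi''(s)+\psi''(-s))}{2s}.
\]
The first operator annihilates even $\psi$. On odd $\psi$ it equals $\psi'(s)-\psi(s)/s$.

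This gives first-order stationarity at once. The rates use $\fe$, which is even. For the value, at $\eta_2=0$ the inner average $E_t[\ff(\sigma\,\cdot)]=\fe(t\sigma)$ is even in $\sigma$. Hence the value and both rates are stationary in $\eta_1$, and symmetrically in $\eta_2$.

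In the half-width directions, at $\eta=0$ the Lagrangian is $\gcor(s,t)=\fe(st)-\mu\fe(s)-\nu\fe(t)$, as in \eqref{eq:g} with $\dlt=1$. Its $s$-derivative is $t\artanh(x)-\mu\artanh(s)$. This vanishes exactly for the stated $\mu$, and likewise for $\nu$, so $\Lagr$ is stationary in all four coordinates.

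\emph{Step 2: diagonal entries.} Set $\eta_2=0$ and apply the second operator to $\psi(\sigma)=\fe(t\sigma)-\mu\fe(\sigma)$, using $\fe'=\artanh$ and $\fe''=1/(1-z^2)$ from \eqref{eq:fofe}. This gives
\[
  \frac{t^2}{1-x^2}-\frac{2t\artanh x}{s}-\frac{\mu}{1-s^2}+\frac{2\mu\artanh s}{s}.
\]
With the value of $\mu$, the two $2(\cdot)/s$ terms cancel. What remains is
\[
  \frac{t^2}{1-x^2}-\frac{t\artanh x}{(1-s^2)\artanh s}.
\]
Expanding $\Grat(x)/\Grat(s)=\frac{s(1-x^2)\artanh x}{x(1-s^2)\artanh s}$ identifies this with the claimed $\Fpp$. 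The entry $\Fqq$ follows by the side interchange.

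\emph{Step 3: the mixed entry.} The rate terms do not contribute. The value contributes the first operator applied in $\sigma$ and then in $\tau$ to $\ff(\sigma\tau)$. Since that operator kills even parts, only the component of $\ff(\sigma\tau)$ that is odd in each variable survives, which is $\fo(\sigma\tau)$. Applying $\psi\mapsto\psi'-\psi/s$ in each variable yields an expression in $\fo(x)$, $x\fo'(x)$ and $x^2\fo''(x)$, divided by $x$. Substituting $\fo(x)=\artanh x+\tfrac x2\log(1-x^2)$, $\fo'$ and $\fo''=-x/(1-x^2)$ from \eqref{eq:fofe}, the logarithms should cancel and leave $-(1-\Grat(x))/(1-x^2)$.

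The main obstacle is the bookkeeping in this step, namely keeping the sign conventions of the two atoms and the weights straight. As a cross-check I would expand $\Lagr$ to second order along $\eta_1=\eta_2$, where the same quantities appear.

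\emph{Step 4: properties of $\Grat$.} The limits $\Grat(0^+)=1$ and $\Grat(1^-)=0$ are immediate. For monotonicity, one computes
\[
  \Grat'(y)=\frac{y-(1+y^2)\artanh y}{y^2}.
\]
This is negative because $(1+y^2)\artanh y>\artanh y>y$, by the lower bound of \cref{lem:artanh}; the same inequality already appeared in the proof of \cref{lem:step1}.
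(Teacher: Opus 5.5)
Your proposal is correct and takes essentially the paper's route: the parity formulas for the first and second $\eta$-derivatives of the zero-mean two-atom average, evenness for stationarity, half-width stationarity to fix $\mu,\nu$, the even formula for the diagonal entries, and the odd formula applied once in each variable for $\Fpq$. The cancellation you anticipate in Step~3 does occur, since $\fo(x)/x-\fo'(x)=\artanh(x)/x-1$; your Step~4 additionally verifies the stated properties of $\Grat$, which the paper's proof does not spell out.
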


\begin{proof}
  All quantities involved are mean-zero two-atom averages, so \eqref{eq:chord}
  applies. For a function $\psi$ write
  \[
    \Echord_\psi(\eta;s):=\pi_1\psi(\eta+s)+\pi_2\psi(\eta-s)
    =\bigl(\chord_{[\eta-s,\eta+s]}\psi\bigr)(0)
    =\Achord_\psi(\eta;s)-\eta\,\Cchord_\psi(\eta;s),
  \]
  with the mean $\Achord_\psi(\eta;s):=\tfrac12[\psi(\eta+s)+\psi(\eta-s)]$ and
  the secant slope $\Cchord_\psi(\eta;s):=[\psi(\eta+s)-\psi(\eta-s)]/(2s)$.
  Differentiating with respect to $\eta$ at $\eta=0$, with $s$ fixed, gives
  \begin{equation}\label{eq:parity}
    \begin{aligned}
      \psi\ \text{even}:\quad &&
                                 \partial_\eta\Echord_\psi\big|_0&=0, &\qquad
                                                                     \partial_\eta^2\Echord_\psi\big|_0&=\psi''(s)-\frac{2\psi'(s)}{s},\\
      \psi\ \text{odd}:\quad &&
                                \partial_\eta\Echord_\psi\big|_0&=\psi'(s)-\frac{\psi(s)}{s}, &\qquad
                                                                                             \partial_\eta^2\Echord_\psi\big|_0&=0 .
    \end{aligned}
  \end{equation}

  \emph{Stationarity in $\eta_1,\eta_2$.} At $\eta_2=0$ the $V$-side has atoms
  $\pm t$ with equal weights, so the $U$-side response is
  $\resp(y)=\tfrac12[\ff(ty)+\ff(-ty)]=\fe(ty)$, which is even; the rate
  integrand $\fe$ is even as well. By \eqref{eq:parity} the first derivatives
  with respect to $\eta_1$ vanish, and by symmetry so do those with respect to
  $\eta_2$.

  \emph{The function $\Grat$.} The even case of \eqref{eq:parity} involves
  $\psi''(s)-2\psi'(s)/s$. For the rate integrand, with $\fe'=\artanh$ and
  $\fe''(y)=1/(1-y^2)$,
  \begin{equation}\label{eq:gratio}
    \Grat(y)=\frac{\fe'(y)}{y\,\fe''(y)},
    \qquad\text{so}\qquad
    \fe''(y)-\frac{2\fe'(y)}{y}=\fe''(y)\bigl[1-2\Grat(y)\bigr].
  \end{equation}

  \emph{The multipliers.} At $\eta_1=\eta_2=0$ the weights equal $\tfrac12$, the
  value is $\fe(ts)$, the $U$-rate is $\fe(s)$ and the $V$-rate does not depend
  on $s$. Hence $\partial_s\Lagr=0$ is equivalent to $t\artanh(x)=\mu\artanh(s)$,
  which defines $\mu$, and likewise $\partial_t\Lagr=0$ defines $\nu$. Together
  with the stationarity in $\eta_1,\eta_2$, $\Lagr$ is stationary in all four
  coordinates. Using $\fe'(y)=y\,\Grat(y)\fe''(y)$ from \eqref{eq:gratio} and
  $x/s=t$,
  \begin{equation}\label{eq:mu}
    \mu=\frac{t\,\fe'(x)}{\fe'(s)}=t^2\,\frac{\Grat(x)\fe''(x)}{\Grat(s)\fe''(s)} .
  \end{equation}

  \emph{Diagonal entries.} Applying \eqref{eq:parity} to the even function
  $\psi(y)=\fe(ty)$, for which $\psi''(s)=t^2\fe''(x)$ and
  $2\psi'(s)/s=2t^2\fe'(x)/x$, and using \eqref{eq:gratio}, we obtain
  $\partial_{\eta_1}^2\Val=t^2\fe''(x)[1-2\Grat(x)]$. Moreover
  $\partial_{\eta_1}^2\Rte_U=\fe''(s)[1-2\Grat(s)]$ and
  $\partial_{\eta_1}^2\Rte_V=0$. By \eqref{eq:mu},
  \[
    \Fpp=t^2\fe''(x)\Bigl[1-2\Grat(x)-\frac{\Grat(x)\bigl(1-2\Grat(s)\bigr)}{\Grat(s)}\Bigr]
    =t^2\fe''(x)\Bigl[1-\frac{\Grat(x)}{\Grat(s)}\Bigr],
  \]
  which is the stated expression since $\fe''(x)=1/(1-x^2)$. Exchanging the two
  sides gives $\Fqq$.

  \emph{Off-diagonal entry.} Each rate depends on only one of $\eta_1,\eta_2$,
  so $\Fpq=\partial_{\eta_1}\partial_{\eta_2}\Val$ at the origin. The $V$-side
  weights do not depend on $\eta_1$, so
  \[
    \partial_{\eta_2}\Val\big|_{\eta_2=0}=\Echord_{\dresp}(\eta_1;s),\qquad
    \dresp(y):=\partial_{\eta}\Echord_{\ff(y\,\cdot\,)}(\eta;t)\big|_{\eta=0},
  \]
  and $\Fpq=\partial_{\eta}\Echord_{\dresp}(\eta;s)\big|_{\eta=0}$. In the first
  step the integrand is $z\mapsto\ff(yz)=\fe(yz)+\fo(yz)$; by \eqref{eq:parity}
  the even part does not contribute, and the odd case gives
  $\dresp(y)=y\,\fo'(yt)-\fo(yt)/t$, which is odd in $y$. Since
  $\dresp'(y)=y\,t\,\fo''(yt)$, the second step gives
  \[
    \Fpq=\dresp'(s)-\frac{\dresp(s)}{s}
    =x\,\fo''(x)-\Bigl[\fo'(x)-\frac{\fo(x)}{x}\Bigr].
  \]
  By \eqref{eq:fofe}, $\fo(x)/x-\fo'(x)=\artanh(x)/x-1$, and
  $\fo''(x)=-x/(1-x^2)$, so
  \[
    \Fpq=-\frac{x^2}{1-x^2}+\frac{\artanh x}{x}-1
    =\frac{\artanh x}{x}-\frac{1}{1-x^2}
    =-\frac{1-\Grat(x)}{1-x^2}. \qedhere
  \]
\end{proof}

\Cref{lem:hess} does not assume optimality: every BSC pair is a stationary point
of the Lagrangian with the multipliers $\mu,\nu$ defined there. Since $x<s$,
$x<t$ and $\Grat$ is decreasing, $\Grat(x)>\Grat(s)$ and $\Grat(x)>\Grat(t)$, so
$\Fpp<0$ and $\Fqq<0$. Moreover,
\begin{equation}\label{eq:saddle}
  \Fpp\Fqq<\Fpq^2
  \iff
  x^2\bigl(\Grat(x)-\Grat(s)\bigr)\bigl(\Grat(x)-\Grat(t)\bigr)<\bigl(1-\Grat(x)\bigr)^2\Grat(s)\Grat(t).
\end{equation}
We call the right-hand inequality the \emph{saddle inequality}; it is proved in
\cref{sec:iii} (\cref{thm:iii}). Consequently the Hessian of \cref{lem:hess} is
indefinite.

The second-order argument below is an instance of the perturbation method
of~\cite{Gohari2012Evaluation,Jog2010information}, applied in the coordinates of
the atoms. A straight line in the $(\eta_1,\eta_2)$-plane satisfies the rate
constraints only to first order; the perturbation is therefore corrected at
second order in the half-width directions.

\begin{proposition}\label{prop:saddle}
  Let $s,t\in(0,1)$, and let $(\chU,\chV)$ be the BSC pair with biases $\pm s$ and
  $\pm t$.
  \begin{enumerate}
  \item If $\fe(s)\le C_u$ and $\fe(t)\le C_v$, then $(\chU,\chV)$ does not
    maximise $\Val$ subject to $\Rte_U\le C_u$ and $\Rte_V\le C_v$.
  \item If $\fe(s)\ge C_u$ and $\fe(t)\ge C_v$, then $(\chU,\chV)$ does not
    minimise $\Val$ subject to $\Rte_U\ge C_u$ and $\Rte_V\ge C_v$.
  \end{enumerate}
\end{proposition}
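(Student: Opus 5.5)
The plan is a second-order perturbation argument, using the stationarity and the Hessian computed in \cref{lem:hess}. I perturb the BSC pair along a curve on which both rates stay \emph{exactly} equal to $\fe(s)$ and $\fe(t)$. Along such a curve the value differs from the Lagrangian $\Lagr=\Val-\mu\Rte_U-\nu\Rte_V$ by the constant $\mu\fe(s)+\nu\fe(t)$, so it suffices to show that $\Lagr$ strictly increases (for (i)) or strictly decreases (for (ii)). Keeping the rates fixed keeps the perturbed pair feasible in both problems, whatever the signs of the budget slack. Because the rates never move, the signs of $\mu,\nu$ play no role.

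\emph{Construction of the curve.} Fix a direction $e=(e_1,e_2)\in\R^2$. Put $\eta_1=\varepsilon e_1$ and $\eta_2=\varepsilon e_2$ in the coordinates \eqref{eq:pq}, and replace the half-widths by $s+\sigma(\varepsilon)$ and $t+\tau(\varepsilon)$. The corrections are chosen by the implicit function theorem so that $\Rte_U=\fe(s)$ and $\Rte_V=\fe(t)$ hold identically. This is possible because:
\begin{itemize}
\item at $\eta=0$ we have $\partial_s\Rte_U=\fe'(s)=\artanh s>0$, and likewise $\partial_t\Rte_V>0$;
\item each rate depends on only one side's parameters, so the two corrections decouple;
\item all functions involved are smooth for interior, distinct atoms.
\end{itemize}
By \cref{lem:hess} both rates are stationary in $\eta_1,\eta_2$ at the origin. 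Hence $\sigma(\varepsilon),\tau(\varepsilon)=O(\varepsilon^2)$; in fact $\sigma(\varepsilon)=-\tfrac{\varepsilon^2e_1^2}{2}\,\fe''(s)(1-2\Grat(s))/\artanh s+o(\varepsilon^2)$, although the exact value is not needed. For $\varepsilon$ small the atoms $\eta_1\pm(s+\sigma)$ and $\eta_2\pm(t+\tau)$ stay in $(-1,1)$ and keep opposite signs. They therefore define mean-zero two-atom laws, i.e.\ non-degenerate channel pairs by \cref{rem:degenerate}, and for $\varepsilon\ne0$ none of them is the original pair.

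\emph{Expanding the Lagrangian.} By \cref{lem:hess}, $\Lagr$ is stationary in all four coordinates $(\eta_1,\eta_2,s,t)$ at the BSC pair. Taylor's theorem then gives
\[
\Lagr(\varepsilon)-\Lagr(0)=\tfrac{\varepsilon^2}{2}\bigl(\Fpp e_1^2+2\Fpq e_1e_2+\Fqq e_2^2\bigr)+o(\varepsilon^2).
\]
The corrections $\sigma,\tau$ enter only linearly at order $\varepsilon^2$, multiplied by the vanishing first derivatives $\partial_s\Lagr$ and $\partial_t\Lagr$. Their cross terms with $\eta$ are of order $\varepsilon^3$. So only the $(\eta_1,\eta_2)$-block of the Hessian survives.

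\emph{Choice of direction.}
\begin{itemize}
\item For (ii), take $e=(1,0)$. Then the quadratic form equals $\Fpp<0$, because $x<s$ and $\Grat$ is decreasing. The value strictly decreases while both rates stay at levels $\ge C_u$ and $\ge C_v$, so the pair is not a minimiser.
\item For (i), use the saddle inequality \eqref{eq:saddle} (\cref{thm:iii}, assumed here). It gives $\Fpp\Fqq<\Fpq^2$, so the form is indefinite. With $\Fpp<0$, take $e=(-\Fpq,\Fpp)$: the form evaluates to $\Fpp(\Fpp\Fqq-\Fpq^2)>0$. The value strictly increases while the rates stay at levels $\le C_u$ and $\le C_v$, so the pair is not a maximiser.
\end{itemize}

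The substantive input is the saddle inequality itself, which is deferred to \cref{sec:iii}. Within this proof, the only delicate point is the justification that the $O(\varepsilon^2)$ half-width corrections do not contribute to the second-order term. This rests on the full four-variable stationarity in \cref{lem:hess}, together with the smoothness of the chord averages \eqref{eq:chord} in the atoms, which I would verify explicitly.
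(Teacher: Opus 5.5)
Your proof is correct and follows the paper's argument: a curve with first-order velocity in the $(\eta_1,\eta_2)$-plane and $O(\varepsilon^2)$ half-width corrections, whose second-order change in $\Lagr$ reduces to the $(\eta_1,\eta_2)$-block of \cref{lem:hess} by four-coordinate stationarity, with $\Fpp<0$ giving (ii) and the saddle inequality giving (i). The only difference is that you hold both rates exactly at $\fe(s),\fe(t)$ via the implicit function theorem. The paper instead takes explicit quadratic corrections that give both rates a common second derivative $\gamma$ of the favourable sign, and keeps $\Qform+(\mu+\nu)\gamma$ of the right sign by taking $\lvert\gamma\rvert$ small.
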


\begin{proof}
  Let $r=(r_1,r_2)\in\R^2$ and $e=(e_1,e_2,e_3,e_4)\in\R^4$, and consider the
  curve of channel pairs whose atoms are
  \[
    \pm s+r_1\varepsilon+e_{1,2}\,\varepsilon^2
    \quad\text{and}\quad
    \pm t+r_2\varepsilon+e_{3,4}\,\varepsilon^2 ,
  \]
  where $e_1,e_3$ belong to the positive atoms and $e_2,e_4$ to the negative
  ones, with weights determined by the zero-mean condition. For small
  $\lvert\varepsilon\rvert$ the atoms are distinct and interior, so by
  \cref{rem:degenerate} this is a smooth curve of non-degenerate channel pairs.
  In the coordinates of \eqref{eq:pq} the $U$-side has centre
  $\eta_1=r_1\varepsilon+\tfrac12(e_1+e_2)\varepsilon^2$ and half-width
  $s+\tfrac12(e_1-e_2)\varepsilon^2$, and similarly for the $V$-side. Write
  $\Val''$, $\Rte_U''$, $\Rte_V''$ for the second derivatives along the curve
  at $\varepsilon=0$.

  All first derivatives along the curve vanish at $\varepsilon=0$, since the
  first-order velocity lies in the $(\eta_1,\eta_2)$-plane, in which the value
  and the rates are stationary by \cref{lem:hess}. By the chain rule, the
  second derivative of a function along the curve is the Hessian evaluated at
  the velocity plus the gradient evaluated at the acceleration. Since $\Lagr$
  is stationary in all four coordinates,
  \[
    \Qform:=\Val''-\mu\Rte_U''-\nu\Rte_V''=\Fpp r_1^2+2\Fpq r_1r_2+\Fqq r_2^2 .
  \]
  For the rates, the Hessian term is given by the even case of
  \eqref{eq:parity}, and the gradient term involves only the half-width, in
  which the derivative of $\Rte_U$ is $\fe'(s)=\artanh s$:
  \[
    \Rte_U''=r_1^2\,\fe''(s)\bigl[1-2\Grat(s)\bigr]+\artanh(s)\,(e_1-e_2),\qquad
    \Rte_V''=r_2^2\,\fe''(t)\bigl[1-2\Grat(t)\bigr]+\artanh(t)\,(e_3-e_4).
  \]
  Since $\artanh s,\artanh t>0$, for every $r$ and every $\gamma\in\R$ the
  differences $e_1-e_2$ and $e_3-e_4$ can be chosen such that
  $\Rte_U''=\Rte_V''=\gamma$. Then $\Val''=\Qform+(\mu+\nu)\gamma$.

  (i) Choose $r=(1,-\Fpq/\Fqq)$. Then
  $\Qform=(\Fpp\Fqq-\Fpq^2)/\Fqq>0$ by \eqref{eq:saddle} and $\Fqq<0$. Choose
  $\gamma<0$ with $\Qform+(\mu+\nu)\gamma>0$. By Taylor's theorem,
  $\Rte_U(\varepsilon)<\Rte_U(0)=\fe(s)\le C_u$,
  $\Rte_V(\varepsilon)<\fe(t)\le C_v$ and $\Val(\varepsilon)>\Val(0)$ for all
  sufficiently small $\varepsilon\ne0$. Hence the curve contains feasible pairs
  with a larger value.

  (ii) Choose $r=(1,0)$, so that $\Qform=\Fpp<0$, and choose $\gamma>0$ with
  $\Qform+(\mu+\nu)\gamma<0$. Arguing as in (i), for small $\varepsilon\ne0$ both rates
  strictly exceed their values at $\varepsilon=0$, hence satisfy the
  constraints, while $\Val(\varepsilon)<\Val(0)$.
\end{proof}

Part~(ii) uses only $\Fpp<0$. The saddle inequality, and with it the certified
computations of \cref{sec:iii}, is needed only for part~(i), i.e.\ for
\cref{thm:c1}.

\subsection{The saddle inequality}\label{sec:iii}

This section proves the saddle inequality \eqref{eq:saddle}. To free the letters
$s$ and $t$ for integration variables, the two biases are denoted by $\alpha$ and
$\beta$.

\begin{theorem}\label{thm:iii}
  For $\alpha,\beta\in(0,1)$ and $x=\alpha\beta$,
  \[
    x^2\bigl(\Grat(x)-\Grat(\alpha)\bigr)\bigl(\Grat(x)-\Grat(\beta)\bigr)
    <\bigl(1-\Grat(x)\bigr)^2 \Grat(\alpha)\Grat(\beta).
  \]
\end{theorem}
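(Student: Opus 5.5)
We propose the following strategy, which the outline in \cref{sub:strategy} already suggests. Divide both sides by $\Grat(\alpha)\Grat(\beta)>0$ and write $\rho(y):=\Grat(x)/\Grat(y)-1>0$ for $y\in\{\alpha,\beta\}$; these are positive because $x<\alpha,\beta$ and $\Grat$ is strictly decreasing. The claim then reads
\[
  x^2\,\rho(\alpha)\,\rho(\beta)<\bigl(1-\Grat(x)\bigr)^2 .
\]
We fix the product $x=\alpha\beta$ and move along the hyperbola $\alpha\beta=x$. The first step is to reduce to the diagonal $\alpha=\beta=\sqrt x$. For this we show that the function $\theta\mapsto\log\rho(\sqrt x\,e^{\theta})+\log\rho(\sqrt x\,e^{-\theta})$ is maximised at $\theta=0$. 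A sufficient condition is concavity in $\theta$ of $\log\rho(\sqrt x e^\theta)$, and a weaker symmetric comparison would also suffice. We would clear the $\artanh$ terms by treating $A=\artanh\alpha$, $B=\artanh\beta$, $X=\artanh x$ as separate variables constrained by polynomial bounds (for instance the two-sided bounds of \cref{lem:artanh} together with sharper rational ones). This turns the comparison ``off-diagonal $\le$ diagonal'' into a polynomial inequality in a handful of variables ($x$, the ratio parameter, and the $\artanh$ values). We would prove that polynomial inequality by a P\'olya-type certificate: after a substitution mapping the domain to the simplex, all coefficients of (numerator)$\cdot(\sum z_i)^N$ are nonnegative. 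The coefficients are checked by exact rational arithmetic, which is feasible in Lean via Kronecker substitution. This is the counterpart of \cref{lem:kernel}.

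The second step is the one-variable diagonal inequality. With $\alpha=\beta=y$ and $x=y^2$ it becomes
\[
  y^2\bigl(\Grat(y^2)-\Grat(y)\bigr)<\bigl(1-\Grat(y^2)\bigr)\Grat(y),
\]
after taking square roots of both (positive) sides. We split $y\in(0,1)$ into three ranges. Near $0$, expand $\Grat(y)=1-\tfrac23y^2-\tfrac{2}{15}y^4-\dots$ (all coefficients beyond the first negative). Then the left side is $y^2(\tfrac23(y^2-y^4)+\dots)$ and the right side is $\tfrac23y^4+\dots$, so both are of order $y^4$ with leading coefficients equal. The inequality is therefore decided at higher order, and the difference is a power series with a sign to be verified. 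We would bound it by an explicit polynomial with rational coefficients plus a controlled tail, which gives the truncated series $q_k$ in the macros. Near $1$, $\Grat(y)\to0$ like $(1-y)\log\frac{2}{1-y}$, while $\Grat(y^2)$ is comparable. Here we would use asymptotic bounds: the right side behaves like $(1-\Grat(y^2))\Grat(y)$ and the left like $\Grat(y^2)-\Grat(y)$, and $\Grat(y^2)\approx2\Grat(y)$ should leave a margin. In the middle range we would run an interval sweep: on each of finitely many subintervals, monotone enclosures of $\Grat$ (since $\Grat$ is decreasing, $\Grat(y)\in[\Grat(y_{\rm hi}),\Grat(y_{\rm lo})]$) bound each side, and strict separation is checked with rational endpoints.

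The main obstacle is the first step, the reduction to the diagonal. It is genuinely multivariate with transcendental functions, and the inequality has degenerate tangency as $x\to0$ and as $\alpha$ or $\beta\to1$. A P\'olya certificate needs strict positivity on a compact simplex, so we expect to factor out the vanishing orders explicitly (powers of $x$, and of the hyperbola parameter around $\theta=0$, where the difference vanishes to second order) before certification. If concavity in $\theta$ fails, we would fall back on proving $\rho(\alpha)\rho(\beta)\le\rho(\sqrt x)^2$ directly by the same polynomialisation. The diagonal step near $y=0$ is the secondary delicate point, because of the cancellation of the leading $y^4$ terms.
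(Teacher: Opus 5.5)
Your architecture matches the paper's: reduce along the hyperbola $\alpha\beta=x$ to the diagonal, then prove a one-variable inequality on three ranges. Your $\rho$ is the paper's $\qrat=\Bmix/\Grat$. The diagonal step is a workable plan. Near $y=1$, however, the ratio of the two sides tends to $1$ with a deficit of only about $2\log2/\log\frac{2}{1-y}$, so ``should leave a margin'' must be made quantitative (the substitution $y=\tanh\vartheta$ before \cref{lem:core} does this).

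The genuine gap is the reduction to the diagonal, and your sufficient condition is false. The hyperbola reaches $\alpha\to1$, where $\Grat(\alpha)\sim(1-\alpha)\log\frac{2}{1-\alpha}$. Hence $\log\rho(\alpha)=\log\bigl(\Grat(x)-\Grat(\alpha)\bigr)-\log\Grat(\alpha)$ contains the term $-\log\bigl(1-\sqrt x\,e^{\theta}\bigr)$, whose second derivative in $\theta$ is $\sqrt x\,e^{\theta}/\bigl(1-\sqrt x\,e^{\theta}\bigr)^2\to+\infty$. The other terms are either of lower order or also convex near that endpoint. Numerically, for $x=1/4$ the difference quotients of $u\mapsto\log\rho(e^u)$ over $y=0.98,\,0.99,\,0.995$ are about $59$ and $119$. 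So $\log\rho(\sqrt x\,e^{\theta})$ is concave near $\theta=0$ but convex near $\theta=\tfrac12\log(1/x)$, and there is nothing to certify.

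Your fallback cannot work either. It proposes to prove $\rho(\alpha)\rho(\beta)\le\rho(\sqrt x)^2$ directly, after replacing $\artanh\alpha$, $\artanh\beta$, $\artanh\sqrt x$ and $\artanh x$ by variables confined by polynomial bounds. That inequality is an \emph{equality} on the whole diagonal, for every $x$. The same is true of any two-point comparison that encodes the reduction, for instance the sign of $\phi'(\theta)-\phi'(-\theta)$ with $\phi(\theta)=\log\rho(\sqrt x\,e^{\theta})$. Once the $\artanh$ values are decoupled into intervals of positive width, consider the point $\alpha=\beta=\sqrt x$ (and, by continuity, points near it). There you may put the variables for $\artanh\alpha$ and $\artanh\beta$ at their lower ends and the one for $\artanh\sqrt x$ at its upper end. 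Since $\rho$ decreases in the $\artanh$ value, the relaxed inequality is then false. So no P\'olya certificate for it exists. Moreover, the second-order vanishing in $\theta$ that you plan to factor out is a property of the true functions, and the relaxation destroys it.

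What is missing is an \emph{exact} algebraic handle on the transcendental functions. The paper obtains it in two steps.
First, \cref{lem:Rprime} shows that $\Rrat'(\alpha)$ has the sign of $-\Nmix(\alpha,\beta)$, so only $\Nmix\ge0$ for $\beta\le\alpha$ is needed.
Second, the representations \eqref{eq:mixrep} express $\Grat$, $1-\Grat$ and $\Amix$ (hence $\Bmix$) as integrals over $s\in[0,1]$ of rational functions of $s^2$. This turns $\Nmix$ into a triple integral of a rational function, whose $S_3$-symmetrisation is a positive multiple of the kernel sum $\Ksum$ (\cref{lem:sym}).

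In this form the tightness at $\alpha=\beta$ is preserved exactly: $\rker\equiv1$ when $u=v$. It is absorbed by the gap coordinate $\zeta_2=u-v$. As a result, nonnegativity of all coefficients proves \cref{lem:kernel}, with $N=0$ and no strict positivity required.
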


The proof has two parts. After division by $\Grat(\alpha)\Grat(\beta)$, the
right-hand side is constant along the hyperbola $\alpha\beta=x$, and we show that
the left-hand side is largest at the diagonal point $\alpha=\beta=\sqrt x$
(\cref{lem:Rprime,lem:sym,lem:kernel}). The diagonal case is a one-variable
inequality (\cref{lem:core}).

\paragraph{Mixture representation.} With $x$ fixed, put
$\Bmix(\alpha):=\Grat(x)-\Grat(\alpha)$, so that $\Grat+\Bmix$ is the constant
$\Grat(x)$, and
\[
  \Amix(\alpha):=\frac{(1+\alpha^2)\artanh\alpha-\alpha}{2\alpha}
  =-\tfrac12\alpha\,\Grat'(\alpha).
\]
The functions $\Grat$, $1-\Grat$ and $\Amix$ have the integral representations
\begin{equation}\label{eq:mixrep}
  \begin{aligned}
    \Grat(\alpha)&=\int_0^1\frac{1-\alpha^2}{1-s^2\alpha^2}\,\mathrm ds,\qquad
    1-\Grat(\alpha)=\int_0^1\frac{(1-s^2)\alpha^2}{1-s^2\alpha^2}\,\mathrm ds,\\
    \Amix(\alpha)&=\int_0^1\frac{\alpha^2(1-s^2)}{(1-s^2\alpha^2)^2}\,\mathrm ds,
  \end{aligned}
\end{equation}
which follow from $\int_0^1(1-s^2\alpha^2)^{-1}\,\mathrm ds=\artanh(\alpha)/\alpha$
and, for $\Amix$, from differentiating the first representation under the
integral sign. Since $\Bmix(\alpha)=(1-\Grat(\alpha))-(1-\Grat(x))$,
\[
  \Bmix(\alpha)=\int_0^1\frac{(1-s^2)(\alpha^2-x^2)}
  {(1-s^2\alpha^2)(1-s^2x^2)}\,\mathrm ds .
\]
All integrands are rational functions of $s^2$; below they are written in the
variable $\theta=s^2$, while the integrals remain integrals with respect to $s$.

\paragraph{The diagonal reduction.} Along the hyperbola $\alpha\beta=x$ the
left-hand side of \cref{thm:iii}, divided by $\Grat(\alpha)\Grat(\beta)$, is
$x^2\Rrat(\alpha)$ with
$\Rrat(\alpha):=\Bmix(\alpha)\Bmix(\beta)/\bigl(\Grat(\alpha)\Grat(\beta)\bigr)$
and $\beta=x/\alpha$. Since $\Grat+\Bmix$ is constant, the derivative of $\Rrat$
takes a simple form.

\begin{lemma}\label{lem:Rprime}
  For $0<x<\alpha<1$ and $\beta=x/\alpha$,
  \[
    \Rrat'(\alpha)=-\frac{2\Grat(x)}{\alpha}\cdot
    \frac{\Nmix(\alpha,\beta)}{\bigl(\Grat(\alpha)\Grat(\beta)\bigr)^2},\qquad
    \Nmix(\alpha,\beta):=\Amix(\beta)\Grat(\alpha)\Bmix(\alpha)-\Amix(\alpha)\Grat(\beta)\Bmix(\beta).
  \]
\end{lemma}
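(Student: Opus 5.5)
The computation is a quotient-rule differentiation along the hyperbola, so the plan is direct. Write $\Rrat=\Bmix(\alpha)\Bmix(\beta)/(\Grat(\alpha)\Grat(\beta))$ with $\beta=x/\alpha$, so that $\mathrm d\beta/\mathrm d\alpha=-\beta/\alpha$. The key simplification is that $\Grat+\Bmix\equiv\Grat(x)$ for fixed $x$. Hence $\Bmix'=-\Grat'$, and by the definition of $\Amix$, $\Grat'(\alpha)=-2\Amix(\alpha)/\alpha$, so $\Bmix'(\alpha)=2\Amix(\alpha)/\alpha$.

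First I will compute the logarithmic derivative of each one-variable factor. Put $q(y):=\Bmix(y)/\Grat(y)$, so $\Rrat=q(\alpha)q(\beta)$. Using the constancy,
\[
  q'(y)=\frac{\Bmix'\Grat-\Bmix\Grat'}{\Grat^2}=\frac{-\Grat'(\Grat+\Bmix)}{\Grat^2}=\frac{2\Amix(y)\Grat(x)}{y\,\Grat(y)^2}.
\]
Here it matters that $\Grat(x)$ is the same constant for both $y=\alpha$ and $y=\beta$, since $x$ is fixed along the hyperbola. The chain rule then gives
\[
  \Rrat'(\alpha)=q'(\alpha)q(\beta)-\frac{\beta}{\alpha}\,q(\alpha)q'(\beta)
  =\frac{2\Grat(x)}{\alpha}\Bigl[\frac{\Amix(\alpha)\Bmix(\beta)}{\Grat(\alpha)^2\Grat(\beta)}-\frac{\Amix(\beta)\Bmix(\alpha)}{\Grat(\alpha)\Grat(\beta)^2}\Bigr],
\]
where in the second term the factor $\beta$ from $\mathrm d\beta/\mathrm d\alpha$ cancels the $1/\beta$ in $q'(\beta)$.

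Finally I will put the bracket over the common denominator $(\Grat(\alpha)\Grat(\beta))^2$. The numerator becomes $\Amix(\alpha)\Grat(\beta)\Bmix(\beta)-\Amix(\beta)\Grat(\alpha)\Bmix(\alpha)=-\Nmix(\alpha,\beta)$, which is exactly the claimed formula. The only care needed is to check that the denominators are nonzero: $\Grat>0$ on $(0,1)$ by \eqref{eq:mixrep}, and $\alpha,\beta\in(0,1)$ because $x<\alpha<1$. Beyond that there is no real obstacle. The step most prone to error is the sign bookkeeping from $\mathrm d\beta/\mathrm d\alpha=-\beta/\alpha$ together with $\Bmix'=-\Grat'$, so I will double-check it, for instance by confirming that $\Rrat'$ vanishes at $\alpha=\beta=\sqrt x$, where $\Nmix=0$ by symmetry.
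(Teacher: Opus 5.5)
Your proposal is correct and follows the paper's proof step for step. Both factor $\Rrat=\qrat(\alpha)\qrat(\beta)$, use $\Grat+\Bmix\equiv\Grat(x)$ to get $\qrat'=-\Grat(x)\Grat'/\Grat^2$, apply the chain rule with $\mathrm d\beta/\mathrm d\alpha=-\beta/\alpha$, and substitute $\Grat'(t)=-2\Amix(t)/t$; your signs, the cancellation of $\beta$, and your remark that $\Grat>0$ on $(0,1)$ all check out.
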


\begin{proof}
  $\Rrat$ splits into one factor per variable, $\Rrat=\qrat(\alpha)\qrat(\beta)$ with
  $\qrat:=\Bmix/\Grat$, and $\Bmix'=-\Grat'$ turns the quotient rule into
  \[
    \qrat'=\frac{\Bmix'\Grat-\Bmix\Grat'}{\Grat^2}
    =-\frac{\Grat'\,(\Grat+\Bmix)}{\Grat^{2}}
    =-\frac{\Grat(x)\,\Grat'}{\Grat^{2}} ,
  \]
  because $\Grat+\Bmix=\Grat(x)$.
  On the hyperbola $\beta=x/\alpha$, so $\mathrm d\beta/\mathrm d\alpha=-\beta/\alpha$ and
  \[
    \Rrat'(\alpha)=\qrat'(\alpha)\qrat(\beta)
    -\frac\beta\alpha\,\qrat(\alpha)\qrat'(\beta)
    =-\frac{\Grat(x)}{\bigl(\Grat(\alpha)\Grat(\beta)\bigr)^{2}}
    \Bigl[\Grat'(\alpha)\Bmix(\beta)\Grat(\beta)
    -\tfrac\beta\alpha\Grat'(\beta)\Bmix(\alpha)\Grat(\alpha)\Bigr].
  \]
  By the definition of $\Amix$ one has $\Grat'(t)=-2\Amix(t)/t$; substituting it
  twice makes the bracket $\tfrac2\alpha\Nmix(\alpha,\beta)$.
\end{proof}

Since $\Grat>0$, \cref{lem:Rprime} shows that $\Rrat$ is non-increasing on
$[\sqrt x,1)$ as soon as $\Nmix(\alpha,\beta)\ge0$ for $0<\beta\le\alpha<1$.
As $\Rrat(\alpha)=\Rrat(x/\alpha)$, this implies
$\Rrat(\alpha)\le\Rrat(\sqrt x)$ for all $\alpha\in(x,1)$, and hence
\cref{thm:iii} follows from its diagonal case $\alpha=\beta$.

By \eqref{eq:mixrep}, each of the six factors of $\Nmix$ is a one-dimensional
integral, and a product of three such integrals is an iterated integral. Hence
$\Nmix=\iiint_{[0,1]^3}\Pint$, where $\Pint(s_1,s_2,s_3)$ is the difference of
the two products of integrands, the three factors of each product being
evaluated at $s_1$, $s_2$ and $s_3$, respectively. Since relabelling the
integration variables does not change the integral, summing $\Pint$ over the
six permutations of $(s_1,s_2,s_3)$ yields a symmetric integrand whose integral
is $6\Nmix$.

\begin{lemma}\label{lem:sym}
  For $\alpha,\beta\in(0,1)$ and $s_1,s_2,s_3\in[0,1]$, with $\Ksum$ the kernel sum
  of \cref{lem:kernel} below,
  \begin{equation}\label{eq:sym}
    \sum_{\sigma\in S_3}\Pint\circ\sigma
    =\frac{\alpha^2\beta^2(1-\alpha^2)(1-\beta^2)}
    {(1-s_1^2\alpha^2)(1-s_2^2\alpha^2)(1-s_3^2\alpha^2)}\cdot
    \Ksum(s_1^2,s_2^2,s_3^2;\alpha^2,\beta^2).
  \end{equation}
\end{lemma}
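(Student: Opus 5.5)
The identity \eqref{eq:sym} is essentially bookkeeping: once the integrands of \eqref{eq:mixrep} are written out, the prefactor in \eqref{eq:sym} is a common factor of every term of the symmetrised sum, and $\Ksum$ is whatever remains. I would therefore prove the lemma by writing $\Pint$ explicitly in the variables $\theta_i=s_i^2$, extracting that factor, and checking that the cofactor is the kernel sum $\Ksum$ of \cref{lem:kernel}. If that lemma's $\Ksum$ is defined by a formula of its own, the last step becomes an explicit rational-function identity.

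\emph{Step 1: the integrands.} With $x=\alpha\beta$, the integrands of \eqref{eq:mixrep} and of the representation of $\Bmix$ are
\[
  \aker_\gamma(\theta)=\frac{\gamma^2(1-\theta)}{(1-\theta\gamma^2)^2},\qquad
  \gker_\gamma(\theta)=\frac{1-\gamma^2}{1-\theta\gamma^2},\qquad
  \bker_\gamma(\theta)=\frac{(1-\theta)(\gamma^2-x^2)}{(1-\theta\gamma^2)(1-\theta x^2)},
\]
for $\gamma\in\{\alpha,\beta\}$. The key simplification is
\[
  \alpha^2-x^2=\alpha^2(1-\beta^2),\qquad \beta^2-x^2=\beta^2(1-\alpha^2).
\]
It gives
\[
  \Pint=\aker_\beta(\theta_1)\gker_\alpha(\theta_2)\bker_\alpha(\theta_3)
  -\aker_\alpha(\theta_1)\gker_\beta(\theta_2)\bker_\beta(\theta_3).
\]
Each of the two products carries exactly the factor $\alpha^2\beta^2(1-\alpha^2)(1-\beta^2)$: one $\gamma^2$ from $\aker$, one factor $(1-\gamma^2)$ from $\gker$, and the complementary pair from $\bker$. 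After dividing it out, each summand of $\Pint$ is a rational function of $(\theta_1,\theta_2,\theta_3)$ with numerator factors $(1-\theta_i)$ and denominators built from $(1-\theta_i\alpha^2)$, $(1-\theta_i\beta^2)$ and $(1-\theta_ix^2)$.

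\emph{Step 2: symmetrise and extract the $\alpha$-denominators.} Summing over $S_3$ permutes the roles of $\theta_1,\theta_2,\theta_3$. I would multiply the whole sum by $\prod_{i=1}^3(1-\theta_i\alpha^2)$, which is symmetric and so passes through the permutation sum. This pulls out the prefactor of \eqref{eq:sym}, and the remaining symmetric expression is, by definition or by direct comparison, $\Ksum(\theta_1,\theta_2,\theta_3;\alpha^2,\beta^2)$. All dependence on $(\alpha,\beta)$ enters through $\alpha^2$, $\beta^2$ and $x^2=\alpha^2\beta^2$, which is why $\Ksum$ is written as a function of the squares.

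\emph{Main obstacle.} The only real work is confirming that the cofactor matches $\Ksum$ as defined in \cref{lem:kernel}. If $\Ksum$ is given there as a symmetric polynomial over a common denominator, I would clear the denominators $\prod_i(1-\theta_i\beta^2)^2(1-\theta_ix^2)$ and the remaining $\alpha$-factors, and verify the resulting polynomial identity in five variables by expansion, a computer-algebra step that the Lean development would discharge with \leantac{ring} or \leantac{field\_simp}. I would take care that $1-\theta_i\gamma^2>0$ for $\alpha,\beta\in(0,1)$ and $\theta_i\in[0,1]$, so that every division is legitimate.
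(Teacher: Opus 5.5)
Your Steps 1 and 2 are correct. The integrands, the simplification $\alpha^2-x^2=\alpha^2(1-\beta^2)$, the common constant $\alpha^2\beta^2(1-\alpha^2)(1-\beta^2)$ and the symmetric multiplier $\prod_i(1-\theta_i\alpha^2)$ are exactly what the paper uses. But the proposal stops where the content of the lemma begins. That the cofactor equals $\Ksum$, i.e.\ the specific sum $\sum_i\fker(\theta_i)\bigl(\gker(\theta_j)+\gker(\theta_k)\bigr)\bigl(\rker(\theta_i)^2-\rker(\theta_j)\rker(\theta_k)\bigr)$ of \cref{lem:kernel}, is only asserted ``by definition or by direct comparison''. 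Otherwise it is deferred to a five-variable expansion that you do not carry out. Such an expansion would in principle settle a true rational identity, so the flaw is an unexecuted step rather than a wrong one, but as written nothing has been checked. Note also that your $\gker_\gamma$, the integrand of $\Grat$, is not the kernel factor $\gker$ of \cref{lem:kernel}; that factor comes from the integrand of $\Bmix$.

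The missing idea, which makes the identity checkable by hand, is that it already holds \emph{termwise, before symmetrisation}. Put $u=\alpha^2$ and $v=\beta^2$, and remove the constants. Then each of the six normalised integrands is a monomial in $\fker,\gker,\rker$ divided by $1-\theta u$. For $\Amix(\beta)\Grat(\alpha)\Bmix(\alpha)$ the monomials are $\fker\rker^2$, $1$ and $\gker$. For $\Amix(\alpha)\Grat(\beta)\Bmix(\beta)$ they are $\fker$, $\rker$ and $\gker\rker$. Each is a one-line cancellation, e.g.\ $\frac{1-\theta}{(1-\theta v)^2}=\frac{\fker\rker^2}{1-\theta u}$ and $\frac{1}{1-\theta v}=\frac{\rker}{1-\theta u}$. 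Hence
\[
  \frac{\prod_i(1-\theta_iu)}{uv(1-u)(1-v)}\,\Pint=\fker(\theta_1)\gker(\theta_3)\bigl[\rker(\theta_1)^2-\rker(\theta_2)\rker(\theta_3)\bigr].
\]
Now group the sum over $S_3$ by which $\theta_i$ occupies the first slot. The bracket is symmetric in the remaining indices $j,k$, so their two orders combine to $\fker(\theta_i)\bigl(\gker(\theta_j)+\gker(\theta_k)\bigr)\bigl(\rker(\theta_i)^2-\rker(\theta_j)\rker(\theta_k)\bigr)$. Summing over $i$ gives $\Ksum$. Symmetrisation is therefore pure bookkeeping, and the common denominator $\prod_i(1-\theta_iv)^2(1-\theta_iuv)$ in your plan never needs to be cleared.
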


\begin{proof}
  Put $\theta_i=s_i^2$, $u=\alpha^2$, $v=\beta^2$, so that $x^2=uv$ and
  $1-s_i^2\alpha^2=1-\theta_iu$. In these variables the three integrands of the
  first product $\Amix(\beta)\Grat(\alpha)\Bmix(\alpha)$ are
  \[
    \frac{v(1-\theta)}{(1-\theta v)^2},\qquad
    \frac{1-u}{1-\theta u},\qquad
    \frac{u(1-v)(1-\theta)}{(1-\theta u)(1-\theta uv)},
  \]
  the last by the integral representation of $\Bmix$; those of
  $\Amix(\alpha)\Grat(\beta)\Bmix(\beta)$ are their images under
  $u\leftrightarrow v$. Both products carry the same constant
  $uv(1-u)(1-v)$, composed of $v$ from $\Amix$, $1-u$ from $\Grat$ and
  $u(1-v)$ from $\Bmix$ (and symmetrically in the second product). Hence
  \[
    \begin{aligned}
      \Pint=uv(1-u)(1-v)\Bigl[\ &
                                  \frac{1-\theta_1}{(1-\theta_1v)^2}\cdot\frac1{1-\theta_2u}\cdot
                                  \frac{1-\theta_3}{(1-\theta_3u)(1-\theta_3uv)}\\
                                &-\frac{1-\theta_1}{(1-\theta_1u)^2}\cdot\frac1{1-\theta_2v}\cdot
                                  \frac{1-\theta_3}{(1-\theta_3v)(1-\theta_3uv)}\Bigr].
    \end{aligned}
  \]
  Each of these six functions is a monomial in the factors $\fker,\gker,\rker$
  of \cref{lem:kernel}, divided by $1-\theta u$. For the first product,
  \[
    \frac{1-\theta}{(1-\theta v)^2}=\frac{\fker\rker^2}{1-\theta u},\qquad
    \frac1{1-\theta u}=\frac{1}{1-\theta u},\qquad
    \frac{1-\theta}{(1-\theta u)(1-\theta uv)}=\frac{\gker}{1-\theta u},
  \]
  and for the second,
  \[
    \frac{1-\theta}{(1-\theta u)^2}=\frac{\fker}{1-\theta u},\qquad
    \frac1{1-\theta v}=\frac{\rker}{1-\theta u},\qquad
    \frac{1-\theta}{(1-\theta v)(1-\theta uv)}=\frac{\gker\rker}{1-\theta u},
  \]
  as is verified by cancelling $1-\theta u$. Multiplying by
  $\prod_i(1-\theta_iu)$ therefore clears all denominators:
  \[
    \frac{\prod_i(1-\theta_iu)}{uv(1-u)(1-v)}\,\Pint
    =\fker(\theta_1)\gker(\theta_3)
    \bigl[\rker(\theta_1)^2-\rker(\theta_2)\rker(\theta_3)\bigr],
  \]
  Summing over $\sigma\in S_3$ and grouping the terms according to the index $i$
  in the first position, we note that the bracket
  $\rker(\theta_i)^2-\rker(\theta_j)\rker(\theta_k)$ is symmetric in the remaining
  two indices, so the two orders of $j,k$ contribute
  $\fker(\theta_i)\bigl(\gker(\theta_j)+\gker(\theta_k)\bigr)(\rker(\theta_i)^2-\rker(\theta_j)\rker(\theta_k))$, and
  summing over $i$ is $\Ksum$.
\end{proof}

The prefactor in \eqref{eq:sym} is positive, so $\Nmix\ge0$ follows from
\cref{lem:kernel} with $\theta_i=s_i^2$, $u=\alpha^2$ and $v=\beta^2$; the
hypothesis $u\ge v$ corresponds to $\beta\le\alpha$.

\begin{lemma}[Kernel Lemma]\label{lem:kernel}
  For $\theta_1,\theta_2,\theta_3\in[0,1]$ and $0\le v\le u<1$, with
  $\fker(\theta)=\frac{1-\theta}{1-\theta u}$,
  $\gker(\theta)=\frac{1-\theta}{1-\theta uv}$,
  $\rker(\theta)=\frac{1-\theta u}{1-\theta v}$,
  \[
    \Ksum(\theta_1,\theta_2,\theta_3;u,v):=
    \sum_{i=1}^{3}\fker(\theta_i)\bigl(\gker(\theta_j)+\gker(\theta_k)\bigr)
    \bigl(\rker(\theta_i)^2-\rker(\theta_j)\rker(\theta_k)\bigr)\ \ge\ 0 ,
  \]
  where $\{j,k\}=\{1,2,3\}\setminus\{i\}$ in the $i$-th summand; the summand is
  symmetric in $j$ and $k$.
\end{lemma}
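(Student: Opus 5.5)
We prove the Kernel Lemma by reducing $\Ksum$ to a polynomial inequality and certifying it with a Pólya-type certificate, as announced in \cref{sub:strategy}.

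\emph{Clearing denominators.} For $\theta\in[0,1]$ and $0\le v\le u<1$ the denominators $1-\theta u$, $1-\theta v$, $1-\theta uv$ are positive. Multiplying $\Ksum$ by
\[
\prod_{i=1}^3(1-\theta_iu)(1-\theta_iuv)(1-\theta_iv)^2
\]
gives a polynomial $\Qcert(\theta_1,\theta_2,\theta_3;u,v)$ that is symmetric in the $\theta_i$ and has the same sign as $\Ksum$.

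\emph{Exhibiting the sign structure.} We then identify where $\Qcert$ vanishes. The brackets $\rker(\theta_i)^2-\rker(\theta_j)\rker(\theta_k)$ sum to $\tfrac12\sum(\rker_i-\rker_j)^2$ if the weights are equal, so $\Ksum$ is a weighted version of that nonnegative form.

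For $u=v$ we have $\rker\equiv1$, so $\Ksum=0$. We therefore substitute $u=v+w$ with $w\ge0$ and expect a factor $w$, and in fact $w^2$ when the $\theta_i$ coincide. Each $\rker_i-\rker_j$ carries a factor $(\theta_i-\theta_j)$ and a factor $w$, since
\[
\rker(\theta)-\rker(\theta')=\frac{(\theta'-\theta)(u-v)}{(1-\theta v)(1-\theta' v)}.
\]
Hence we should factor out $w$ and arrange the remainder by the squared differences $(\theta_i-\theta_j)^2$ via a symmetrisation similar to that of \cref{lem:sym}.

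\emph{Positivity after normalisation.} After these extractions we map the box to a simplex or orthant using variables such as $\theta_i$, $1-\theta_i$, $v$, $w$ and $1-u$, all nonnegative. We then expand in these variables and check that all coefficients are nonnegative. If some coefficient is negative, we multiply by a power of the sum of the variables (Pólya) until every coefficient is nonnegative. The computation is symbolic, with five variables at bidegree about $\CertBideg$, and we verify the nonnegative expansion by exact rational arithmetic, for instance via Kronecker substitution in Lean.

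\emph{Main obstacle.} The main difficulty is the degeneracy. $\Qcert$ vanishes on $u=v$ and on the diagonal $\theta_1=\theta_2=\theta_3$, and it may also touch zero at $u\to1$ or $\theta_i=1$. A naive Pólya certificate fails when there are boundary zeros on the simplex.

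We will first divide out the forced factors $w$ and $1-\theta_i$ and group the result into sums of squares of $(\theta_i-\theta_j)$. Only after that will we search for a certificate with nonnegative coefficients on the residual, choosing the coordinates so that the remaining zeros sit at vertices of the simplex, where Pólya's method still applies.
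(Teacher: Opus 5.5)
\textbf{There is a genuine gap.} The step that actually proves the lemma is a certificate of non-negativity for the cleared polynomial. You leave it as a search, and the coordinates you propose for it cannot succeed.

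As you note, $\Ksum$ vanishes on the whole diagonal $\theta_1=\theta_2=\theta_3$ for every $0\le v\le u<1$, because each bracket $\rker(\theta_i)^2-\rker(\theta_j)\rker(\theta_k)$ is then zero. Now take your variables $\theta_i$, $1-\theta_i$, $v$, $w=u-v$, $1-u$ at the point $\theta_1=\theta_2=\theta_3=\tfrac12$, $v=\tfrac15$, $u=\tfrac12$. Every variable, hence every monomial, is strictly positive there. A non-zero polynomial with non-negative coefficients cannot vanish at such a point, and multiplying by a power of the sum of the variables does not change the zero set. Dividing by $w$ does not help, since the diagonal zero persists for $u>v$. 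So no P\'olya-type certificate exists in these coordinates, for any $N$.

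Your fallback is also unsubstantiated. You propose to divide out ``forced factors'' and regroup into squares of $\theta_i-\theta_j$. But $1-\theta_i$ is not a factor of $\Ksum$: at $\theta=(1,0,0)$, $v=0$, $u=\tfrac12$ one finds $\Ksum=1$. More seriously, the weights $\fker(\theta_i)\bigl(\gker(\theta_j)+\gker(\theta_k)\bigr)$ differ between summands. The summands therefore contain terms of first order in the differences, and their cancellation is exactly what has to be proved. You give no mechanism that produces a sum-of-squares decomposition with non-negative multipliers.

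The missing idea is to use the $S_3$-symmetry of $\Ksum$ to assume $\theta_1\le\theta_2\le\theta_3$. Then pass to the gap coordinates $\tau_1=\theta_1$, $\tau_2=\theta_2-\theta_1$, $\tau_3=\theta_3-\theta_2$ and $\zeta_1=v$, $\zeta_2=u-v$, adding two slack variables so that everything is bihomogeneous of bidegree $(12,12)$. In these coordinates:
\begin{itemize}
\item the zero sets $\{\theta_1=\theta_2=\theta_3\}$ and $\{u=v\}$ become the coordinate faces $\{\tau_2=\tau_3=0\}$ and $\{\zeta_2=0\}$ (faces, not vertices);
\item the differences $\theta_i-\theta_j$ become sums of non-negative gaps.
\end{itemize}
No factoring of $w$ or sum-of-squares regrouping is then needed: all coefficients of $\Qcert$ in \eqref{eq:kerQ} are already non-negative, with no multiplier at all ($N=0$). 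P\'olya's theorem does not guarantee this when there are zeros on the simplex. It is a fact about this particular polynomial, which the paper checks in the Lean kernel by a Kronecker-substitution coefficient test.
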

\begin{proof}
After multiplication by its positive denominators, this is a polynomial
inequality in five variables. Since $\Ksum$ is symmetric in
$\theta_1,\theta_2,\theta_3$, we may assume
$\theta_1\le\theta_2\le\theta_3$.

The proof uses a certificate of P\'olya type~\cite{Polya1928Positive} in the
\emph{gap coordinates} of the two orderings. The constraints
$0\le\theta_1\le\theta_2\le\theta_3\le1$ and $0\le v\le u\le1$ say exactly that
\[
  \tau_1=\theta_1,\quad \tau_2=\theta_2-\theta_1,\quad \tau_3=\theta_3-\theta_2;
  \qquad \zeta_1=v,\quad \zeta_2=u-v
\]
are non-negative, with $\tau_1+\tau_2+\tau_3\le1$ and $\zeta_1+\zeta_2\le1$. Adjoin two
further variables $\tau_4\ge0$ and $\zeta_3\ge0$, otherwise arbitrary, and set
\[
  \Thom=\sum_{i=1}^4\tau_i,\quad \Theta_i=\tau_1+\dots+\tau_i\ \ (i\le3);\qquad
  \Shom=\sum_{j=1}^3\zeta_j,\quad \Uhom=\zeta_1+\zeta_2,\quad \Vhom=\zeta_1 ,
\]
which are linear forms in the seven gap coordinates. The original variables
are recovered on the slice $\Thom=\Shom=1$, where $\Theta_i=\theta_i$,
$\Uhom=u$ and $\Vhom=v$. The additional variables $\tau_4$ and $\zeta_3$ make
all expressions below \emph{bihomogeneous} in $(\tau_1,\dots,\tau_4)$ and
$(\zeta_1,\zeta_2,\zeta_3)$.

The four families of affine factors occurring in $\Ksum$ and its denominators
are homogenised as
\[
  \aker_i=\Thom-\Theta_i,\quad
  \bker_i=\Thom\Shom-\Theta_i\Uhom,\quad
  \cker_i=\Thom\Shom-\Theta_i\Vhom,\quad
  \dker_i=\Thom\Shom^2-\Theta_i\Uhom\Vhom ,
\]
that is $1-\theta_i$, $1-\theta_iu$, $1-\theta_iv$ and $1-\theta_iuv$ --- bihomogeneous of bidegrees $(1,0)$, $(1,1)$, $(1,1)$ and $(1,2)$. So each
of the three kernel factors is a ratio of two bihomogeneous forms of \emph{equal}
bidegree,
\[
  \fker(\Theta_i)=\frac{\Shom\,\aker_i}{\bker_i},\qquad
  \gker(\Theta_i)=\frac{\Shom^2\aker_i}{\dker_i},\qquad
  \rker(\Theta_i)=\frac{\bker_i}{\cker_i} ,
\]
on the slice $\Theta_i=\theta_i$.

Clearing those denominators --- multiplying $\Ksum$ by $\prod_i\bker_i\dker_i\cker_i^2$,
and omitting the common positive factor $\Shom^3$ --- yields the polynomial
\begin{equation}\label{eq:kerQ}
  \Qcert=\sum_{i=1}^3\aker_i\,\bker_j\bker_k\,
  \bigl(\aker_j\dker_i\dker_k+\aker_k\dker_i\dker_j\bigr)\,
  \bigl(\bker_i^2\cker_j^2\cker_k^2-\bker_j\bker_k\cker_i^2\cker_j\cker_k\bigr),
\end{equation}
still with $\{j,k\}=\{1,2,3\}\setminus\{i\}$, bihomogeneous of bidegree $(12,12)$ in
the seven gaps.

We show that all coefficients of $\Qcert$, expanded as a polynomial in the
seven gap coordinates, are non-negative. Then $\Qcert\ge0$ whenever all gap
coordinates are non-negative, and \cref{lem:kernel} follows on the slice
$\Thom=\Shom=1$, where the cleared denominators are positive. In the
terminology of P\'olya's theorem~\cite{Polya1928Positive}, no multiplier
$(\sum_i\tau_i+\sum_j\zeta_j)^N$ is required, i.e.\ $N=0$.

\emph{Coefficient check.} The non-negativity of the coefficients is verified by
the Lean kernel without expanding $\Qcert$; the procedure and its soundness are
described in \cref{apx:cert:polya}. This is one of the three certified
computations in the proof of \cref{thm:c1}.
\end{proof}

\paragraph{The diagonal case.} Let $\alpha=\beta=\tanh\vartheta$ with $\vartheta>0$, so
that $x=\tanh^2\vartheta$, and write $\Zcosh=\cosh2\vartheta$. Both sides of
\cref{thm:iii} are then squares of positive numbers, and the inequality is
equivalent to $x(\Grat(x)-\Grat(\alpha))<(1-\Grat(x))\Grat(\alpha)$, i.e.\ to
\begin{equation}\label{eq:diag}
  \Grat(x)\bigl(x+\Grat(\alpha)\bigr)<(1+x)\,\Grat(\alpha).
\end{equation}
From \eqref{eq:gdef} and
$\artanh(\tanh^2\vartheta)=\tfrac12\log\Zcosh$ one obtains
\[
  \Grat(\tanh\vartheta)=\frac{2\vartheta}{\sinh2\vartheta},\qquad
  \Grat(\tanh^2\vartheta)=\frac{2\Zcosh\log\Zcosh}{\sinh^22\vartheta},\qquad
  1+\tanh^2\vartheta=\frac{\Zcosh}{\cosh^2\vartheta}.
\]
Substituting these expressions into \eqref{eq:diag}, using
$\sinh^22\vartheta=\Zcosh^2-1$ and $\tanh^2\vartheta=(\Zcosh-1)/(\Zcosh+1)$, and
dividing by the positive quantity
$2\Zcosh\log\Zcosh\cdot 2\vartheta/\bigl((\Zcosh+1)\sinh 2\vartheta\bigr)$
turns \eqref{eq:diag} into the following one-variable statement.

\begin{lemma}[Diagonal Case]\label{lem:core}
  For every $\vartheta>0$, with $\Zcosh := \cosh2\vartheta$,
  \begin{equation}\label{eq:core}
    \Fcore(\vartheta):=\frac1{\log \Zcosh}-\frac1{\Zcosh-1}-\frac{\tanh\vartheta}{2\vartheta}>0 .
  \end{equation}
\end{lemma}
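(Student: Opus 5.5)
The plan is to treat \eqref{eq:core} as a one-variable inequality on $(0,\infty)$. I would split the range into three parts: a neighbourhood of the origin handled by a series or polynomial bound, a middle range handled by an interval sweep, and a tail handled analytically. The introduction announces exactly this structure, with a polynomial bound near the origin and an interval sweep in the middle, so this is the route I would follow.

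\emph{Tail.} For $\vartheta\ge\SweepHi$ we have $\Zcosh=\cosh2\vartheta\ge\cosh 6$. Hence $1/(\Zcosh-1)$ is tiny, while $1/\log\Zcosh\ge 1/(2\vartheta)$ because $\log\cosh 2\vartheta\le 2\vartheta$. The inequality then reduces to
\[
  \frac{1}{\log\Zcosh}-\frac{\tanh\vartheta}{2\vartheta}\ \ge\ \frac{1-\tanh\vartheta}{2\vartheta}>\frac{1}{\Zcosh-1}.
\]
Since $1-\tanh\vartheta=2/(e^{2\vartheta}+1)$ and $\Zcosh-1\ge e^{2\vartheta}/2-1$, this becomes an elementary exponential comparison once $\vartheta$ is moderately large. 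A lower threshold (even $\vartheta\ge1$) should already suffice, which would shrink the sweep.

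\emph{Near the origin.} All three terms of \eqref{eq:core} are singular or delicate at $\vartheta=0$. Writing $y=\vartheta^2$, we have $1/\log\Zcosh = 1/(2y)+c_0+c_1y+\dots$ and $1/(\Zcosh-1)=1/(2y)-1/6+\dots$, so the poles cancel. Likewise $\tanh\vartheta/(2\vartheta)=\tfrac12-y/6+\dots$. Consequently $\Fcore$ extends to an even analytic function with $\Fcore(0)$ equal to a constant that I expect to be positive. A quick expansion, $\log\cosh 2\vartheta=2y-\tfrac43y^2+\dots$, gives $1/\log\Zcosh=\tfrac1{2y}+\tfrac13+\dots$ and hence $\Fcore(0)=\tfrac13+\tfrac16-\tfrac12$. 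That is $0$, so positivity is decided at the next order, which makes this region delicate. Indeed, the coefficients $q_0=\tfrac4{15},q_1,\dots$ of a degree-$\QDeg$ polynomial listed in the preamble suggest that a function $Q$ with $\Fcore(\vartheta)=\vartheta^2 Q(\vartheta^2)\cdot(\text{positive})$ or similar is bounded below by a truncated series.

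My plan for this region is as follows. Multiply \eqref{eq:core} by the positive quantity $\log\Zcosh\,(\Zcosh-1)\,2\vartheta\cosh\vartheta$, which clears the singular denominators and yields an entire function expressed through $\cosh$, $\sinh$, and a single $\log\cosh2\vartheta$. Next bound $\log\Zcosh$ above and below by truncated Taylor polynomials with explicit remainders, and bound $\cosh$ and $\sinh$ by their truncated series. What remains is a polynomial inequality in $\vartheta^2$ on $[0,\SweepLo]$. I would verify it by showing that, after dividing out the leading power $\vartheta^{2k}$, the lowest coefficient is positive ($\tfrac4{15}$) and dominates the sum of the absolute values of the negative coefficients on the interval. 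This is a certified-coefficient check that can be run in exact rational arithmetic.

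\emph{Middle range.} On $[\SweepLo,\SweepHi]$ I would use a monotone interval sweep. Subdivide into cells; on each cell $[\vartheta_0,\vartheta_1]$, bound each term of \eqref{eq:core} using monotonicity. The map $\vartheta\mapsto1/\log\Zcosh$ is decreasing, $1/(\Zcosh-1)$ is decreasing, and $\tanh\vartheta/\vartheta$ is decreasing. This gives the lower bound $1/\log\cosh2\vartheta_1-1/(\cosh2\vartheta_0-1)-\tanh\vartheta_0/(2\vartheta_0)$, which I then check with rational enclosures of $\cosh$, $\log$, and $\tanh$ at the endpoints. Cells are refined adaptively where the margin is small.

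The main obstacle is the region near the origin. There the leading terms cancel to high order, so $\Fcore$ is of size $O(\vartheta^2)$ or smaller while the individual terms are of size $1/\vartheta^2$. The sweep also cannot reach $0$, so the series bound must be sharp enough to overlap with the sweep at $\SweepLo$.
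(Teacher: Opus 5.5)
Your three-range plan matches the paper's, but the tail step fails, and choosing a different threshold will not fix it. After $\log\Zcosh\le 2\vartheta$ you still need $\frac{1-\tanh\vartheta}{2\vartheta}>\frac1{\Zcosh-1}$. However, $\frac{1-\tanh\vartheta}{2\vartheta}=\frac{1}{\vartheta(e^{2\vartheta}+1)}<\frac{e^{-2\vartheta}}{\vartheta}$, while $\Zcosh-1<\tfrac12e^{2\vartheta}$ gives $\frac1{\Zcosh-1}>2e^{-2\vartheta}$. So the required inequality is false for every $\vartheta\ge\tfrac12$; at $\vartheta=\SweepHi$ it reads $0.0008>0.0050$. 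It also gets worse as $\vartheta$ grows. Both quantities you kept are $O(e^{-2\vartheta})$, and the bound $\log\Zcosh\le2\vartheta$ discards the only term that keeps $\Fcore$ positive for large $\vartheta$. Since $\log\Zcosh=2\vartheta-\log2+\log(1+e^{-4\vartheta})$, the difference $\frac1{\log\Zcosh}-\frac1{2\vartheta}$ is about $\frac{\log2}{4\vartheta^2}$, which is only polynomially small. Keeping this term repairs the step. By $\log(1+t)\le t$ we get $\log\Zcosh\le2\vartheta-\log2+e^{-4\vartheta}$, hence $\frac1{\log\Zcosh}-\frac1{2\vartheta}\ge\frac{\log2-e^{-4\vartheta}}{4\vartheta^2}$. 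With $\tanh\vartheta\le1$ and $\Zcosh-1\ge\tfrac14e^{2\vartheta}$ this gives $\Fcore\ge\frac{\log2-e^{-4\vartheta}}{4\vartheta^2}-4e^{-2\vartheta}>0$ for $\vartheta\ge\SweepHi$, which is exactly the paper's tail argument.

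Your other two ranges are workable but need corrections.

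\emph{Near the origin.} Your expansion stops too early: the $\vartheta^2$-coefficient also vanishes, and $\Fcore=\tfrac2{45}\vartheta^4+O(\vartheta^6)$. Your cleared numerator $2\vartheta\cosh\vartheta\,(\Zcosh-1)\log\Zcosh\cdot\Fcore$ is therefore $\tfrac{16}{45}\vartheta^9+\dots$, while its two largest terms are of order $\vartheta^3$. Your Taylor bounds for $\log\Zcosh$, $\sinh$ and $\cosh$ must therefore be very accurate, or the exact cancellations must be carried out first, before the lowest coefficient comes out positive. Also, the value $\tfrac4{15}$ belongs to the paper's normalisation, not to yours. The paper avoids remainder estimates for the logarithm altogether. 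It writes $\Fcore>0$ as $\Zcosh<e^{\eta}$ with $\eta=2\vartheta(\Zcosh-1)/\bigl(2\vartheta+\tanh\vartheta\,(\Zcosh-1)\bigr)$ and bounds $e^\eta$ below by its degree-four partial sum. After cancelling the common factor it obtains $3sD^3<3A^2D^2+2A^3s^2D+A^4s^4$, where $s=\sinh\vartheta$, $A=\vartheta\cosh\vartheta$ and $D=A+s^3$. Both sides have non-negative coefficients, so envelopes for $\sinh$ and $\cosh$ with error $O(\vartheta^6)$ suffice.

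\emph{Middle range.} Your sweep by termwise monotonicity is sound, but each cell loses the full variation of $1/\log\Zcosh$ or of $1/(\Zcosh-1)$ over its width. Each of these has derivative about $-11$ at $\vartheta=\SweepLo$, against a margin of only $\Fcore(\SweepLo)\approx1.3\cdot10^{-3}$, so the cells there must be about $10^{-4}$ wide. The paper's mean-value test instead bounds $\lvert\Fcore'\rvert$, in which these two derivatives nearly cancel, and needs only $\NCells$ cells.
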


\begin{proof}
  We treat the ranges $(0,\SweepLo]$, $(\SweepLo,\SweepHi]$ and
  $(\SweepHi,\infty)$ separately. The first two ranges are handled by certified
  computations.
  \begin{description}
  \item[\boldmath$0<\vartheta\le\SweepLo$.] Put
    \[
      W=2\vartheta+\tanh\vartheta\,(\Zcosh-1)>0,\qquad
      \eta=\frac{2\vartheta(\Zcosh-1)}{W}>0 .
    \]
    Clearing the three denominators of \eqref{eq:core} gives
    $\Fcore=2\Dcore/\bigl(2\vartheta(\Zcosh-1)\log \Zcosh\bigr)$ with
    $2\Dcore=2\vartheta(\Zcosh-1)-W\log \Zcosh=W(\eta-\log \Zcosh)$, so $\Fcore>0\iff \log \Zcosh<\eta\iff \Zcosh<e^{\eta}$.
    
    Since $\eta>0$, the exponential series is bounded below by its partial sums,
    so it suffices to show
    \begin{equation}\label{eq:logfree}
      \Zcosh<1+\eta+\tfrac{\eta^2}2+\tfrac{\eta^3}6+\tfrac{\eta^4}{24} < e^\eta
      \quad\Longrightarrow\quad \Fcore(\vartheta)>0 .
    \end{equation}
    This sufficient condition fails for large $\vartheta$, where
    $\eta\approx2\vartheta$ while $\Zcosh$ grows like $e^{2\vartheta}/2$, and is
    used only for $\vartheta\le\SweepLo$.

    Write $s=\sinh\vartheta$ and $c=\cosh\vartheta$, so that
    $\Zcosh-1=2s^2$, and put $A=\vartheta c$ and $D=A+s^3$. Then $W=2D/c$ and
    $\eta=2As^2/D$, and the slack in \eqref{eq:logfree} factors as
    \[
      1+\eta+\tfrac{\eta^2}2+\tfrac{\eta^3}6+\tfrac{\eta^4}{24}-\Zcosh
      =\frac{2s^4}{3D^4}\Bigl[3A^2D^2+2A^3s^2D+A^4s^4-3sD^3\Bigr] ,
    \]
    so \eqref{eq:logfree} is the algebraic inequality
    \begin{equation}\label{eq:alg1}
      3sD^3<3A^2D^2+2A^3s^2D+A^4s^4 .
    \end{equation}
    After expansion, both sides are polynomials with non-negative coefficients in $\vartheta$, $s$
    and $c$, so \eqref{eq:alg1} follows from the same inequality with $s,c$
    replaced by polynomial upper bounds on the left and by non-negative polynomial
    lower bounds on the right. For $0\le\vartheta\le1$, the tail bound
    $\bigl|e^y-\sum_{j<6}y^j/j!\bigr|\le\tfrac{7}{4320}\lvert y\rvert^6$ for
    $\lvert y\rvert\le1$ (\cref{apx:cert:poly}) gives
    \[
      \bigl|s-(\vartheta+\tfrac{\vartheta^3}6+\tfrac{\vartheta^5}{120})\bigr|\le E,\qquad
      \bigl|c-(1+\tfrac{\vartheta^2}2+\tfrac{\vartheta^4}{24})\bigr|\le E,\qquad
      E=\tfrac{7}{4320}\vartheta^6 ,
    \]
    and the resulting lower bounds $\sigma^-,\gamma^-$ for $s,c$ are
    non-negative on $[0,\SweepLo]$. After this substitution, the difference of
    the two sides of \eqref{eq:alg1} is a polynomial of the form
    \[
      \text{(right, lower envelopes)}-\text{(left, upper envelopes)}=\vartheta^8Q(\vartheta),
    \]
    with $Q$ of degree $\QDeg$, rational coefficients and constant term $q_0=4/15$. On
    $[0,\SweepLo]$ the constant term dominates the negative part,
    $\sum_{k\ge1}\max(0,-q_k)\,(\SweepLo)^k\le\QNegSum<4/15$, so $Q>0$ and
    \eqref{eq:alg1} holds. This step is a certified computation: the
    coefficients of $Q$ are generated outside the proof, the polynomial
    identity defining $Q$ and the bound on the coefficients are verified in Lean.
    The envelopes, the identity and the coefficients of
    $Q$ are given in \cref{apx:cert:poly}.

  \item[\boldmath$\SweepLo<\vartheta\le\SweepHi$.] This is another certified
    computation in the proof of \cref{thm:c1}.
    We use a sweep based on the
    mean value theorem. If $|\Fcore'|\le L$ on a cell $[\vartheta_-,\vartheta_+]$
    with midpoint $\vartheta_c=\tfrac12(\vartheta_-+\vartheta_+)$, then
    $\Fcore\ge\Fcore(\vartheta_c)-L(\vartheta_+-\vartheta_-)$ on the cell. Hence
    a lower bound for $\Fcore(\vartheta_c)$ and an upper bound $L$ for
    $|\Fcore'|$ on the cell establish $\Fcore>0$ on the cell whenever
    $L(\vartheta_+-\vartheta_-)$ is smaller than that lower bound. The interval
    $[\SweepLo,\SweepHi]$ is covered by $\NCells$ cells and each cell passes this
    test in exact
    rational arithmetic.
    The check is carried out by the Lean kernel and details are provided in~\cref{apx:cert:iv,apx:cert:cells}.

  \item[\boldmath$\vartheta>\SweepHi$.] From $\cosh u=e^u(1+e^{-2u})/2$ and
    $\log(1+t)\le t$,
    \[
      \log \Zcosh\le2\vartheta-\log2+e^{-4\vartheta}\le2\vartheta ,
    \]
    so $\frac1{\log \Zcosh}-\frac1{2\vartheta}\ge\frac{\log2-e^{-4\vartheta}}{4\vartheta^2}$. With
    $\tanh\vartheta\le1$ and $\Zcosh-1\ge e^{2\vartheta}/4$ this gives
    \[
      \Fcore\ \ge\ \frac{\log2-e^{-4\vartheta}}{4\vartheta^2}-4e^{-2\vartheta}
      =\frac{\log2-e^{-4\vartheta}-16\vartheta^2e^{-2\vartheta}}{4\vartheta^2} ,
    \]
    and $\vartheta^2\le9e^{\vartheta-3}$ for $\vartheta\ge3$ makes
    $16\vartheta^2e^{-2\vartheta}\le144e^{-\vartheta-3}\le144e^{-6}<0.36$, while
    $e^{-4\vartheta}\le e^{-12}<10^{-3}$ and $\log2>0.693$. Hence
    $\Fcore(\vartheta)>0$.
  \end{description}
\end{proof}

\subsection{Step (D): the corner bound}\label{sec:D}

By steps (A)--(C), an optimiser exists, and if both of its channels are
non-degenerate with all four atoms in $(-1,1)$, it is a BSC pair by
\cref{thm:B}, which \cref{prop:saddle} excludes. Degenerate channels are treated
in \cref{sec:c-assembly}. It remains to consider optimisers in which one side has
an atom at $\pm1$. A non-degenerate channel whose bias law has an atom at
$\pm1$ is a Z- or an S-channel in the sense of \cref{def:zs}. With one side fixed
to such a channel, the value and the rate of the other side are, by
\eqref{eq:ST}, linear functionals of its bias law, and the problem is solved by
the dual certificates below.

\begin{lemma}\label{lem:cert}
  Let $\resp(s)=\mathbb E[\ff(sT)]$ be the response function of a fixed $V$-side
  bias law $T$, and let $\lambda_0,\lambda_1\in\R$ and $\lambda_2\ge0$ satisfy
  \begin{equation}\label{eq:cert}
    \lambda_0+\lambda_1s+\lambda_2\fe(s)\ \ge\ \resp(s)
    \qquad\text{for all } s\in[-1,1].
  \end{equation}
  Then every mean-zero random variable $S$ with values in $[-1,1]$ and
  $\mathbb E[\fe(S)]\le C_u$ satisfies
  $\mathbb E[\resp(S)]\le\lambda_0+\lambda_2C_u$. If the inequality in
  \eqref{eq:cert} is reversed and $\mathbb E[\fe(S)]\ge C_u$, then
  $\mathbb E[\resp(S)]\ge\lambda_0+\lambda_2C_u$.
\end{lemma}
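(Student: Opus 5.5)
This is weak duality. The plan is to evaluate the pointwise inequality \eqref{eq:cert} at the random point $S$ and take expectations, using the mean-zero condition to remove the linear term and the rate constraint to bound the $\fe$-term.

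The first step checks integrability. Since $S$ takes values in $[-1,1]$, all three functions appearing in \eqref{eq:cert} are bounded on this interval, so every expectation below is finite. Indeed, $\fe$ is continuous on $[-1,1]$ with $\fe(\pm1)=\log2$. Likewise $\resp(s)=\mathbb E[\ff(sT)]$ is bounded, because $\ff$ is continuous on $[-1,1]$ under the convention $0\log0=0$ and $|sT|\le1$.

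The second step substitutes $s=S$ in \eqref{eq:cert}, which holds pointwise, and takes expectations. Monotonicity and linearity of the expectation give
\[
  \mathbb E[\resp(S)]\le\lambda_0+\lambda_1\,\mathbb E[S]+\lambda_2\,\mathbb E[\fe(S)].
\]
The middle term vanishes because $\mathbb E[S]=0$. This is exactly why $\lambda_1$ may have either sign: it multiplies a quantity that is zero. The remaining term is handled by $\lambda_2\ge0$ together with $\mathbb E[\fe(S)]\le C_u$, which give $\lambda_2\,\mathbb E[\fe(S)]\le\lambda_2C_u$. Combining these yields $\mathbb E[\resp(S)]\le\lambda_0+\lambda_2C_u$.

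The reversed statement follows the same way. Taking expectations of the reversed pointwise inequality gives $\mathbb E[\resp(S)]\ge\lambda_0+\lambda_2\,\mathbb E[\fe(S)]$. Then $\lambda_2\ge0$ and $\mathbb E[\fe(S)]\ge C_u$ give $\lambda_2\,\mathbb E[\fe(S)]\ge\lambda_2C_u$.

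There is no real obstacle. The only points to watch are the sign of $\lambda_2$ and the direction of the rate constraint, which must match. It is also worth noting that $S$ need not be two-point or come from a binary channel: the argument uses only the mean-zero condition and the range $[-1,1]$. This generality is what makes the lemma usable for arbitrary $U$-side channels once the $V$-side is fixed to a Z- or S-channel. The substantive work lies in constructing multipliers satisfying \eqref{eq:cert}, which is deferred to the subsequent lemma.
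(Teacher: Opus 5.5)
Your proposal is correct and follows the paper's own proof: take expectations of the pointwise inequality at $S$, remove the $\lambda_1$ term using $\mathbb E[S]=0$, and bound $\lambda_2\,\mathbb E[\fe(S)]$ using $\lambda_2\ge0$ and the rate constraint, with the reversed case handled the same way. Your integrability remark, that all functions involved are bounded on $[-1,1]$, is a small addition the paper leaves implicit.
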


By \cref{prop:lr} with $\dlt=1$, $\mathbb E[\resp(S)]=\I(U;V)$ and
$\mathbb E[\fe(S)]=\I(U;X)$ when $S$ is the bias variable of $\chU$ and $T$ that
of $\chV$.

\begin{proof}
  Taking expectations in \eqref{eq:cert} and using $\mathbb E[S]=0$ and
  $\lambda_2\ge0$,
  \[
    \mathbb E[\resp(S)]
    \le\lambda_0+\lambda_1\,\mathbb E[S]+\lambda_2\,\mathbb E[\fe(S)]
    =\lambda_0+\lambda_2\,\mathbb E[\fe(S)]
    \le\lambda_0+\lambda_2C_u .
  \]
  The second statement follows in the same way with all inequalities reversed.
\end{proof}

\Cref{lem:cert} is weak duality for the linear problem over bias laws, in which
$\lambda_0$, $\lambda_1$ and $\lambda_2$ are the multipliers of the
normalisation, the zero-mean condition and the rate constraint;
\cref{lem:tight} below is the corresponding complementary slackness. This
linearity holds in the bias law, not in the channel parameters, and the linear
problem alone does not bound the number of atoms: with three constraints it
admits optimal laws with three atoms, in line with the cardinality bound
of~\cite[Prop.~1]{Dikshtein2022Double}. For this reason steps (A)--(C) are
needed. It remains to construct certificates for $\resp=\resp_d$ whose bounds
coincide with the conjectured values.

\begin{proposition}\label{prop:certs}
  Let $a,d\in(0,1)$ and
  \[
    \lambda_2:=\frac{\log\frac{1+d}{1-da}}{\log\frac{2}{1-a}},
  \]
  and let $\lambda_1,\lambda_0$ be determined by $\Dcert'(a)=0$ and
  $\Dcert(-1)=0$, where
  $\Dcert(s):=\lambda_0+\lambda_1s+\lambda_2\fe(s)-\resp_d(s)$. Then
  $\lambda_2\in(0,1)$, $\Dcert(a)=0$, and $\Dcert\ge0$ on $[-1,1]$.
  Similarly, let
  \[
    \tilde\lambda_2:=\frac{2d\log\frac2{1-a}-(1-d)\log\frac{1+da}{1-d}}
    {(1+d)\log\frac2{1-a}},
  \]
  and let $\tilde\lambda_1,\tilde\lambda_0$ be determined by
  $\widetilde\Dcert'(-a)=0$ and $\widetilde\Dcert(1)=0$, where
  $\widetilde\Dcert(s):=\tilde\lambda_0+\tilde\lambda_1s+\tilde\lambda_2\fe(s)-\resp_d(s)$.
  Then $\tilde\lambda_2\in(0,1)$, $\widetilde\Dcert(-a)=0$, and
  $\widetilde\Dcert\le0$ on $[-1,1]$.
\end{proposition}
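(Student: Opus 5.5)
The plan is to exploit the fact that, although $\Dcert$ involves logarithms, its \emph{second} derivative is a rational function whose numerator is only affine in $s$. The first task is the bookkeeping: show that the stated $\lambda_2$ is exactly the value for which the three conditions $\Dcert(-1)=0$, $\Dcert'(a)=0$, $\Dcert(a)=0$ are compatible. For fixed $\lambda_2$, the conditions $\Dcert(-1)=0$ and $\Dcert'(a)=0$ determine $\lambda_0$ and $\lambda_1$ linearly. The remaining equation $\Dcert(a)=0$ then says that the secant of $\resp_d-\lambda_2\fe$ over $[-1,a]$ has the same slope as its tangent at $a$. Since $\fe$ and $\resp_d$ are explicit, this condition is affine in $\lambda_2$. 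Its coefficients should simplify, using $\fe(-1)=\log2$, $\ff(-1)=0$ and the closed forms for $\fe(a)$, $\fe'(a)=\artanh a$, $\resp_d(a)$ and $\resp_d'(a)$. After simplification it should read $\lambda_2\log\frac{2}{1-a}=\log\frac{1+d}{1-da}$. I expect this symbolic verification to be the main obstacle: it is routine in principle, but many logarithmic terms must cancel. I would carry it out in the form ``tangent slope minus secant slope'' for $\fe$ and for $\resp_d$ separately, where the same identity $(1+a)\artanh a-\fe(a)=-\log(1-a)$ used in \cref{lem:mono} should do most of the cancellation. The $\tilde\lambda_2$ case is the analogous computation with the roles of the endpoints $\pm1$ and the tangency points $\pm a$ reflected.

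The range claims are elementary. The numerator $\log\frac{1+d}{1-da}$ is positive. The inequality $\lambda_2<1$ is equivalent to $(1+d)(1-a)<2(1-da)$, whose difference is $(1-d)(1+a)>0$. For $\tilde\lambda_2$ I would argue similarly: positivity and the bound below $1$ each reduce to a comparison of $\log\frac{1+da}{1-d}$ with multiples of $\log\frac2{1-a}$, again via $\log x\le x-1$-type bounds.

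For the sign of $\Dcert$, I would use $\ff''(z)=1/(1+z)$ and $\fe''(s)=1/(1-s^2)$ to compute
\[
  (1+d)(1-s^2)(1-ds)\,\Dcert''(s)=\lambda_2(1+d)(1-ds)-d(1-s)(1-ds)-d^2(1-s^2)=:q(s).
\]
The $s^2$-terms cancel, so $q$ is affine in $s$. Moreover $q(1)=\lambda_2(1-d^2)>0$. Hence on $(-1,1)$ either $\Dcert''>0$ throughout, or there is a $c$ such that $\Dcert$ is concave on $[-1,c]$ and convex on $[c,1]$. The case where $\Dcert$ is convex throughout is covered by the argument below with $c=-1$. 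The same expression, with $\tilde\lambda_2$, governs $\widetilde\Dcert''$.

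It remains to run the shape argument. For $\Dcert$, suppose first $a\le c$. Then $\Dcert$ is concave on $[-1,a]$ with $\Dcert(-1)=\Dcert(a)=0$ and $\Dcert'(a)=0$. Concavity with two equal endpoint values gives $\Dcert\ge0$ on the interval, while concavity with $\Dcert'(a)=0$ makes $\Dcert\le0$ there. Hence $\Dcert\equiv0$ on $[-1,a]$, which is impossible since $\Dcert''$ vanishes at most once. So $a>c$. Then $\Dcert$ is convex on $[c,1]$ with $\Dcert(a)=\Dcert'(a)=0$, so $\Dcert\ge0$ there; in particular $\Dcert(c)\ge0$. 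On $[-1,c]$, concavity together with $\Dcert(-1)=0$ and $\Dcert(c)\ge0$ gives $\Dcert\ge0$.

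For $\widetilde\Dcert$ the argument is mirrored. If $-a\ge c$, convexity on $[-a,1]$ together with the tangency $\widetilde\Dcert(-a)=\widetilde\Dcert'(-a)=0$ and $\widetilde\Dcert(1)=0$ would force $\widetilde\Dcert\equiv0$, so $-a<c$. On $[-1,c]$, concavity with a zero tangent at $-a$ then gives $\widetilde\Dcert\le0$, so $\widetilde\Dcert(c)\le0$. On $[c,1]$, convexity with $\widetilde\Dcert(c)\le0$ and $\widetilde\Dcert(1)=0$ gives $\widetilde\Dcert\le0$. Continuity of $\fe$ and $\resp_d$ at $\pm1$, with the convention $0\log0=0$, handles the endpoints.
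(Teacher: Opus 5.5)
Your proposal is correct. The contact computations and the range of $\lambda_2$ follow the paper; besides the $\fe$ identity, the paper uses $(1+a)\resp_d'(a)-\resp_d(a)=-\log(1-da)$. Your $q$ is $(1+d)$ times the paper's numerator $N(s)=\lambda_2-d+d(1-\lambda_2)s$.

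The real difference is how you place the inflection point relative to the tangency point. The paper uses $\lambda_2<1$ to make $N$ increasing and then checks $N(a)\ge0$ directly. This amounts to $d(1-a)\log\frac{2}{1-a}\le(1-da)\log\frac{1+d}{1-da}$, which it proves by the log-sum inequality. In the mirror case it checks $N(-a)\le0$ by two applications of $\log x\le x-1$. You bypass both estimates. If the inflection point were on the wrong side, $\Dcert$ (resp.\ $\widetilde\Dcert$) would be strictly concave (resp.\ convex) on an interval carrying both zeros and the horizontal tangent, which is impossible.

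So your route shows that the location of the inflection point is forced by the three contact conditions alone. Your sign dichotomy also needs only $q(1)>0$ rather than $\lambda_2<1$. The paper's route, in exchange, is direct and isolates that location as an explicit inequality in $a$ and $d$.

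One consequence is that in your argument the positivity of $\tilde\lambda_2$, which you only sketch, is load-bearing: it is what gives $q(1)=\tilde\lambda_2(1-d^2)>0$ in the mirror case. It does follow as you indicate. The bounds $-\log(1-a)\ge a$, $\log(1+da)\le da$ and $-(1-d)\log(1-d)\le d$ give a lower bound of $d(2\log2-1)+da(1+d)>0$ on the numerator of $\tilde\lambda_2$. The bound $\tilde\lambda_2<1$ needs no estimate, being equivalent to $-\log\frac{2}{1-a}<\log\frac{1+da}{1-d}$.
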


\begin{proof}
  We use $\resp_d(-1)=\log(1+d)$,
  $\resp_d'(s)=\tfrac{d}{1+d}\log\frac{1+s}{1-ds}$ and the identity
  $(1+a)\artanh a-\fe(a)=-\log(1-a)$, which follows from \eqref{eq:fofe}.

  \emph{The first contact.} Subtracting $\Dcert(-1)=0$ from $\Dcert(a)$ and
  eliminating $\lambda_1=\resp_d'(a)-\lambda_2\artanh a$ by means of
  $\Dcert'(a)=0$ gives
  \[
    \Dcert(a)=\lambda_2\Bigl[(1+a)\artanh a-\fe(a)+\log2\Bigr]
    -\Bigl[(1+a)\resp_d'(a)-\resp_d(a)+\resp_d(-1)\Bigr].
  \]
  The first bracket equals $\log\frac2{1-a}$. A direct computation gives
  $(1+a)\resp_d'(a)-\resp_d(a)=-\log(1-da)$, so the second bracket equals
  $\log\frac{1+d}{1-da}$. Hence $\Dcert(a)=0$ by the choice of $\lambda_2$.

  \emph{Range of $\lambda_2$.} Numerator and denominator of $\lambda_2$ are
  positive. Moreover $\lambda_2<1$ is equivalent to
  $\frac{1+d}{1-da}<\frac{2}{1-a}$, i.e.\ to $(1-d)(1+a)>0$.

  \emph{Shape of $\Dcert$.} Let $N(s):=\lambda_2-d+d(1-\lambda_2)s$. Since
  $\fe''(s)=1/(1-s^2)$ and $\resp_d''(s)=d/[(1+s)(1-ds)]$,
  \[
    \Dcert''(s)=\frac{\lambda_2}{1-s^2}-\frac{d}{(1+s)(1-ds)}
    =\frac{N(s)}{(1+s)(1-s)(1-ds)} ,
  \]
  and the denominator is positive on $(-1,1)$. As $\lambda_2<1$ and $d>0$, $N$ is
  strictly increasing with a single zero $r$, so $\Dcert$ is concave on
  $[-1,r]$ and convex on $[r,1]$.

  \emph{Position of the tangency point.} We show $N(a)\ge0$, i.e.\ $r\le a$.
  Since $N(a)=\lambda_2(1-da)-d(1-a)$, this is equivalent to
  $\lambda_2\ge d(1-a)/(1-da)$, i.e.\ to
  \[
    d(1-a)\log\tfrac2{1-a}\ \le\ (1-da)\log\tfrac{1+d}{1-da} .
  \]
  This is the log-sum inequality
  $\sum_i\alpha_i\log\frac{\alpha_i}{\beta_i}\ge\bigl(\sum_i\alpha_i\bigr)\log\frac{\sum_i\alpha_i}{\sum_i\beta_i}$
  for $(\alpha_1,\beta_1)=(1-d,1-d)$ and $(\alpha_2,\beta_2)=(d(1-a),2d)$, for
  which $\alpha_1+\alpha_2=1-da$ and $\beta_1+\beta_2=1+d$.

  \emph{Non-negativity.} On $[a,1]$, $N\ge N(a)\ge0$, so $\Dcert$ is convex
  there, $\Dcert'\ge\Dcert'(a)=0$, and $\Dcert\ge\Dcert(a)=0$. On $[r,a]$,
  $\Dcert$ is convex as well, so $\Dcert'\le\Dcert'(a)=0$ and
  $\Dcert\ge\Dcert(a)=0$; in particular $\Dcert(r)\ge0$. On $[-1,r]$, $\Dcert$
  is concave and hence lies above the chord joining $\Dcert(-1)=0$ and
  $\Dcert(r)\ge0$, which is non-negative.

  \emph{The second certificate.} The argument is the same with the roles of
  the endpoints exchanged. Let $M:=\log\frac2{1-a}>0$ and
  $L:=\log\frac{1+da}{1-d}>0$. With $\resp_d(1)$ and $\resp_d'(-a)$ computed
  from the formulas above, the analogous elimination gives
  $(1+d)M\,\tilde\lambda_2=2dM-(1-d)L$ as the condition for
  $\widetilde\Dcert(-a)=0$, which holds by the choice of $\tilde\lambda_2$.
  Next, $\tilde\lambda_2<1$ is equivalent to $-M<L$. For
  $\tilde\lambda_2>0$, the bounds $-\log(1-a)\ge a$, $\log(1+da)\le da$ and
  $-(1-d)\log(1-d)\le d$, all instances of $\log x\le x-1$, give
  $2dM-(1-d)L\ge d(2\log2-1)+da(1+d)>0$. With $\tilde\lambda_2$ in place of
  $\lambda_2$ in $N$, we need $N(-a)\le0$. Since
  $N(-a)=\tilde\lambda_2(1+da)-d(1+a)$, this is equivalent to
  $\tilde\lambda_2\le d(1+a)/(1+da)$, and after clearing denominators to
  \[
    d(1-a)\log\tfrac2{1-a}\ \le\ (1+da)\log\tfrac{1+da}{1-d} .
  \]
  By $\log x\le x-1$, the left-hand side is at most $d(1+a)$, and, applied to
  $x=(1-d)/(1+da)$, the right-hand side is at least $d(1+a)$. Hence
  $\widetilde\Dcert$ is concave on $[-1,\tilde r]$ with $-a\le\tilde r$ and
  convex on $[\tilde r,1]$. On $[-1,-a]$ and on $[-a,\tilde r]$ concavity and
  $\widetilde\Dcert'(-a)=0$ give $\widetilde\Dcert\le\widetilde\Dcert(-a)=0$; in
  particular $\widetilde\Dcert(\tilde r)\le0$. On $[\tilde r,1]$ convexity gives
  that $\widetilde\Dcert$ lies below the chord joining
  $\widetilde\Dcert(\tilde r)\le0$ and $\widetilde\Dcert(1)=0$.
\end{proof}

\begin{lemma}[Tightness]\label{lem:tight}
  For $a,d\in(0,1)$ and $\lambda$, $\tilde\lambda$ as in \cref{prop:certs},
  $\lambda_0+\lambda_2\zsR(a)=\zsV(a,d)$ and
  $\tilde\lambda_0+\tilde\lambda_2\zsR(a)=\zzV(a,d)$.
\end{lemma}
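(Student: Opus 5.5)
The plan is to use complementary slackness rather than a direct algebraic evaluation. The certificate $\Dcert$ of \cref{prop:certs} vanishes at exactly the atoms of the conjectured $U$-side channel, and so does $\widetilde\Dcert$. Evaluating the affine-plus-rate bound on that channel's bias law therefore reproduces the conjectured value.

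For the first identity, take $S$ to be the bias variable of $\Zch{a}$. It has atoms $-1$ and $a$ with weights $\tfrac{a}{1+a}$ and $\tfrac1{1+a}$. It is mean-zero, and by evenness of $\fe$ its rate is $\mathbb E[\fe(S)]=\zsR(a)$. By \cref{prop:certs}, $\Dcert(-1)=0$ by construction and $\Dcert(a)=0$. Hence
\[
  \mathbb E[\Dcert(S)]=\tfrac{a}{1+a}\Dcert(-1)+\tfrac1{1+a}\Dcert(a)=0 .
\]
Expanding $\Dcert$, this reads $\lambda_0+\lambda_1\mathbb E[S]+\lambda_2\mathbb E[\fe(S)]-\mathbb E[\resp_d(S)]=0$. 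The $\lambda_1$-term drops because $\mathbb E[S]=0$. By the definition of the response function, $\mathbb E[\resp_d(S)]$ is the value $\I(U;V)$ of the pair $(\Zch{a},\Sch{d})$, which is $\zsV(a,d)$. This gives $\lambda_0+\lambda_2\zsR(a)=\zsV(a,d)$.

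For the second identity, I would choose the $U$-side channel whose atoms are the two contact points $-a$ and $1$ of $\widetilde\Dcert$. This is the bias law of $\Sch{a}$, with $P(S=1)=\tfrac{a}{1+a}$ and $P(S=-a)=\tfrac1{1+a}$. It is mean-zero and has rate $\zsR(a)$. Since $\widetilde\Dcert(1)=0$ by construction and $\widetilde\Dcert(-a)=0$ by \cref{prop:certs}, the same computation gives $\tilde\lambda_0+\tilde\lambda_2\zsR(a)=\mathbb E[\resp_d(S)]$. That expectation is the value of the pair $(\Sch{a},\Sch{d})$, namely $\ssV(a,d)$. By the source-relabelling symmetry, $\ssV=\zzV$; this is also consistent with the closed form $\zzV(a,d)=\tfrac{a}{1+a}\resp_d(1)+\tfrac1{1+a}\resp_d(-a)$ stated earlier. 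So the result is $\zzV(a,d)$.

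There is no real obstacle. The only points needing care are the following.
\begin{enumerate}
  \item The weights of the two-atom laws must match \cref{def:zs}.
  \item Reading $\I(U;V)=\mathbb E[\resp_d(S)]$ for $\chV=\Sch{d}$ relies on \cref{prop:lr} with $\dlt=1$.
  \item In the second case, the identification $\ssV=\zzV$ must be justified, either via the symmetry paragraph or directly from the displayed response-function formula for $\zzV$.
\end{enumerate}
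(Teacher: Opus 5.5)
Your proposal is correct and follows the paper's own proof: take $S$ to be the bias variable of $\Zch{a}$ (respectively $\Sch{a}$), whose two atoms are the contact points of $\Dcert$ (respectively $\widetilde\Dcert$). Then $\mathbb E[\Dcert(S)]=0$ gives the identity once $\mathbb E[S]=0$, $\mathbb E[\fe(S)]=\zsR(a)$, and $\mathbb E[\resp_d(S)]=\zsV(a,d)$ (respectively $\ssV(a,d)=\zzV(a,d)$) are substituted.
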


\begin{proof}
  Let $S$ be the bias variable of $\Zch{a}$, whose atoms $-1$ and $a$ are the
  two zeros of $\Dcert$ established in \cref{prop:certs}. Then
  $\mathbb E[\Dcert(S)]=0$, i.e.\
  \[
    0=\lambda_0+\lambda_1\,\mathbb E [S]+\lambda_2\,\mathbb E[\fe(S)]
    -\mathbb E[\resp_d(S)].
  \]
  Here $\mathbb E[S]=0$, $\mathbb E[\fe(S)]=\zsR(a)$, and
  $\mathbb E[\resp_d(S)]=\zsV(a,d)$ by definition of the response function.
  For the second identity, let $S$ be the bias variable of $\Sch{a}$, whose
  atoms $1$ and $-a$ are the zeros of $\widetilde\Dcert$. Its rate is again
  $\zsR(a)$, and $\mathbb E[\resp_d(S)]=\ssV(a,d)=\zzV(a,d)$.
\end{proof}

\begin{theorem}\label{thm:D}
  Let $a,d\in(0,1)$, and let $\chV$ be non-degenerate with an atom at $\pm1$.
  Then
  \[
    \I(U;X)\le\zsR(a)\ \text{ and }\ \I(Y;V)\le\zsR(d)
    \ \Longrightarrow\ \I(U;V)\le\zsV(a,d),
  \]
  and
  \[
    \I(U;X)\ge\zsR(a)\ \text{ and }\ \I(Y;V)\ge\zsR(d)
    \ \Longrightarrow\ \I(U;V)\ge\zzV(a,d).
  \]
\end{theorem}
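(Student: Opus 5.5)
The plan is to reduce to a single fixed channel on the $V$-side and then apply the dual certificates of \cref{prop:certs} through \cref{lem:cert} and \cref{lem:tight}, finishing with the monotonicity of \cref{lem:mono}. A non-degenerate $V$-side bias law with an atom at $\pm1$ is, by \cref{def:zs} and \cref{rem:degenerate}, the law of $\Zch{c}$ or of $\Sch{c}$ for some $c\in(0,1]$, up to relabelling of the output. By the first symmetry (relabelling the source alphabet, which negates all biases), we may assume $\chV=\Sch{c}$. This symmetry exchanges $\Zch{}$ and $\Sch{}$ on both sides simultaneously and leaves both rates and the value unchanged. The hypotheses on $\I(U;X)$ are unaffected, and the two target values $\zsV(a,d)$ and $\zzV(a,d)$ are invariant because $\szV=\zsV$ and $\ssV=\zzV$. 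The rate of $\chV$ is then $\zsR(c)$, and $\I(U;V)=\mathbb E[\resp_c(S)]$ for the bias variable $S$ of $\chU$. This holds for every $\chU$, including a degenerate one with $S\equiv0$, since \cref{lem:cert} applies to any mean-zero $S$.

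\emph{Maximisation.} The constraint $\zsR(c)\le\zsR(d)<\log2=\zsR(1)$ and strict monotonicity of $\zsR$ (\cref{lem:mono}) give $c\le d<1$, so $c\in(0,1)$. I apply the first certificate of \cref{prop:certs} with $d$ replaced by $c$. It gives $\lambda_2\ge0$ and $\lambda_0+\lambda_1s+\lambda_2\fe(s)\ge\resp_c(s)$ on $[-1,1]$. By \cref{lem:cert}, $\I(U;V)\le\lambda_0+\lambda_2\zsR(a)$, and by \cref{lem:tight} the right-hand side equals $\zsV(a,c)$. Finally, $\zsV(a,\cdot)$ is increasing by \cref{lem:mono}, so $\zsV(a,c)\le\zsV(a,d)$.

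\emph{Minimisation.} Now $\zsR(c)\ge\zsR(d)$ gives $c\ge d$. If $c=1$, then $\chV$ is noiseless, so $\I(U;V)=\I(U;X)\ge\zsR(a)=\zzV(a,1)\ge\zzV(a,d)$, using \cref{lem:mono} for both the identity $\zzV(a,1)=\zsR(a)$ and the monotonicity. If $c\in(0,1)$, the second certificate of \cref{prop:certs} (with $c$ in place of $d$) gives a reversed inequality with $\tilde\lambda_2\ge0$. The reversed form of \cref{lem:cert} then yields $\I(U;V)\ge\tilde\lambda_0+\tilde\lambda_2\zsR(a)=\zzV(a,c)$ by \cref{lem:tight}. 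Monotonicity of $\zzV(a,\cdot)$ gives $\zzV(a,c)\ge\zzV(a,d)$.

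The analytic work is already done in \cref{prop:certs}, so the main point of care is orientation. I must check that fixing $\chV=\Sch{c}$ makes the response $\resp_c$ exactly the function certified in \cref{prop:certs}. I must also check that the tightness values come out as the anti-aligned value for the upper bound and the aligned value ($\ssV=\zzV$) for the lower bound, with $\zsV$ and $\zzV$ read in the correct argument order (both are symmetric, so this is harmless). The edge case $c=1$ needs the separate treatment above, since the certificates are stated only for parameters in $(0,1)$.
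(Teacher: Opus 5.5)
Your proposal is correct and follows the paper's proof essentially step for step. You reduce to $\chV=\Sch{c}$ by relabelling the source alphabet, apply the first (resp.\ second) certificate of \cref{prop:certs} at $c$ through \cref{lem:cert} and \cref{lem:tight}, finish with the monotonicity of \cref{lem:mono}, and treat $c=1$ separately in the minimisation. The only differences are cosmetic: you absorb a degenerate $\chU$ into \cref{lem:cert} via $S\equiv0$, and you handle $c=1$ by noiselessness rather than the trivial certificate $(0,0,1)$. Your appeal to $\szV=\zsV$ and $\ssV=\zzV$ for the invariance of the targets is unnecessary, since $\zsV(a,d)$ and $\zzV(a,d)$ are fixed numbers.
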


\begin{proof}
  The bias law of $\chV$ is that of $\Sch{d'}$ or of $\Zch{d'}$ for a unique
  $d'\in(0,1]$. By the first symmetry of \cref{sec:dsib}, which leaves the
  hypotheses and the conclusions unchanged, we may assume that it is
  $\Sch{d'}$. Then $\I(Y;V)=\zsR(d')$, and $\I(U;V)=\mathbb E[\resp_{d'}(S)]$
  for the bias variable $S$ of $\chU$ (or $\I(U;V)=0$ if $\chU$ is degenerate).

  \emph{Maximisation.} Here $\zsR(d')\le\zsR(d)<\zsR(1)$, so $d'<1$ and, by
  \cref{lem:mono}, $d'\le d$. By \cref{prop:certs} with $d'$ in place of $d$,
  the certificate $(\lambda_0,\lambda_1,\lambda_2)$ satisfies \eqref{eq:cert}
  for $\resp=\resp_{d'}$ with $\lambda_2>0$. \Cref{lem:cert} with
  $C_u=\zsR(a)$, \cref{lem:tight} and \cref{lem:mono} give
  \[
    \I(U;V)\ \le\ \lambda_0+\lambda_2\zsR(a)\ =\ \zsV(a,d')
    \ \le\ \zsV(a,d).
  \]
  If $\chU$ is degenerate, $\I(U;V)=0<\zsV(a,d)$.

  \emph{Minimisation.} Here $\chU$ is non-degenerate, since
  $\I(U;X)\ge\zsR(a)>0$, and $\zsR(d')\ge\zsR(d)$, so $d'\ge d$. If $d'<1$, the
  second certificate of \cref{prop:certs} at $d'$, the second part of
  \cref{lem:cert}, \cref{lem:tight} and \cref{lem:mono} give
  $\I(U;V)\ge\tilde\lambda_0+\tilde\lambda_2\zsR(a)=\zzV(a,d')\ge\zzV(a,d)$.
  If $d'=1$, then $\resp_1=\fe$, so $(\lambda_0,\lambda_1,\lambda_2)=(0,0,1)$
  satisfies \eqref{eq:cert} with equality, and \cref{lem:cert,lem:mono} give
  $\I(U;V)\ge\zsR(a)=\zzV(a,1)>\zzV(a,d)$.
\end{proof}

\subsection{Proof of \texorpdfstring{\protect\cref{thm:c1,thm:c2}}{Theorems \ref{thm:c1} and \ref{thm:c2}}}\label{sec:c-assembly}

\begin{proof}[Proof of \cref{thm:c1,thm:c2}]
  Fix $a,d\in(0,1)$. The feasible set contains the conjectured pair, so by
  step~(A) an optimiser $(\chU^\ast,\chV^\ast)$ exists, and it suffices to prove
  the bound for this optimiser.

  If one of its channels is degenerate, then $\I(U;V)=0$ by
  \cref{rem:degenerate}. For the maximisation this is smaller than
  $\zsV(a,d)$, which is positive. For the minimisation a degenerate $\chU$ or
  $\chV$ has rate $0<\zsR(a)$ or $0<\zsR(d)$, respectively, and is therefore
  not feasible.

  If $\chV^\ast$ has an atom at $\pm1$, \cref{thm:D} gives the claim. If
  $\chU^\ast$ has an atom at $\pm1$, we apply the second symmetry of
  \cref{sec:dsib}: exchanging the two sides exchanges the two budgets and
  preserves $\I(U;V)$, and since $\zsV$ and $\zzV$ are symmetric in their
  arguments, \cref{thm:D} with $a$ and $d$ exchanged gives the claim.

  Otherwise, both channels are non-degenerate and all four atoms lie in
  $(-1,1)$. By \cref{thm:B}, the optimiser is then a BSC pair with biases
  $\pm s$ and $\pm t$, say. Its rates are $\fe(s)$ and $\fe(t)$, and it is
  feasible, so \cref{prop:saddle} shows that it is not an optimiser, a
  contradiction.
\end{proof}

\section{Formalisation}\label{sec:remarks}

\paragraph{Scope.} The development~\cite{Pichler2026BSCAveraging} is written in
Lean~4 with Mathlib; the toolchain and Mathlib versions are fixed in the
repository. It is registered in the Palomar registry~\cite{Pichler2026Palomar}.
The registry checks that the development proves the statements of a
self-contained file that imports only Mathlib and restates all definitions
involved. The main results, \cref{thm:main,thm:c1,thm:c2}, depend only on the
axioms \lean{propext}, \lean{Classical.choice} and \lean{Quot.sound}. In
particular, the interval sweep and the coefficient check are evaluated by the
Lean kernel (via \leantac{decide +kernel}) and not by compiled code, so no
additional axiom is introduced; the polynomial bound of \cref{lem:core} is
verified by the proof-producing tactics \leantac{ring} and
\leantac{norm\_num}.

\paragraph{Formal statements.} The definitions of the self-contained statement
file follow \cref{sub:source}. A binary channel is a $2\times2$ row-stochastic
matrix indexed by $\mathrm{Bool}$; mutual information is defined for a joint law
$q$ on $\mathrm{Bool}\times\mathrm{Bool}$ as $H(q_1)+H(q_2)-H(q)$, in nats; the
joint laws of $(U,X)$, $(Y,V)$ and $(U,V)$ are written out explicitly for the
Markov chain $U-X-Y-V$; and the regions are
\begin{lstlisting}
def regionA (p : ℝ) : Set (ℝ × ℝ × ℝ) :=
  {R | ∃ cL cR : Chan,
      mutualInfo (jointUX cL) ≤ R.2.1 ∧
      mutualInfo (jointYV cR) ≤ R.2.2 ∧
      R.1 ≤ mutualInfo (jointUV p cL cR)}

def regionB (p : ℝ) : Set (ℝ × ℝ × ℝ) :=
  {R | ∃ a b : ℝ, 0 ≤ a ∧ a ≤ 1 ∧ 0 ≤ b ∧ b ≤ 1 ∧
      log 2 - h2 a ≤ R.2.1 ∧
      log 2 - h2 b ≤ R.2.2 ∧
      R.1 ≤ log 2 - h2 ((a ⊛ p) ⊛ b)}
\end{lstlisting}
The Z- and S-channels \lstinline{zChan a} and \lstinline{sChan d} are given by
their transition matrices. The three theorems read
\begin{lstlisting}
theorem averaged_bsc_maximise_mutual_information {p : ℝ}
    (hp0 : 0 ≤ p) (hp1 : p ≤ 1) :
    convexHull ℝ (regionA p) = convexHull ℝ (regionB p)

theorem conjecture1_p0 (a d : ℝ) (ha0 : 0 < a) (ha1 : a < 1)
    (hd0 : 0 < d) (hd1 : d < 1) (cL cR : Chan)
    (hU : mutualInfo (jointUX cL) ≤ mutualInfo (jointUX (zChan a ha0 ha1)))
    (hV : mutualInfo (jointYV cR) ≤ mutualInfo (jointYV (sChan d hd0 hd1))) :
    mutualInfo (jointUV 0 cL cR)
      ≤ mutualInfo (jointUV 0 (zChan a ha0 ha1) (sChan d hd0 hd1))

theorem conjecture2_p0 (a d : ℝ) (ha0 : 0 < a) (ha1 : a < 1)
    (hd0 : 0 < d) (hd1 : d < 1) (cL cR : Chan)
    (hU : mutualInfo (jointUX (zChan a ha0 ha1)) ≤ mutualInfo (jointUX cL))
    (hV : mutualInfo (jointYV (zChan d hd0 hd1)) ≤ mutualInfo (jointYV cR)) :
    mutualInfo (jointUV 0 (zChan a ha0 ha1) (zChan d hd0 hd1))
      ≤ mutualInfo (jointUV 0 cL cR)
\end{lstlisting}
The last two statements are \cref{thm:c1,thm:c2}, with the budgets and the
bounds expressed through the rates and values of the explicit channel pairs
rather than through $\zsR$, $\zsV$ and $\zzV$.

\paragraph{What is not formalised.} The formal statements concern binary
test channels. The cardinality reduction~\cite[Prop.~3]{Dikshtein2022Double},
which connects \cref{thm:c1} with the DSIB function, and the translation of
Conjecture~2 via~\cite[Remark~5]{Dikshtein2022Double} (\cref{sec:dsib}) are not
formalised. Neither theorem asserts uniqueness of the optimiser.

\paragraph{Certified computations.} The two computations evaluated by the
kernel follow established practice. Subdividing a parameter domain and bounding a function on each cell
by interval arithmetic is the method of the Flyspeck
project~\cite{Hales2017Kepler}, and executing a computation on a datatype with a
verified evaluation function inside the kernel is small-scale reflection, as in
the formal proof of the four-colour theorem~\cite{Gonthier2008FourColour}. They are
the $\NCells$-cell interval sweep in the proof of \cref{lem:core} and the
coefficient check for \cref{lem:kernel} (\cref{apx:cert:polya}). The third
computation, the polynomial bound for $\vartheta\le\SweepLo$ in the proof of
\cref{lem:core}, is a polynomial identity with computer-generated coefficients
(\cref{apx:cert:poly}). \Cref{apx:cert} gives the data, and \cref{apx:lean}
lists the Lean names of the numbered statements.

\section{Conclusion}\label{sec:conclusion}

We have proved the averaged BSC conjecture for the doubly symmetric binary
source, and Conjectures~1 and~2 of Dikshtein, Ordentlich and Shamai for binary
test channels. In all three cases an extremal problem over pairs of binary
channels is reduced to inequalities for the atoms of two bias variables. The
averaged BSC conjecture reduces to the two-point bound of \cref{thm:twopoint}.
The two DSIB conjectures reduce, through the four steps of \cref{sub:scheme}, to
the saddle inequality~\eqref{eq:saddle} and to the corner certificates of
\cref{sec:D}. All three theorems are formally verified.

Several questions remain open. Conjecture~3 of~\cite{Dikshtein2022Double}
asserts that for $p$ above a threshold depending on the rate budgets, the optimal
test channels are binary symmetric; their Theorem~1 establishes this behaviour
asymptotically as $p\to\tfrac12$. For Conjecture~2, the case of test channels
with larger output alphabets is not covered by the available cardinality
bounds. Finally, a formal proof of the cardinality
reduction~\cite[Prop.~3]{Dikshtein2022Double} would make the connection between
\cref{thm:c1} and the DSIB function part of the formal development.

The proofs presented here consist of elementary but lengthy arguments, and
several steps rely on computations, a polynomial bound, an interval sweep over
$\NCells$ cells and a polynomial positivity certificate, that are impractical to check by hand. Formal
verification makes such arguments reliable, including the parts that were
found with machine assistance. Similar reductions to low-dimensional
inequalities, together with envelope and certificate bounds near a symmetric
point, appear in the partial results~\cite{Ordentlich2016improved,Yang2019Most}
on the Courtade--Kumar conjecture~\cite{Courtade2014Which}, which asserts that
no Boolean function of a uniform binary string carries more information about a
noisy copy of the string than a single coordinate. Whether the combination of
machine-assisted search and formal verification used here can contribute to
problems of this kind is an open question.

\clearpage
\printbibliography

\clearpage
\appendix
\crefalias{section}{appendix}
\crefalias{subsection}{appendix}
\crefalias{subsubsection}{appendix}

\section{Notation}\label{apx:notation}

{\small
  \begin{longtable}{@{}>{$}l<{$}>{\raggedright\arraybackslash}p{0.46\textwidth}@{}}
    \toprule
    \multicolumn{1}{@{}l}{Quantity} & Description \\
    \midrule
    \endfirsthead
    \toprule
    \multicolumn{1}{@{}l}{Quantity} & Description \\
    \midrule
    \endhead
    \multicolumn{2}{@{}l}{\emph{Source, channels and regions (\cref{sec:setting})}}\\
    \addlinespace[2pt]
    p & crossover of the doubly symmetric binary source: $X$ uniform, $Y=X\oplus Z$ with $Z\sim\mathrm{Bern}(p)$ \\
    \iot(x)=1-2x & the $\pm1$ encoding \eqref{eq:iota}; a bijection taking $\bconv$ to multiplication (hence $\oplus$ on $\{0,1\}$, and $x\mapsto x\oplus1$ to $z\mapsto-z$), and equal to $\chr$ on $\{0,1\}$ \\
    \dlt=\iot(p)=1-2p & bias of the source; $\dlt=1$ at $p=0$, the case of \cref{sec:dsibmain} \\
    x\bconv y=x(1-y)+(1-x)y & binary convolution; biases multiply, $\iot(x\bconv y)=\iot(x)\iot(y)$ \\
    \hbin & binary entropy in nats; $\fe(\iot(\alpha))=\log2-\hbin(\alpha)$ \\
    \chU,\ \chV & the two binary channels, $X\to U$ and $Y\to V$ \\
    \Aregion(p) & rate triples $(R_0,R_1,R_2)$ attainable with arbitrary binary $\chU,\chV$ \\
    \Bregion(p) & the same with $\chU,\chV$ binary \emph{symmetric} \\
    \conv & convex hull \\
    \ff(z)=(1+z)\log(1+z) & value integrand; defined for $z\ge-1$ with $0\log0=0$ \\
    \fe(y)=\tfrac12[(1+y)\log(1+y)+(1-y)\log(1-y)] & even part of $\ff$; the rate integrand. $\fe'=\artanh$, $\fe(0)=0$, $\fe(\pm1)=\log2$ \\
    \fo(z)=\tfrac12[(1+z)\log(1+z)-(1-z)\log(1-z)] & odd part of $\ff$; $\ff=\fe+\fo$ \\
    \pi_u=P(U=u),\quad \rho_v=P(V=v) & the two output laws \\
    s_u=\iot(P(X=1\mid U=u)) & $U$-side bias family; $\mathbb E[S]=0$ \\
    t_v=\iot(P(Y=1\mid V=v)) & $V$-side bias family; $\mathbb E[T]=0$ \\
    S=\mathbb E[\iot(X)\mid U]=s_U,\quad T=\mathbb E[\iot(Y)\mid V]=t_V & the bias random variables \\
    \bar S,\ \bar T & independent, distributed as $S$ and as $T$ \\
    \text{degenerate channel} & output independent of the input; equivalently constant output or $S\equiv0$ (\cref{rem:degenerate}) \\
    \addlinespace
    \multicolumn{2}{@{}l}{\emph{Likelihood ratios and Fourier data (\cref{sub:lr,sub:fourier})}}\\
    \addlinespace[2pt]
    \lrL_u(x)=P(X=x\mid U=u)/P(X=x) & likelihood ratio of the $U$-side channel; $\lrL_u(x)=1+s_u\iot(x)$ \\
    \lrR_v(y) & likelihood ratio of the $V$-side channel; $\lrR_v(y)=1+t_v\iot(y)$ \\
    \chr & the non-trivial character $(-1)^x=\iot(x)$ of $\mathbb Z_2$ \\
    \Fcoef{\lrL_u} & Fourier coefficients of $\lrL_u$; $s_u=\Fcoef{\lrL_u}(\chr)$ \\
    \Tdel & noise operator, $(\Tdel h)(x)=\mathbb E[h(Y)\mid X=x]$; $\Tdel\chr=\dlt\,\chr$ \\
    \addlinespace
    \multicolumn{2}{@{}l}{\emph{The two-point Lagrangian (\cref{sub:lagr,sub:twopoint})}}\\
    \addlinespace[2pt]
    a=s_0,\ b=-s_1,\ c=t_0,\ d=-t_1 & the four atom magnitudes, positive atom first \\
    \mu,\nu\ge0 & the two rate multipliers \\
    \Nrm=(a+b)(c+d) & normalisation; $\Nrm>0$ for $a,b,c,d\in(0,1]$ \\
    \Lagr(a,b,c,d) & the two-point Lagrangian, \eqref{eq:lagr}; equals $\I(U;V)-\mu\I(U;X)-\nu\I(Y;V)$ \\
    \gcor(a,c)=\fe(\dlt ac)-\mu\fe(a)-\nu\fe(c) & value of the BSC pair with biases $a,c$ \\
    \gcor_{ij}=\gcor(i,j),\ i\in\{a,b\},\ j\in\{c,d\} & the four \emph{corner values} \\
    w_{ac},\ w_{ad},\ w_{bc},\ w_{bd} & the corner weights $\tfrac{bd}\Nrm,\tfrac{bc}\Nrm,\tfrac{ad}\Nrm,\tfrac{ac}\Nrm$: the atom probabilities of $(\bar S,\bar T)$ \\
    \wmin=\min(w_{ac},w_{bd}) & the smaller diagonal weight \\
    \Gsum=\mathbb E[\gcor(\bar S,\bar T)] & weighted average of the corner values \\
    \Gmax=\max_{i,j}\gcor(s_i,t_j) & largest corner value; $\Gmax-\Gsum$ is the \emph{slack} \\
    \Jsym=\sup_{a,c\in[0,1]}\gcor(a,c) & best value a BSC pair achieves in the direction $(\mu,\nu)$ \\
    M & any bound on the four corner values (hypothesis of \cref{thm:twopoint}) \\
    \Om=\mathbb E[\fo(\dlt\bar S\bar T)] & the \emph{odd gain}; $\Lagr=\Gsum+\Om$ \\
    \lamS=\tfrac{ab}{a+b},\quad \kapT=\tfrac{cd}{c+d} & the two side factors of $\Om$ \\
    \phio(u)=\fo(\dlt u)/u & odd kernel; $\Om=\lamS\kapT\,\Dmix{\phio}$ \\
    \Dmix{h}=h(ac)-h(ad)-h(bc)+h(bd) & mixed second difference of $(x,y)\mapsto h(xy)$; for a function $F$ of two variables, $\Dmix{F}=F(a,c)-F(a,d)-F(b,c)+F(b,d)$ \\
    \Hstep(u)=\phio(u)+\dlt\,\fe(\dlt u) & opposite-skew kernel of \cref{lem:step1} \\
    \Kstep(z)=1-\tfrac{\artanh z}{z}+z\artanh z & with $u\Hstep'(u)=\dlt\Kstep(\dlt u)$; increasing \\
    \Cone=\{R_0\le0,\ R_1\ge0,\ R_2\ge0\} & the free cone; $\Aregion=\KA+\Cone$, $\Bregion=\KB+\Cone$ \\
    \KA,\ \KB & images of $[0,1]^4$, resp.\ $[0,1]^2$, under the rate-triple map \\
    \addlinespace
    \multicolumn{2}{@{}l}{\emph{The double-sided bottleneck at $p=0$ (\cref{sec:dsib})}}\\
    \addlinespace[2pt]
    \Val=\I(U;V) & the \emph{value} \\
    \Rte_U=\I(U;X),\quad \Rte_V=\I(Y;V) & the two \emph{rates} \\
    \Zch{a},\ a\in[0,1] & Z-channel: $s=-1$ w.p.\ $\tfrac a{1+a}$, $s=a$ w.p.\ $\tfrac1{1+a}$ \\
    \Sch{a},\ a\in[0,1] & S-channel: $t=+1$ w.p.\ $\tfrac a{1+a}$, $t=-a$ w.p.\ $\tfrac1{1+a}$; $\Zch{1}$ and $\Sch{1}$ are noiseless \\
    \zsR(a)=\bigl(\fe(a)+a\log2\bigr)/(1+a) & common rate of $\Zch{a}$ and $\Sch{a}$; strictly increasing (\cref{lem:mono}) \\
    \resp(s)=\mathbb E[\ff(s\,T)] & \emph{response} of a fixed $V$-side law; $\Val=\mathbb E[\resp(S)]$ \\
    \resp_a & response of $\Sch{a}$, namely $\tfrac a{1+a}\ff(s)+\tfrac1{1+a}\ff(-as)$ \\
    \zsV(a,d)=\Val\big|_{(\Zch{a},\Sch{d})} & value of the anti-aligned pair; the conjectured maximum \\
    \zzV(a,d)=\Val\big|_{(\Zch{a},\Zch{d})} & value of the aligned pair; the conjectured minimum \\
    \ssV=\Val\big|_{(\Sch{a},\Sch{d})} & equals $\zzV$, by relabelling the source alphabet \\
    \szV=\Val\big|_{(\Sch{a},\Zch{d})} & equals $\zsV$, by the same symmetry \\
    \addlinespace
    \multicolumn{2}{@{}l}{\emph{Interior optima (\cref{sec:AB})}}\\
    \addlinespace[2pt]
    \lambda\ge0 & KKT multiplier of the active rate constraint \\
    \chord_{[a,b]}\psi & affine function agreeing with $\psi$ at $a$ and $b$; a mean-zero two-atom average is $(\chord_{[s_2,s_1]}\psi)(0)$ \\
    m,\ n & half-widths of the two sides in the coordinates $\sigma=\artanh s$ \\
    \xiS,\ \xiT & the two \emph{skews}: centres of the $U$- and $V$-side atom pairs \\
    \xisum=\xiS+\xiT & their sum \\
    \Lcosh=\log\cosh & even, strictly convex; $\Lcosh'=\tanh$, $\Lcosh''=\sech^2$ \\
    \Gresp(\sigma)=\resp(\tanh\sigma) & the response in the coordinates $\sigma=\artanh s$ \\
    \beta_\psi & slope of the secant $\chord_{[s_2,s_1]}\psi$ \\
    \Dch(\sigma)=\bigl(\chord_{[s_2,s_1]}\psi-\psi\bigr)(\tanh\sigma) & chord defect, for $\psi=\resp-\lambda\fe$; vanishes to first order at $\sigma=\xiS\pm m$ \\
    \kfac=\rho_1t_1=-\rho_2t_2>0 & the $V$-side factor in $\Gresp'$ \\
    \Mfun_c(w)=\Lcosh(w+c)-\Lcosh(w-c) & symmetric in its two arguments, odd in each, increasing, concave on $[0,\infty)$ \\
    \Wwin=[\xisum-m,\xisum+m] & the shifted \emph{window} \\
    \RS,\ \RT & the two \emph{residuals}, \eqref{eq:RS} and \eqref{eq:RT}; both vanish at an optimiser \\
    \Fwin=\Mfun_n-\chord_{\Wwin}\Mfun_n & chord defect of $\Mfun_n$ over $\Wwin$ \\
    \Kwin(\sigma)=\bigl(\Lcosh-\chord_{\Wwin}\Lcosh\bigr)(\sigma-\xiT) & chord defect of the weight; negative inside $\Wwin$ \\
    \addlinespace
    \multicolumn{2}{@{}l}{\emph{Symmetric pairs (\cref{sec:C})}}\\
    \addlinespace[2pt]
    \pm s,\ \pm t,\quad x=st & atoms of the BSC pair and their product \\
    \eta_1,\ \eta_2 & centres of the two atom pairs; the symmetric pair is $\eta_1=\eta_2=0$ \\
    r=(r_1,r_2) & first-order velocity of the perturbing curve, in the $(\eta_1,\eta_2)$-plane \\
    e=(e_1,\dots,e_4) & second-order displacements of the four atoms, $e_1,e_2$ on the $U$-side and $e_3,e_4$ on the $V$-side \\
    \gamma & common second derivative of the two rates along the perturbing curve \\
    \Echord_\psi(\eta;s)=\pi_1\psi(\eta+s)+\pi_2\psi(\eta-s) & average of $\psi$ against the pair centred at $\eta$ with half-width $s$ \\
    \Achord_\psi(\eta;s)=\tfrac12[\psi(\eta+s)+\psi(\eta-s)] & mean of $\psi$ over the two atoms \\
    \Cchord_\psi(\eta;s)=[\psi(\eta+s)-\psi(\eta-s)]/(2s) & slope of the secant through them \\
    \Grat(y)=\tfrac{(1-y^2)\artanh y}{y} & equals $\fe'(y)/\bigl(y\,\fe''(y)\bigr)$; strictly decreasing on $(0,1)$, from $1$ to $0$ \\
    \Fpp,\ \Fpq,\ \Fqq & entries of the Hessian of $\Lagr$ with respect to $(\eta_1,\eta_2)$ \\
    \Qform=\Val''-\mu\Rte_U''-\nu\Rte_V'' & second derivative of the Lagrangian along the perturbing curve \\
    \addlinespace
    \multicolumn{2}{@{}l}{\emph{The saddle inequality (\cref{sec:iii})}}\\
    \addlinespace[2pt]
    \Amix(\alpha)=\tfrac{(1+\alpha^2)\artanh\alpha-\alpha}{2\alpha} & equals $-\tfrac12\alpha\,\Grat'(\alpha)$ \\
    \Bmix(\alpha)=\Grat(x)-\Grat(\alpha) & mixture functional; $\Grat+\Bmix$ is the constant $\Grat(x)$ \\
    \Rrat(\alpha)=\Bmix(\alpha)\Bmix(\beta)/\bigl(\Grat(\alpha)\Grat(\beta)\bigr) & quantity maximised along the hyperbola $\alpha\beta=x$ \\
    \qrat=\Bmix/\Grat & its one-variable factor, $\Rrat=\qrat(\alpha)\qrat(\beta)$ \\
    \Nmix(\alpha,\beta) & $\Amix(\beta)\Grat(\alpha)\Bmix(\alpha)-\Amix(\alpha)\Grat(\beta)\Bmix(\beta)$; the diagonal reduction is $\Nmix\ge0$ \\
    \Ksum(\theta_1,\theta_2,\theta_3;u,v) & the \emph{kernel sum} of \cref{lem:kernel}; \eqref{eq:sym} makes $\Nmix$ its integral \\
    \Pint(s_1,s_2,s_3) & integrand of $\Nmix$ as an iterated triple integral \\
    \fker,\ \gker,\ \rker & the kernel-lemma factors $\tfrac{1-\theta}{1-\theta u}$, $\tfrac{1-\theta}{1-\theta uv}$, $\tfrac{1-\theta u}{1-\theta v}$ \\
    \tau_1,\dots,\tau_4,\ \zeta_1,\zeta_2,\zeta_3 & bi-simplex gap coordinates of the Pólya certificate; $\Thom=\sum_i\tau_i$, $\Shom=\sum_j\zeta_j$ \\
    \Theta_i,\ \Uhom,\ \Vhom & their degree-one numerators: $\theta_i=\Theta_i/\Thom$, $u=\Uhom/\Shom$, $v=\Vhom/\Shom$ \\
    \aker_i,\ \bker_i,\ \cker_i,\ \dker_i & the homogenised factor families $\Thom-\Theta_i$, $\Thom\Shom-\Theta_i\Uhom$, $\Thom\Shom-\Theta_i\Vhom$, $\Thom\Shom^2-\Theta_i\Uhom\Vhom$ \\
    \Qcert & the cleared, bihomogenised kernel \eqref{eq:kerQ}; bidegree $\CertBideg$, all coefficients non-negative \\
    B=\KronBase & base of the Kronecker substitution \\
    \Zcosh=\cosh2\vartheta & abbreviation in the diagonal case; $\alpha=\beta=\tanh\vartheta$ \\
    \Fcore(\vartheta)=\tfrac1{\log\Zcosh}-\tfrac1{\Zcosh-1}-\tfrac{\tanh\vartheta}{2\vartheta} & the \emph{core}; \eqref{eq:core} is $\Fcore>0$ \\
    \Dcore(\vartheta) & the core cleared of denominators, $2\vartheta\sinh^2\vartheta-\log\cosh(2\vartheta)[\vartheta+\sinh^2\vartheta\tanh\vartheta]$; a zero of order nine at $0$ \\
    \addlinespace
    \multicolumn{2}{@{}l}{\emph{The corner bound (\cref{sec:D})}}\\
    \addlinespace[2pt]
    \lambda_0,\lambda_1,\lambda_2 & certificate multipliers: normalisation, mean zero, rate; $\tilde\lambda_0,\tilde\lambda_1,\tilde\lambda_2$ are those of the mirror certificate \\
    C_u & the $U$-side rate budget \\
    \Dcert(s)=\lambda_0+\lambda_1s+\lambda_2\fe(s)-\resp_d(s) & certificate slack; non-negative for the maximum, and $\widetilde\Dcert\le0$ for the minimum \\
    N(s) & numerator of $\Dcert''$, linear in $s$ \\
    \bottomrule
  \end{longtable}
\addtocounter{table}{-1}%
}

\section{Certificate data}\label{apx:cert}

Three objects in \cref{sec:iii} are certificates: the polynomial $Q$ of the first
range in the proof of \cref{lem:core}, the $\NCells$ cells of its middle range,
and the bihomogeneous polynomial $\Qcert$ of \eqref{eq:kerQ}. None of them is
reproduced in full: $Q$ has $\QTerms$ coefficients with denominators of up to
$36$ digits, the cells consist of $\CellRationals$ rationals, and $\Qcert$ is
never expanded. Each is generated by a short procedure described here. For the
coefficient comparison of the first range, the data are given in a form that
allows the comparison to be checked by hand.
\Cref{apx:cert:poly,apx:cert:cells,apx:cert:polya} give the three recipes;
\cref{apx:cert:iv} gives the rational interval arithmetic in which the cells of
\cref{apx:cert:cells} are checked.

\subsection{The polynomial of the first range}\label{apx:cert:poly}
The tail of the exponential series after six terms satisfies
\[
  \Bigl|e^x-\sum_{m<6}\tfrac{x^m}{m!}\Bigr|\ \le\ \tfrac{7}{4320}|x|^6
  \qquad(|x|\le1),
\]
since $\sum_{j\ge6}\lvert x\rvert^j/j!\le\frac{\lvert x\rvert^6}{6!}\sum_{k\ge0}7^{-k}=\frac{7}{4320}\lvert x\rvert^6$.
Applied at $\pm\vartheta$ and averaged, it gives for $0\le\vartheta\le1$ the
two-sided polynomial bounds
\[
  \begin{aligned}
    \sigma^\pm&=\vartheta+\tfrac{\vartheta^3}6+\tfrac{\vartheta^5}{120}\pm E,
    &&\sigma^-\le\sinh\vartheta\le\sigma^+,\\
    \gamma^\pm&=1+\tfrac{\vartheta^2}2+\tfrac{\vartheta^4}{24}\pm E,
    &&\gamma^-\le\cosh\vartheta\le\gamma^+,
  \end{aligned}
\]
with $E=\tfrac7{4320}\vartheta^6$; on $[0,\SweepLo]$ the lower bounds $\sigma^-$
and $\gamma^-$ are non-negative. Substituting $(\sigma^+,\gamma^+)$ for $(s,c)$ on
the left-hand side of \eqref{eq:alg1} and $(\sigma^-,\gamma^-)$ on the
right-hand side, which is admissible because both sides are polynomials with
non-negative coefficients in $\vartheta,s,c$, turns each side into a polynomial
in $\vartheta$ of degree at most $60$, and their difference is identically
\[
  \text{(right)}-\text{(left)}=\vartheta^8Q(\vartheta),\qquad \deg Q=\QDeg ,
\]
a polynomial identity that is verified in Lean by the \leantac{ring} tactic. The
first coefficients of $Q$ are
\[
  \QFirstCoeffs,
\]
and the denominators grow with $k$, up to the last coefficient
\[
  q_{\QLastIndex}=\QLastCoeff.
\]
Let $\hat q_k\le q_k$ denote $q_k$ rounded \emph{downwards} to four decimal
places. For $k\ge\QTailStart$ all coefficients satisfy $-10^{-4}\le q_k<0$, so
$\hat q_k=-0.0001$; \cref{tab:qcoeffs} lists $\hat q_k$ for $k<\QTailStart$. The
comparison can be checked from these data alone: for
$0\le\vartheta\le\SweepLo$,
\[
  Q(\vartheta)\ \ge\ \sum_k\hat q_k\vartheta^k
  \ \ge\ \hat q_0-\sum_{k\ge1}\max(0,-\hat q_k)\,(\SweepLo)^k
  \ \ge\ \QHatZero-\QNegSum\ >\ 0.24 .
\]
The coefficients, the table and the numerical bound are generated from the
Lean definition of $Q$.

\begin{table}[tbp]
  \centering\small
  \caption{The coefficients $q_k$ of $Q$, rounded downwards to four decimal
    places, for $k<\QTailStart$. For $k\ge\QTailStart$, $\hat q_k=-0.0001$.}
  \label{tab:qcoeffs}
  \begin{tabular}{@{}rlrlrl@{}}
    \toprule
    $k$ & $\hat q_k$ & $k$ & $\hat q_k$ & $k$ & $\hat q_k$ \\
    \midrule
    \QTableRows
    \bottomrule
  \end{tabular}
\end{table}

\subsection{Rational enclosures of \texorpdfstring{$\Fcore$ and $\Fcore'$}{Phi and Phi'}}\label{apx:cert:iv}
Both enclosures required by the sweep are computed by a verified interval
arithmetic over $\mathbb Q$. An interval is a pair $I=\langle\underline I,\overline I\rangle$
of rationals, $x\in I$ means $\underline I\le x\le\overline I$, and every
operation carries a soundness lemma: sums and differences endpointwise, products
from the four corner products (a bilinear function on a rectangle attains its
extremes at corners), reciprocals as
$\langle1/\overline I,1/\underline I\rangle$ for an interval strictly right of
$0$ --- the only case needed, since each denominator $\log \Zcosh$, $\Zcosh-1$,
$2\vartheta$, $\cosh\vartheta$ is positive on $[0.45,3]$. Soundness composes, so the
enclosure computed for a whole expression contains its value at every point of
the cell; we write $\underline X,\overline X$ for the endpoints of the enclosure
computed for the expression $X$. Since exact rational arithmetic would lead to
rapidly growing denominators, the endpoints are rounded \emph{outwards} to
multiples of $1/m$ with $m=\SweepGrid$ after each operation; this bounds the size
of the numbers and preserves soundness.

Transcendental functions enter in two ways. For $\exp$, the truncated series
with the tail bound
\[
  \Bigl|e^{x}-\sum_{j<n}\tfrac{x^{j}}{j!}\Bigr|\ \le\ |x|^{n}\,\frac{n+1}{n!\,n}
  \qquad(|x|\le1),
\]
is applied to the reduced argument $x/2^{k}$ with $k=\SweepHalvings$, and the
resulting interval is squared $k$ times, with rounding after each squaring. The
argument reduction ensures that $n=\SweepTerms$ terms give sufficient accuracy
over the whole sweep. The functions $\cosh$ and $\sinh$ are obtained as
$(e^{x}\pm e^{-x})/2$ and $\tanh$ as their quotient. For $\log$ no series is
used: to certify $y_-\le\log z\le y_+$ it suffices to exhibit rationals $y_\pm$
with $e^{y_-}\le z\le e^{y_+}$, where $e^{y_\pm}$ are enclosed as above. These
rationals are the witnesses of a cell. A cell consists of six rationals in three
pairs: its range
$\langle\mathsf{lo},\mathsf{hi}\rangle$; a bracket
$\langle\mathsf{lp},\mathsf{lq}\rangle$ for $\log \Zcosh$ valid over the whole
cell, which feeds the enclosure of $\Fcore'$ and hence the Lipschitz bound; and a
bracket $\langle\mathsf{cp},\mathsf{cq}\rangle$ for $\log \Zcosh$ at the
midpoint, which is used for the value of $\Fcore$ at the midpoint. On the first
cell the two brackets have widths of about $4\cdot10^{-3}$ and $6\cdot10^{-6}$.
The validity of a bracket amounts to the rational comparisons
$\overline{e^{\mathsf{lp}}}\le\underline{\Zcosh}$ and
$\overline{\Zcosh}\le\underline{e^{\mathsf{lq}}}$ between endpoints of computed
enclosures. These comparisons are decidable and are checked by the kernel; the
witnesses themselves may be found by any method outside the proof.

Differentiating \eqref{eq:core},
\[
  \Fcore'(\vartheta)=-\frac{2\tanh2\vartheta}{(\log \Zcosh)^{2}}
  +\frac{2\sinh2\vartheta}{(\Zcosh-1)^{2}}
  -\frac{2\vartheta-\sinh2\vartheta}{4\vartheta^{2}\cosh^{2}\vartheta} .
\]
Each of the three terms is evaluated by the interval arithmetic above on the
cell $\langle\mathsf{lo},\mathsf{hi}\rangle$, with $\cosh$ and $\sinh$ obtained
from the exponential enclosures and $\log\Zcosh$ replaced by the cell-wide
bracket $\langle\mathsf{lp},\mathsf{lq}\rangle$. The sum of the three results is
an interval containing $\Fcore'(\vartheta)$ for every
$\vartheta\in[\mathsf{lo},\mathsf{hi}]$.

The value of $\Fcore$ is enclosed at the midpoint only. There, the first two
terms of $\Fcore$ are combined into one fraction,
\begin{align*}
  \Fcore(\vartheta) &= \frac1{\log \Zcosh}
  -\frac1{\Zcosh-1} 
  -\frac{\tanh\vartheta}{2\vartheta} \\
  &= \frac{(\Zcosh-1)-\log \Zcosh}{\log \Zcosh\,(\Zcosh-1)}
  -\frac{\tanh\vartheta}{2\vartheta} .
\end{align*}
which is evaluated in the same way, with $\log\Zcosh$ replaced by the midpoint
bracket $\langle\mathsf{cp},\mathsf{cq}\rangle$; the difference of the two terms
is an interval containing $\Fcore(\frac{\mathsf{hi}+\mathsf{lo}}{2})$.

In addition to the margin test of \cref{apx:cert:cells}, the check of each cell
verifies the hypotheses of the soundness lemmas: every enclosure of a
denominator lies strictly to the right of $0$, and every argument passed to
$\exp$ satisfies $|x/2^{k}|\le1$.

\subsection{The sweep cells}\label{apx:cert:cells}
Each cell is the record
$\langle \mathsf{lo},\mathsf{hi},\mathsf{lp},\mathsf{lq},\mathsf{cp},\mathsf{cq}\rangle$
of \cref{apx:cert:iv}, and the test it must pass is
$L\cdot(\mathsf{hi}-\mathsf{lo})<\underline{\Fcore(\frac{\mathsf{hi}+\mathsf{lo}}{2})}$, with
$L=\max(|\underline{\Fcore'}|,|\overline{\Fcore'}|)$ over the cell, all in exact
rational arithmetic with $k=\SweepHalvings$, $n=\SweepTerms$ and $m=\SweepGrid$. The
$\NCells$ cells cover $[\SweepLo,\SweepHi]$ with widths between
$\CellWidthMinGrid/\CellWidthGridDen$ and $\CellWidthMaxGrid/\CellWidthGridDen$.
The generating script chooses each width as the largest multiple of
$1/\CellWidthGridDen$ for which the test succeeds, except that the last cell
ends at $\SweepHi$. The first, a middle and the last cell are
\[
  \begin{aligned}
    &\bigl\langle\tfrac9{20},\ \tfrac{113}{250},\ \tfrac{3592609}{10^7},\
      \tfrac{363267}{10^6},\ \tfrac{3612609}{10^7},\ \tfrac{361267}{10^6}\bigr\rangle,\\
    &\bigl\langle\tfrac{633}{1000},\ \tfrac{321}{500},\ \tfrac{648033}{10^6},\
      \tfrac{6660391}{10^7},\ \tfrac{657033}{10^6},\
      \tfrac{6570391}{10^7}\bigr\rangle,\\
    &\bigl\langle\tfrac{1457}{500},\ 3,\ \tfrac{51348571}{10^7},\
      \tfrac{6633579}{125\cdot10^4},\ \tfrac{52208571}{10^7},\
      \tfrac{6526079}{125\cdot10^4}\bigr\rangle .
  \end{aligned}
\]
The full list and the script that generates it are part of the
development~\cite{Pichler2026BSCAveraging}.

\subsection{The Pólya certificate}\label{apx:cert:polya}
There are $\CertMonomialsTotal$ monomials of bidegree $\CertBideg$ in the seven
gap coordinates. The expansion of $\Qcert$ in this basis is formed neither
here nor in the formalisation. The development represents $\Qcert$ by the expression tree of the product form
\eqref{eq:kerQ}, which is shown to agree with \eqref{eq:kerQ} by definitional
unfolding. The following four facts are checked by the Lean kernel using exact
integer arithmetic.
\begin{enumerate}
\item The tree is bihomogeneous of bidegree $\CertBideg$.
\item The sum $L$ of the absolute values of the coefficients, bounded
  structurally along the tree (sums and products of the corresponding bounds
  for the subtrees), satisfies $L<2^{63}$.
\item The value $N$ of the tree at the Kronecker point
  \[
    (\tau_1,\tau_2,\tau_3,\tau_4,\zeta_1,\zeta_2,\zeta_3)
    =(B,\,B^{13},\,B^{169},\,1,\,B^{2197},\,B^{28561},\,1),\qquad B=\KronBase,
  \]
  is non-negative.
\item In the base-$B$ expansion of $N$, each of the lowest $\KronSlots=13^5$
  digits is smaller than $2^{63}$; this is a single bitwise conjunction of $N$
  with a fixed mask.
\end{enumerate}
These facts imply that all coefficients are non-negative. By (i), a monomial
$\tau^{m}\zeta^{k}$ of $\Qcert$ is determined by the exponents
$(m_1,m_2,m_3,k_1,k_2)$, each of which lies in $\{0,\dots,12\}$, and at the
Kronecker point it evaluates to $B^{j}$ with
$j=m_1+13m_2+169m_3+2197k_1+28561k_2<13^5$. Distinct monomials thus occupy
distinct base-$B$ positions, and $N=\sum_j q_jB^j$, where $q_j$ is the
coefficient of the monomial in position $j$ (or zero). By (ii),
$\lvert q_j\rvert<2^{63}=B/2$. A representation $N=\sum_jq_jB^j$ with
$\lvert q_j\rvert<B/2$ is unique, and if $N\ge0$ and all base-$B$ digits of $N$
are smaller than $B/2$, it coincides with the ordinary base-$B$ expansion of $N$.
By (iii) and (iv), every $q_j$ is therefore a digit of $N$ and hence
non-negative. Checks (iii) and (iv) consist of a few dozen operations on
integers with about $\KronSlots\cdot64$ bits, which the Lean kernel performs
with GMP arithmetic.

The identity \eqref{eq:kerQ}, i.e.\ the relation between $\Qcert$ and the
kernel sum of \cref{lem:kernel}, can be checked independently with any computer
algebra system.

\section{Lean names}\label{apx:lean}

The following table lists, for the numbered statements of this paper, the
corresponding declarations of the development~\cite{Pichler2026BSCAveraging},
together with the main auxiliary facts
used in their proofs. Entries referring to a section collect definitions and
auxiliary facts of that section. The declarations
\lean{averaged\_bsc\_maximise\_mutual\_information},
\lean{DSIB.conjecture1\_p0} and \lean{DSIB.conjecture2\_p0} are the
statements displayed in \cref{sec:remarks}.

\begin{longtable}{@{}l>{\raggedright\arraybackslash}p{0.68\textwidth}@{}}
  \toprule
  Statement & Name in the development \\
  \midrule
  \endfirsthead
  \toprule
  Statement & Name in the development \\
  \midrule
  \endhead
  \cref{sec:setting} & \lean{dsbs}, \lean{Chan}, \lean{regionA}, \lean{regionB}, \lean{fe}, \lean{fFun}, \lean{fo}, \lean{biasOf}, \lean{pi\_bias\_sum\_zero}, \lean{rho\_bias\_sum\_zero}, \lean{one\_sub\_two\_mul\_bconv}, \lean{fe\_one\_sub\_two\_mul} \\
  \cref{prop:lr} & \lean{mutualInfo\_eq\_sum\_fe\_bias\_closed}, \lean{jointUV\_eq\_kernel}, \lean{mutualInfo\_jointUV\_eq\_kernel\_sum} \\
  \cref{rem:degenerate} & \lean{mutualInfo\_jointUX\_eq\_zero\_of\_deg} \\
  \cref{lem:bsc-bias} & \lean{bsc\_of\_balanced} \\
  \cref{sec:bsc} & \lean{regionA\_one\_sub}, \lean{regionB\_one\_sub} \\
  \cref{thm:main} & \lean{averagedBSCConjecture\_all}, \lean{regionB\_subset\_regionA}, \lean{averagedBSCConjecture\_iff}, \lean{averagedBSCConjecture\_half}, \lean{regionA\_half\_eq\_regionB\_half}, \lean{averagedBSCConjecture\_zero}, \lean{averaged\_bsc\_maximise\_mutual\_information} \\
  \cref{eq:lagr} & \lean{lagrTwoPoint} \\
  \cref{prop:bridge} & \lean{lagrangian\_eq\_lagrTwoPoint\_closed}, \lean{pi\_false\_eq}, \lean{pi\_true\_eq} \\
  \cref{eq:g} & \lean{exists\_regionB\_point} \\
  \cref{thm:twopoint} & \lean{lagrTwoPoint\_le\_of\_corner\_bounds\_closed} \\
  \cref{sub:split} & \lean{gSum}, \lean{gMax4} \\
  \cref{eq:split} & \lean{lagrTwoPoint\_eq\_gSum\_add\_Omega}, \lean{gSum\_le\_of\_le} \\
  \cref{lem:reduce} & \lean{lagrTwoPoint\_le\_of\_omega\_nonpos} \\
  \cref{lem:oddgain} & \lean{omegaTwoPoint\_eq}, \lean{omegaTwoPoint\_eq\_phiOdd}, \lean{omegaTwoPoint\_zero\_of\_fst\_eq}, \lean{omegaTwoPoint\_zero\_of\_snd\_eq}, \lean{omegaTwoPoint\_zero\_fst\_atom} \\
  \cref{lem:artanh} & \lean{self\_lt\_artanh}, \lean{artanh\_mul\_lt}, \lean{artanh\_lt\_div} \\
  \cref{prop:same} & \lean{lagrTwoPoint\_lt\_of\_same\_skew}, \lean{wFun\_mixed\_diff\_pos}, \lean{artanh\_div\_strictMonoOn}, \lean{omegaTwoPoint\_neg} \\
  \cref{thm:S4} & \lean{lagrTwoPoint\_le\_of\_opposite\_skew}, \lean{omegaTwoPoint\_le\_spread}, \lean{gSym\_mixed\_diff\_eq}, \lean{fe\_mul\_supermodular} \\
  \cref{lem:step1} & \lean{HStep\_mixed\_diff\_neg}, \lean{HStep\_slice\_strictMonoOn}, \lean{fo\_sub\_mul\_deriv}, \lean{hasDerivAt\_HStep\_slice}, \lean{lt\_one\_add\_sq\_mul\_artanh}, \lean{KStepFun\_strictMonoOn} \\
  \cref{lem:step2} & \lean{spread\_ge\_min\_weight\_mul} \\
  \cref{lem:step3} & \lean{lam\_kap\_mul\_le\_min\_weight} \\
  \cref{sub:assemble} & \lean{lagrTwoPoint\_swap} \\
  \cref{sec:red} & \lean{isClosed\_convexHull\_regionB}, \lean{exists\_regionB\_dominating\_all}, \lean{regionA\_subset\_convexHull\_regionB}, \lean{isCompact\_convexHull} \\
  \cref{sec:dsib} & \lean{zChan}, \lean{sChan}, \lean{phiZ}, \lean{zsRate}, \lean{zsValue}, \lean{mzsValue}, \lean{zChan\_rate}, \lean{sChan\_rate}, \lean{zChan\_sChan\_value}, \lean{zChan\_zChan\_value} \\
  \cref{lem:mono} & \lean{zsRate\_strictMonoOn}, \lean{hasDerivAt\_zsRate}, \lean{zsValue\_strictMonoOn\_snd}, \lean{hasDerivAt\_zsValue\_snd}, \lean{mzsValue\_strictMonoOn\_snd'}, \lean{mzs\_deriv\_num\_pos}, \lean{mzsValue\_at\_one} \\
  \cref{thm:c1} & \lean{conjecture1\_p0\_holds}, \lean{DSIB.conjecture1\_p0} \\
  \cref{thm:c2} & \lean{conjecture2\_p0\_holds}, \lean{DSIB.conjecture2\_p0} \\
  \cref{sub:scheme} & \lean{maxExistsC}, \lean{minExistsC} \\
  \cref{sec:AB} & \lean{continuous\_rateUOf}, \lean{continuous\_valOf} \\
  \cref{lem:kkt} & \lean{kkt\_of\_directional}, \lean{stationary\_of\_max}, \lean{stationary\_of\_min}, \lean{OptPairC}, \lean{Dfst\_fe\_pos} \\
  \cref{prop:star} & \lean{residual\_pos}, \lean{green\_identity} \\
  \cref{thm:B} & \lean{interiorIsBSC\_of\_noCorner} \\
  \cref{lem:hess} & \lean{Fpp\_eq}, \lean{Fpq\_eq}, \lean{Fqq\_eq}, \lean{value\_four\_term\_sum} \\
  \cref{eq:saddle} & \lean{hessian\_indefinite} \\
  \cref{prop:saddle} & \lean{symmetric\_not\_max}, \lean{symmetric\_not\_min}, \lean{total\_second\_order\_eq}, \lean{rate\_corrected\_coeff}, \lean{exists\_good\_corrections}, \lean{exists\_bad\_corrections} \\
  \cref{thm:iii} & \lean{saddle\_iii}, \lean{gFun\_eq\_integral}, \lean{one\_sub\_gFun\_eq\_integral}, \lean{Aval\_eq\_integral}, \lean{Nfun\_nonneg} \\
  \cref{lem:Rprime} & \lean{Rat}, \lean{hasDerivAt\_Rat}, \lean{Rat\_antitoneOn}, \lean{hasDerivAt\_gFun}, \lean{deriv\_gFun} \\
  \cref{lem:sym} & \lean{sym\_pointwise}, \lean{integral3\_sum}, \lean{cIf\_closed}, \lean{aIf}, \lean{bIf}, \lean{cIf} \\
  \cref{lem:kernel} & \lean{kernelSum\_nonneg}, \lean{kerQ\_nonneg\_reflect}, \lean{PE.eval\_nonneg\_of\_kron}, \lean{PE.coeff\_nonneg\_of\_kron}, \lean{digits\_nonneg}, \lean{digit\_lt\_of\_land}, \lean{kerQPE\_bideg}, \lean{kerQPE\_l1b}, \lean{kerQPE\_kron\_nonneg}, \lean{kerQPE\_kron\_mask} \\
  \cref{lem:core} & \lean{core\_pos}, \lean{F}, \lean{F'}, \lean{hasDerivAt\_F}, \lean{F\_pos\_of\_exp\_bound}, \lean{core\_pos\_regime1}, \lean{sinh\_taylor}, \lean{cosh\_taylor}, \lean{alg\_ineq}, \lean{Qpoly}, \lean{Qpoly\_pos}, \lean{core\_pos\_regime2}, \lean{F\_pos\_of\_center}, \lean{cellOkC}, \lean{pos\_of\_cellOkC}, \lean{sweepC\_sound}, \lean{core\_pos\_regime3} \\
  \cref{lem:cert} & \lean{bestResponse\_le\_of\_certificate}, \lean{bestResponse\_ge\_of\_certificate} \\
  \cref{prop:certs} & \lean{DFun\_nonneg}, \lean{MDFun\_nonpos}, \lean{DFun\_at\_a}, \lean{MDFun\_at\_neg\_a}, \lean{GFun\_nonneg\_of\_contacts}, \lean{GFun\_nonpos\_of\_contacts}, \lean{lam2\_pos}, \lean{lam2\_lt\_one}, \lean{N\_at\_a\_nonneg}, \lean{key\_log\_sum}, \lean{mlam2\_pos}, \lean{mlam2\_lt\_one}, \lean{key\_mirror} \\
  \cref{lem:tight} & \lean{certificate\_tight}, \lean{certificate\_tight\_mirror} \\
  \cref{thm:D} & \lean{cornerBound}, \lean{cornerBoundMin}, \lean{corner\_phiZ}, \lean{le\_of\_zsRate\_le}, \lean{zsRate\_lt\_log\_two}, \lean{zsValue\_mono\_snd}, \lean{mzsValue\_mono\_snd} \\
  \cref{sec:c-assembly} & \lean{cornerBoundU}, \lean{cornerBoundMinU}, \lean{zsValue\_symm}, \lean{mzsValue\_symm}, \lean{zsValue\_pos}, \lean{zsRate\_pos} \\
  \bottomrule
\end{longtable}
\addtocounter{table}{-1}%

\end{document}